\DeclareMathOperator*{\argmin}{arg\,min}
\theoremstyle{plain}
\newtheorem{theorem}{Theorem}[section]
\newtheorem{lemma}[theorem]{Lemma}
\newtheorem{corollary}[theorem]{Corollary}
\theoremstyle{definition}
\newtheorem{definition}[theorem]{Definition}
\theoremstyle{remark}
\newtheorem{remark}[theorem]{Remark}
\numberwithin{equation}{section}
\newcommand{\ee}{{\rm e}\hspace{1pt}}
\newcommand{\dd}{\hspace{1pt}{\rm d}\hspace{0.5pt}}
\newcommand{\abs}[1]{\left| #1 \right|}
\newcommand{\wt}{\widetilde}
\newcommand{\wh}{\widehat}
\DeclarePairedDelimiter{\ceil}{\lceil}{\rceil}
\title{Privacy Profiles for Private Selection}
\author{Antti Koskela$^1$, Rachel Redberg$^2$, Yu-Xiang Wang$^2$ \vspace{5mm} \\
$^1$ Nokia Bell Labs \\
$^2$ UC Santa Barbara }
    \date{}
\begin{document}
	

\maketitle

\begin{abstract}

Private selection mechanisms (e.g., Report Noisy Max, Sparse Vector) are fundamental primitives of  differentially private (DP) data analysis with wide applications to private query release, voting, and hyperparameter tuning. Recent work \citep{liu2019private,papernot2022hyperparameter} has made significant progress in both generalizing private selection mechanisms and tightening their privacy analysis using modern numerical privacy accounting tools, e.g., R\'enyi differential privacy (RDP).  But R\'enyi DP is known to be lossy when $(\epsilon,\delta)$-DP is ultimately needed, and there is a trend to close the gap by directly handling privacy profiles, i.e., $\delta$ as a function of $\epsilon$ or its equivalent dual form known as $f$-DPs. In this paper, we work out an easy-to-use recipe that bounds the privacy profiles of  ReportNoisyMax and PrivateTuning using the privacy profiles of the base algorithms they corral.  Numerically, our approach improves over the RDP-based accounting in all regimes of interest and leads to substantial benefits in end-to-end private learning experiments. Our analysis also suggests new distributions, e.g., binomial distribution for randomizing the number of rounds that leads to more substantial improvements in certain regimes.

\end{abstract}

\section{Introduction} \label{sec:introduction}

Differential privacy (DP) bounds the privacy loss incurred when an algorithm is run on a dataset. While the analysis of this bound is often quite nuanced, the rough tally is that whenever the algorithm accesses the data, it incurs a privacy cost.

By this rule of thumb, data privacy for modern machine learning (ML) applications is in trouble. Most modern ML algorithms are notoriously finicky and require extensive hyperparameter tuning in order to achieve good performance. In the context of data privacy, this means that the evaluation of every additional hyperparameter candidate could cost something.

Using composition theorems, running an $(\epsilon, \delta)$-DP algorithm $k$ times incurs a cost of $(k \epsilon, k \delta)$-DP or $(\tilde{O}(\sqrt{k}\epsilon),O(\delta))$-DP. 
This naïve analysis assumes a meta-algorithm that outputs a $k$-tuple of results over all $k$ runs of the $(\epsilon, \delta)$-DP base algorithm. But for applications such as hyperparameter tuning, the meta-algorithm need only output the ``best’’ output over all $k$ runs. Thus, a tighter analysis may often be available for this class of \emph{private selection} mechanisms which choose from a set of candidates the item which approximately maximizes a given quality score.

The aim of our work is to improve the privacy analysis of private hyperparameter tuning algorithms for machine learning algorithms. Previous studies ~\citep{chaudhuri2013stability} have already addressed the DP cost of hyperparameter tuning. Improving upon this, \citet{liu2019private} proposed black-box tuning methods for which they showed that private selection from $\epsilon$-DP base mechanisms can be $3\epsilon$-DP, and also gave approximate $(\epsilon,\delta)$-DP results.~\citet{mohapatra2022role} gave adaptive tuning methods and showed that for a reasonable number of privately chosen candidates, naïve accounting via Rényi differential privacy (RDP) often yields tighter DP bounds than private selection using the methods by~\citet{liu2019private}. We base our approach on the results of~\citet{papernot2022hyperparameter} which build on \citet{liu2019private}'s work.~\citet{papernot2022hyperparameter} provided logarithmically scaling RDP bounds for black-box tuning algorithms in relation to the number of model evaluations. 

What is missing from the private selection and private hyperparameter tuning line of work, are bounds that would directly use the privacy profiles of the base mechanisms as building blocks and would also be capable of taking advantage of numerical accounting methods ~\citep[e.g., the recent work of][]{koskela2021tight,zhu2021optimal,gopi2021}.
In this work we address this shortfall.
Our proposed bounds utilize only point-wise information about the $(\epsilon,\delta)$-profile of the base mechanism, similarly to bounds given by~\citet{liu2019private}. 

As one application of our results we consider Differentially Private Stochastic Gradient Descent (DP-SGD)~\citep{Abadi2016}, a popular technique for training machine learning models with DP. DP-SGD introduces extra hyperparameters such as the noise level $\sigma$ and clipping constant $C$; factors like the subsampling ratio $q$ and the training duration also affect both the privacy and accuracy of the methods. As demonstrated by~\citet{papernot2022hyperparameter}, tweaking DP-SGD's hyperparameters often relies on using sensitive data which require privacy protection. The risk of data leakage from hyperparameters is arguably smaller than from model parameters, but developing methods with low privacy costs has proven challenging. The leading algorithms proposed by~\citet{papernot2022hyperparameter} still incur a significant privacy cost overhead. Our aim is to further lower this overhead.

As another application of our analysis we consider Generalized Propose-Test-Release~\citep{redberg2023generalized}, which broadens the reach of the Propose-Test-Release (PTR) framework by allowing it to handle queries with unbounded sensitivity, making it applicable to problems such as linear regression.
The method proposed by~\citet{redberg2023generalized} gives point-wise $(\epsilon,\delta)$-privacy guarantees for generalized PTR. Thus, in order to account for the privacy cost of tuning the hyperparameters of the underlying queries, we can directly use our analysis for which point-wise  $(\epsilon,\delta)$-guarantees are sufficient.
We show that our bounds are considerably tighter than those of~\citet{liu2019private}, the only previous applicable results. We also empirically illustrate that compared to well-established non-adaptive methods, our bounds considerably improve the privacy-utility trade-off for linear regression problems.

\section{Preliminaries} \label{sec:preliminaries}


We first give the basic definitions.
An input dataset containing $n$ data points is denoted as $X = \{x_1,\ldots,x_n\}$. 
Denote the set of all possible datasets by $\mathcal{X}$.
We say $X$ and $X'$ are neighbors if we get one by adding or removing one data element to or from the other, or by replacing one data element in the other (denoted $X \sim X'$). 
Consider a randomized mechanism $\mathcal{M} \, : \, \mathcal{X} \rightarrow \mathcal{O}$, where  $\mathcal{O}$ denotes the output space.
The $(\epsilon,\delta)$-definition of DP can be given as follows~\citep{dwork06differential}.
\begin{definition} \label{def:indistinguishability}
	Let $\epsilon > 0$ and $\delta \in [0,1]$.
	We say that a mechanism $\mathcal{M}$ is $(\epsilon,\delta)$-DP, if for all neighboring datasets $X$ and $X'$ and
	 for every measurable set $E \subset \mathcal{O}$ we have:
	\begin{equation*}
		\begin{aligned}
			\mathrm{Pr}( \mathcal{M}(X) \in E ) \leq \ee^\epsilon \mathrm{Pr} (\mathcal{M}(X') \in E ) + \delta.
		\end{aligned}
	\end{equation*}
\end{definition}
We state many of our results for general $f$-divergences.
For a convex function $f  :  [0,\infty) \rightarrow \mathbb{R}$, we define the $f$-divergence between distributions $P$ and $Q$ taking values in $\mathcal{Y}$ as
$$
H_f (P || Q) = \int_{\mathcal{Y}}  f \left( \frac{ P(y)   }{ Q(y) }\right) Q(y) \, \dd y.
$$
Notice that we do not require the normalization $f(1)=0$ often used in the so-called Czsis\'ar divergences~\citep{liese2006divergences}.
Especially, our aim is to find tight bounds for the hockey stick divergence, i.e., when
$f(z) = [z - \ee^{\epsilon}]_+$
for some $\epsilon \in \mathbb{R}$. This is due to the fact that
tight $(\epsilon,\delta)$-bounds can be obtained using the hockey-stick-divergence:
\begin{lemma}[{\citealt[Theorem~1]{balle2018subsampling}}] 
A mechanism $\mathcal{M}$ satisfies $(\epsilon,\delta)$-DP \emph{if and only if},  
$$
\max_{X \sim X'} H_{f}(\mathcal{M}(X)||\mathcal{M}(X'))\leq \delta
$$
for $f(z) = [z - \ee^{\epsilon}]_+$.
\end{lemma}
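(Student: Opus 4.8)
The plan is to reduce both directions of the equivalence to a single variational identity: for a fixed pair of neighbors $X \sim X'$, writing $P = \mathcal{M}(X)$ and $Q = \mathcal{M}(X')$ for the induced output distributions, the worst-case set in the $(\epsilon,\delta)$-inequality is attained, and its value is exactly the hockey-stick divergence. Concretely, I would first substitute $f(z) = [z-\ee^{\epsilon}]_+$ into the definition of $H_f$ and absorb the density $Q(y)$ into the positive part to rewrite
$$
H_f(P || Q) = \int_{\mathcal{O}} \left[\frac{P(y)}{Q(y)} - \ee^{\epsilon}\right]_+ Q(y)\,\dd y = \int_{\mathcal{O}} [\, P(y) - \ee^{\epsilon} Q(y)\,]_+ \,\dd y,
$$
working throughout with densities taken with respect to a common dominating measure (e.g. $P+Q$) so the manipulation remains valid even when $\mathcal{M}$ admits no Lebesgue density.

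Next I would establish the key identity
$$
\sup_{E \subseteq \mathcal{O}} \big( P(E) - \ee^{\epsilon} Q(E) \big) = H_f(P || Q),
$$
where the supremum ranges over measurable sets. Since $P(E) - \ee^{\epsilon} Q(E) = \int_E [\, P(y) - \ee^{\epsilon} Q(y)\,]\,\dd y$ is the integral of a fixed integrand over $E$, it is maximized by including precisely those $y$ at which the integrand is positive, i.e. by the set $E^{\star} = \{ y : P(y) > \ee^{\epsilon} Q(y) \}$. Evaluating at $E^{\star}$ returns exactly $\int [\, P(y) - \ee^{\epsilon} Q(y)\,]_+ \,\dd y = H_f(P || Q)$, and no other set can exceed this value.

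With the identity in hand, both directions are immediate. For the ``only if'' direction, $(\epsilon,\delta)$-DP gives $P(E) - \ee^{\epsilon} Q(E) \le \delta$ for every measurable $E$; taking the supremum over $E$ yields $H_f(P || Q) \le \delta$, and maximizing over neighboring pairs gives $\max_{X \sim X'} H_f(\mathcal{M}(X) || \mathcal{M}(X')) \le \delta$. For the ``if'' direction, if the displayed divergence bound holds, then for every neighboring pair and every measurable $E$ we have $P(E) - \ee^{\epsilon} Q(E) \le H_f(P || Q) \le \delta$, which is exactly the inequality in Definition~\ref{def:indistinguishability}.

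I expect the only real subtlety---rather than a genuine obstacle---to be the measure-theoretic bookkeeping: justifying the density rewriting and the optimal-set argument when $P$ and $Q$ are not mutually absolutely continuous, so that $P(y)/Q(y)$ may be infinite on a $Q$-null set while $[\,P(y)-\ee^{\epsilon}Q(y)\,]_+ = P(y)$ there. Fixing a common dominating measure and defining the ratio via the Radon--Nikodym derivative resolves this cleanly, and the integral form $\int [\,P(y) - \ee^{\epsilon} Q(y)\,]_+\,\dd y$ is the unambiguous expression to work with; after this, the identity $\sup_E (P(E) - \ee^{\epsilon} Q(E)) = H_f(P || Q)$, and hence the full equivalence, follow without further difficulty.
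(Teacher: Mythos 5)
Your proof is correct. The paper does not actually prove this lemma itself---it is imported verbatim as Theorem~1 of \citet{balle2018subsampling}---and your argument (rewriting $H_f(P\|Q)$ as $\int [\,P(y)-\ee^{\epsilon}Q(y)\,]_+ \,\dd y$, identifying this with $\sup_{E}\big(P(E)-\ee^{\epsilon}Q(E)\big)$ via the optimal set $E^{\star}=\{y: P(y)>\ee^{\epsilon}Q(y)\}$, and then reading off both directions of the equivalence from Definition~\ref{def:indistinguishability}) is exactly the standard proof used in that reference, with the dominating-measure point handled correctly.
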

We denote the hockey stick divergence determined by $\epsilon \in \mathbb{R}$ by $H_{\ee^{\epsilon}}$ throughout the paper, and will refer to
\begin{equation*} 
\delta_{\mathcal{M}}(\epsilon) := \max_{X \sim X'} H_{e^\epsilon}(\mathcal{M}(X)||\mathcal{M}(X'))
\end{equation*}
as the \emph{privacy profile} of mechanism $\mathcal{M}$. 

We will also use the R\'enyi differential privacy (RDP)~\citep{mironov2017} which is defined as follows. R\'enyi divergence of order $\alpha > 1$ between two distributions $P$ and $Q$ is defined as 
\begin{equation} \label{eq:rdp}
D_\alpha(P || Q) = \frac{1}{\alpha - 1} \log \int \left( \frac{P(t)}{Q(t)} \right)^\alpha Q(t) \, \dd t
\end{equation}
and we say that a mechanism $\mathcal{M}$ is $(\alpha,\epsilon)$-RDP, if for all neighboring datasets $X$ and $X'$, the output distributions of
$\mathcal{M}(X)$ and $\mathcal{M}(X')$ have R\'enyi divergence of order $\alpha>1$ at most $\epsilon$, i.e., if 
$$ 
\max_{X \sim X'} D_\alpha\big( \mathcal{M}(X) || \mathcal{M}(X') \big)  \leq \epsilon.
$$
To convert from RDP to $(\epsilon,\delta)$-DP, we use the formula given in Appendix~\eqref{eq:conversion_canonne}.

Notice that the R\'enyi divergence of order $\alpha>1$ is a scaled logarithm of an $f$-divergence determined by $f(z) = z^\alpha$.
Existing RDP analyses for RNM and private selection use the fact $f(u,v) = (\tfrac{u}{v})^\alpha v$ is a jointly convex function for $u$ and $v$. We formulate many of our results for general $f$-divergences, and the following technical result will then be central~\citep[see e.g., Ch.\;3,][]{boyd2004convex}.
\begin{lemma} \label{lem:f_convex}
If $f \, : \, \mathbb{R}_+ \rightarrow \mathbb{R}_+ $ is a convex function, then the function 
$f\left( \frac{x}{y} \right) y$ is jointly convex for $x \geq 0$ and $y>0$.
\end{lemma}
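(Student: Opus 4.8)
The plan is to recognize $g(x,y) := f\!\left(x/y\right) y$ as the \emph{perspective} of the convex function $f$ and to verify joint convexity directly from the definition, which has the advantage of requiring no differentiability assumption on $f$. Concretely, I would fix two points $(x_1,y_1)$ and $(x_2,y_2)$ with $x_i \geq 0$, $y_i > 0$, a weight $\lambda \in [0,1]$, set $\bar y := \lambda y_1 + (1-\lambda) y_2 > 0$, and aim to establish
\[
g\bigl(\lambda x_1 + (1-\lambda) x_2,\ \bar y\bigr) \leq \lambda\, g(x_1,y_1) + (1-\lambda)\, g(x_2,y_2).
\]

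The key step is a reweighting trick that turns the convex combination in the $(x,y)$-plane into a convex combination of the \emph{ratios} $x_1/y_1$ and $x_2/y_2$, to which the one-dimensional convexity of $f$ can be applied. I would define $\mu := \lambda y_1 / \bar y$, so that $1-\mu = (1-\lambda) y_2 / \bar y$; since $y_1,y_2>0$ these weights are nonnegative and sum to one. A short algebraic identity then gives
\[
\frac{\lambda x_1 + (1-\lambda) x_2}{\bar y} = \mu\,\frac{x_1}{y_1} + (1-\mu)\,\frac{x_2}{y_2},
\]
so the argument of $f$ on the left-hand side is itself a convex combination of the two ratios. Applying convexity of $f$ to this combination and then multiplying through by $\bar y > 0$ collapses the weights, using $\bar y\, \mu = \lambda y_1$ and $\bar y\,(1-\mu) = (1-\lambda) y_2$, which reproduces exactly the right-hand side $\lambda\, f(x_1/y_1)\,y_1 + (1-\lambda)\,f(x_2/y_2)\,y_2$ and hence the desired inequality.

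There is no serious analytic obstacle here; the only thing to watch is the bookkeeping that makes the reweighting valid. Specifically, I must check that $\mu \in [0,1]$ (guaranteed by $y_i > 0$) and that $\bar y > 0$, so that dividing by it is legitimate and multiplying by it preserves the direction of the inequality. The restriction $y > 0$ in the statement is precisely what is needed for the perspective to be well defined and for this argument to go through; the boundary case $y = 0$ would require a separate closure or limiting discussion, which the lemma sidesteps by assuming $y > 0$.
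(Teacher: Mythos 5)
Your proof is correct. Note, however, that the paper never proves this lemma itself: it is invoked as a known fact with a citation to Chapter 3 of \citet{boyd2004convex}, where $f\left(\tfrac{x}{y}\right) y$ is the \emph{perspective} of $f$ and joint convexity is established by an epigraph argument --- the epigraph of the perspective function is the inverse image of the epigraph of $f$ under the perspective mapping $(u,v,w) \mapsto (u/w, v/w)$, and inverse images of convex sets under this mapping are convex. Your argument is the other standard route: a direct verification of the convexity inequality, reweighting with $\mu = \lambda y_1/\bar y$ so that the argument of $f$ becomes a convex combination of the ratios $x_i/y_i$, applying one-dimensional convexity of $f$, and multiplying back by $\bar y > 0$ so the weights collapse to $\lambda y_1$ and $(1-\lambda) y_2$. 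The bookkeeping you flag ($\mu \in [0,1]$, $\bar y > 0$) is exactly what is needed, and the degenerate cases $\lambda \in \{0,1\}$ are trivially covered. What your route buys is that it is fully self-contained and elementary: no epigraph machinery, no differentiability of $f$, and it makes explicit where the hypothesis $y>0$ enters. Since the lemma is used in this paper only as a tool feeding Jensen-type steps (in the proofs of Theorems~\ref{thm:max_sel} and~\ref{thm:main_thm}), your argument would serve as a valid drop-in replacement for the external citation.
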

For the analysis of private selection algorithms, the number of times $K$ that the base algorithm is evaluated is a random variable. To analyze different alternatives for choosing $K$, we will need the concept of probability generating functions.
\begin{definition}
    Let $K$ be a random variable taking values in $\mathbb{N} \cup \{0\}.$ The probability generating function (PGF) of $K$, $\varphi \, : \, [0,1] \rightarrow [0,1]$ is defined as 
    $$
    \varphi(z) = \sum\limits_{k=0}^\infty \mathbb{P}(K=k) \cdot z^k.
    $$
\end{definition}
Our main result Thm.~\ref{thm:main_thm} is stated for a general PGF $\varphi$ and we use it obtain method-specific bounds.
Throughout the paper, we will denote $m = \mathbb{E}[K]$.

\section{Report Noisy Max for Additive Noise Mechanisms} \label{sec:max_sel}

As a first application of the hockey-stick divergence-based analysis, we consider the Report Noisy Max (RNM) of 1-dimensional additive noise mechanisms. This will serve as a segue into the more involved analysis of private selection, where we also obtain bigger gains compared to the previous results. Our analysis here is based on the RDP analysis by~\citet{zhu2022adaptive}. We mention that recent applications of RNM include private in-context learning of LLMs~\citep{wu2024privacy,tang2024privacy}.

Let $X=\{x_1,x_2,\ldots,x_N\}$, where $x_i \in \mathcal{X}$ for all $i \in [N]$, be a data set and consider the mechanism
\begin{equation} \label{eq:M_argmax}
    \mathcal{M}(X) = \arg \max \{ \mathcal{M}_1(X), \ldots,\mathcal{M}_m(X) \},
\end{equation}
where for every $i \in [m]$,
$$
\mathcal{M}_i(X) = f(X) + Z_i,
$$
for some function $f_i \; : \; \mathcal{X} \rightarrow \mathbb{R}$ such that 
$$
\max_{X \sim X'} \abs{f_i(X) - f_i(X')} \leq 1
$$
and where the noises $Z_i$, $i \in [m]$, are i.i.d.
We have the following existing RDP bound. 
\begin{theorem}[{\citealt[Theorem~8]{zhu2022adaptive}}] \label{lem:zhu_lemma5}
Let $\alpha>1$. The mechanism $ \mathcal{M}(X)$ is $(\alpha,\epsilon')$-RDP for
\begin{equation} \label{eq:zhuwang22RDP}
    \epsilon' = \epsilon + \frac{\log m}{\alpha - 1},
\end{equation}
where $\epsilon$ denotes the RDP guarantee of order $\alpha$ for an additive noise mechanism with noise $Z$ and sensitivity 2.
\end{theorem}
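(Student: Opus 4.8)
The plan is to bound the R\'enyi divergence of order $\alpha$ between the two output distributions over the index set $[m]$ directly, and to extract the additive $\tfrac{\log m}{\alpha-1}$ overhead by summing $m$ identical per-index bounds. Writing $P_i = \P(\mathcal{M}(X)=i)$ and $Q_i = \P(\mathcal{M}(X')=i)$, the goal reduces to showing $\sum_{i=1}^m P_i^\alpha Q_i^{1-\alpha} \le m \cdot \ee^{(\alpha-1)\epsilon}$, since $D_\alpha(\mathcal{M}(X)\|\mathcal{M}(X')) = \tfrac{1}{\alpha-1}\log\sum_i P_i^\alpha Q_i^{1-\alpha}$ and taking logs recovers $\epsilon + \tfrac{\log m}{\alpha-1}$. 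I would assume $Z$ has a density, so that ties in the $\arg\max$ occur with probability zero.

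First I would fix an index $i$ and condition on the other noises $Z_{-i} = (Z_j)_{j\ne i}$. Given $Z_{-i}$, the event $\{\mathcal{M}(X)=i\}$ is exactly $\{Z_i > \tau\}$ for the threshold $\tau = \max_{j\ne i}(f_j(X)+Z_j) - f_i(X)$, so the conditional selection probability equals the survival function $S(\tau) := \P(Z>\tau)$; similarly the conditional probability under $X'$ equals $S(\tau')$ with $\tau' = \max_{j\ne i}(f_j(X')+Z_j) - f_i(X')$. The key quantitative observation is that $|\tau - \tau'| \le 2$: the maximum is $1$-Lipschitz in its arguments and each $f_j$ moves by at most $1$, contributing $\le 1$, while the $f_i(X)$-versus-$f_i(X')$ term contributes another $\le 1$. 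This is precisely where the sensitivity-$2$ additive mechanism enters.

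Next I would bound the per-index contribution. Because $(x,y)\mapsto x^\alpha y^{1-\alpha}$ is jointly convex for $\alpha>1$ (Lemma~\ref{lem:f_convex} with $f(z)=z^\alpha$), Jensen's inequality applied to the average over $Z_{-i}$ gives $P_i^\alpha Q_i^{1-\alpha} = (\E_{Z_{-i}} S(\tau))^\alpha (\E_{Z_{-i}} S(\tau'))^{1-\alpha} \le \E_{Z_{-i}}\!\left[\,S(\tau)^\alpha S(\tau')^{1-\alpha}\,\right]$, and it then suffices to bound the integrand pointwise. Writing $s = \tau-\tau'$ with $|s|\le 2$, I would note that $S(\tau') = \P(Z>\tau-s) = \P(Z+s>\tau)$, so the two laws $\mathrm{Bern}(S(\tau))$ and $\mathrm{Bern}(S(\tau'))$ arise by applying the \emph{same} threshold map $w\mapsto \mathbf{1}[w>\tau]$ to the laws of $Z$ and $Z+s$ — precisely the output laws of a sensitivity-$2$ additive noise mechanism. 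The data-processing inequality for R\'enyi divergence then yields $S(\tau)^\alpha S(\tau')^{1-\alpha} \le \ee^{(\alpha-1)D_\alpha(\mathrm{Bern}(S(\tau))\|\mathrm{Bern}(S(\tau')))} \le \ee^{(\alpha-1)\epsilon}$, where $\epsilon$ is the order-$\alpha$ RDP of the sensitivity-$2$ additive mechanism. Combining with the Jensen bound gives $P_i^\alpha Q_i^{1-\alpha}\le \ee^{(\alpha-1)\epsilon}$ for each $i$, and summing over the $m$ indices closes the argument.

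I expect the main obstacle to be the pointwise threshold step rather than the bookkeeping: one must correctly recognize the data-dependent threshold comparison as a post-processing of an additive mechanism whose \emph{effective} sensitivity is $2$ (not $1$), and verify that the assumed RDP guarantee covers both shift directions $|s|\le 2$. A secondary technical point is the joint dependence of $\tau$ and $\tau'$ on the shared randomness $Z_{-i}$, which is exactly why the joint convexity of $(x,y)\mapsto x^\alpha y^{1-\alpha}$ — rather than a one-variable convexity argument — is needed to pass from the ratio of averages to the average over $Z_{-i}$ of the pointwise quantity.
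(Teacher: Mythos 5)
Your proof is correct. One note on scope: the paper does not prove this statement itself---it is quoted from \citet{zhu2022adaptive}---so the natural in-paper comparison is the proof of its hockey-stick analogue, Theorem~\ref{thm:max_sel}, which shares the same structure (per-index decomposition, joint convexity, reduction to a sensitivity-$2$ additive mechanism, and $m$ identical terms producing the overhead). Relative to that proof, you take a genuinely different route in two places. First, the conditioning direction is swapped: the paper writes each selection probability as an integral over the $i$-th noise $r_i$, placing the sensitivity-$2$ shift inside the density (the $p(r_i-2)$ in \eqref{eq:m_delta}), and applies Lemma~\ref{lem:f_convex} with Jensen over $r_i$; you condition on the complementary noises $Z_{-i}$, write $P_i,Q_i$ as expectations of the survival functions $S(\tau),S(\tau')$ with $|\tau-\tau'|\le 2$, and apply joint convexity and Jensen over $Z_{-i}$. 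Second, the reduction to the additive mechanism differs: the paper eliminates the conditional win-probabilities via the containment bound \eqref{eq:lipschitz} plus monotonicity of the hockey-stick function, whereas you recognize $\big(\mathbf{1}[Z>\tau],\mathbf{1}[Z+s>\tau]\big)$ as a common post-processing of the pair $\big(\mathcal{L}(Z),\mathcal{L}(Z+s)\big)$ and invoke the data-processing inequality for R\'enyi divergence, after dropping the nonnegative complementary Bernoulli term. What each buys: the paper's formulation treats any nonnegative, non-decreasing convex $f$ at once and so directly yields the privacy-profile bound $m\cdot\delta(\epsilon)$; yours, as written, is tied to $f(z)=z^\alpha$ (dropping the complementary term uses $f\ge 0$, and the Bernoulli/DPI detour is phrased for R\'enyi), but it isolates the sensitivity-$2$ reduction more transparently---your observation that joint convexity, not one-variable convexity, is what the shared randomness $Z_{-i}$ forces is exactly right---and it gives the monotone-case improvement to sensitivity $1$ for free, since monotonicity yields $|\tau-\tau'|\le 1$.
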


\begin{theorem} \label{thm:max_sel}
Let $X \sim X'$ and $\epsilon \in \mathbb{R}$. We have:
\begin{equation} \label{eq:max_sel_bound}
    H_{\ee^\epsilon} \big(\mathcal{M}(X) || \mathcal{M}(X') \big) \leq m \cdot \delta(\epsilon),
\end{equation}
where $\delta(\epsilon)$ is the privacy profile of the additive noise mechanism with sensitivity 2. If we assume monotonicity, i.e., if $f_i(X) \geq f_i(X')$ (or vice versa) for all $i \in [m]$, then the sensitivity of 2 can be replaced with a sensitivity of 1. 
\end{theorem}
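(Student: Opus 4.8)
The plan is to exploit the fact that the output of $\mathcal{M}$ is a discrete index in $[m]$, so that the hockey-stick divergence decomposes as a sum over the possible ``winners'':
$$
H_{\ee^\epsilon}\big(\mathcal{M}(X)\|\mathcal{M}(X')\big) = \sum_{i=1}^m \big[\,\mathbb{P}(\mathcal{M}(X)=i) - \ee^\epsilon\,\mathbb{P}(\mathcal{M}(X')=i)\,\big]_+ ,
$$
where I would assume the noise has a density so that ties occur with probability zero. It then suffices to bound each summand by $\delta(\epsilon)$; summing the $m$ terms produces the claimed factor $m$. Thus the whole problem reduces to the single-index statement $[\mathbb{P}(\mathcal{M}(X)=i)-\ee^\epsilon\mathbb{P}(\mathcal{M}(X')=i)]_+\le\delta(\epsilon)$.

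To prove the single-index bound I would reduce the $m$-dimensional $\arg\max$ event to a one-dimensional threshold event by conditioning on all noises except the $i$-th. Writing $Z_{-i}=(Z_j)_{j\ne i}$ and $a^X(Z_{-i}) = \max_{j\ne i}\big(f_j(X)+Z_j\big) - f_i(X)$, independence of $Z_i$ from $Z_{-i}$ gives $\mathbb{P}(\mathcal{M}(X)=i) = \E_{Z_{-i}}\big[G(a^X)\big]$, where $G(t)=\mathbb{P}(Z_i\ge t)$ is the survival function of the shared noise distribution. The same identity holds for $X'$ with the same law of $Z_{-i}$, so I couple the two worlds by feeding them identical noise realizations $Z_{-i}$ and then compare the integrands pointwise.

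The crux is a Lipschitz/sensitivity estimate for the induced thresholds. Using that $t\mapsto\max_j t_j$ is $1$-Lipschitz in the sup-norm together with $|f_j(X)-f_j(X')|\le 1$, I obtain pointwise $|a^X(Z_{-i})-a^{X'}(Z_{-i})|\le 2$, the two unit contributions coming separately from the $\max_{j\ne i}$ term and from the $f_i$ term. In the monotone case both contributions carry the same sign, so $a^X-a^{X'}\in[-1,1]$ and the bound tightens to $1$; this is precisely what permits replacing sensitivity $2$ by sensitivity $1$. For each fixed $Z_{-i}$ I would then recognize $[G(a^X)-\ee^\epsilon G(a^{X'})]_+$ as the hockey-stick divergence of a one-dimensional additive-noise mechanism: the two survival probabilities $\mathbb{P}(Z_i\ge a^X)$ and $\mathbb{P}(Z_i\ge a^{X'})$ are the mass the shared noise places on the common half-line $[0,\infty)$ after the shifts $-a^X$ and $-a^{X'}$, which differ by at most $2$. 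Hence this quantity is at most $\delta(\epsilon)$ for every realization of $Z_{-i}$.

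To finish, I would bound the difference of expectations using $y\le y_+$, giving $\mathbb{P}(\mathcal{M}(X)=i)-\ee^\epsilon\mathbb{P}(\mathcal{M}(X')=i) = \E_{Z_{-i}}[\,G(a^X)-\ee^\epsilon G(a^{X'})\,] \le \E_{Z_{-i}}[\,(G(a^X)-\ee^\epsilon G(a^{X'}))_+\,] \le \delta(\epsilon)$, and then take positive parts since $\delta(\epsilon)\ge 0$. Summing over $i$ closes the argument. I expect the main obstacle to be the pointwise threshold estimate: making the reduction to a single coordinate rigorous and tracking exactly where the two units of sensitivity enter — and why they partially cancel under monotonicity — is the delicate part, whereas the $f$-divergence bookkeeping is routine. (The same template extends to a general $f$-divergence: one would replace the elementary inequality $y\le y_+$ with the joint-convexity statement of Lemma~\ref{lem:f_convex} applied coordinatewise.)
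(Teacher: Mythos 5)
Your proof is correct, and it reaches the bound by a genuinely different route from the paper's. Both arguments decompose $H_{\ee^\epsilon}$ over the $m$ candidate winners and reduce each term to a one-dimensional shifted-noise divergence with shift $2$ (shift $1$ under monotonicity), but the reductions run in opposite directions. The paper integrates over the \emph{winner's} noise $r_i$ (after substituting $r_i \mapsto r_i - 2$), keeps the conditional win probability over the remaining noises inside the integrand, and must invoke Lemma~\ref{lem:f_convex} together with Jensen's inequality to push the convex $f$ through that integral; the decisive step is the event inclusion behind the ratio bound~\eqref{eq:lipschitz}, after which monotonicity and nonnegativity of the hockey-stick $f$ yield $m\cdot\delta(\epsilon)$. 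You instead condition on the \emph{losers'} noises $Z_{-i}$, couple the two worlds through a shared realization of $Z_{-i}$, and compare survival functions at thresholds $a^X, a^{X'}$ that differ by at most $2$ pointwise (at most $1$ under monotonicity --- your sign-alignment cancellation is exactly the reason the paper can replace $r_i-2$ by $r_i-1$ in that case). For the hockey-stick divergence your route needs no convexity argument at all: linearity of expectation, $y \le y_+$, and the sup-over-events characterization of $H_{\ee^\epsilon}$ suffice, and the sensitivity-$2$ shift appears transparently as a shift of a half-line event. The trade-off is generality: the paper's Jensen-based bookkeeping holds verbatim for every $f$-divergence in a single pass (a structure it reuses in the proofs of Thm.~\ref{thm:main_thm} and Thm.~\ref{lem:composition_rnm}), whereas your template covers general $f$ only through the coordinatewise application of Lemma~\ref{lem:f_convex} that you sketch at the end.
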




\begin{theorem}\label{lem:composition_rnm}
For an adaptive composition of $k$ mechanisms of the form~\eqref{eq:M_argmax}, we get the 
privacy profile upper bound $m^k \cdot \delta(\epsilon)$, where $\delta(\epsilon)$ is the privacy profile of an $m$-wise composition of an additive noise mechanism with noise $Z$ and sensitivity $2$. 
\end{theorem}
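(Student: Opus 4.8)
The plan is to lift the argument behind Theorem~\ref{thm:max_sel} from one round to the full $k$-fold adaptive composition by \emph{opening up} each Report Noisy Max rather than composing the per-round profiles as black boxes; pointwise privacy profiles do not compose multiplicatively, so the multiplicative form $m^k\cdot\delta(\epsilon)$ can only come from an argument carried out at the level of the underlying additive noise. Write the composed output as an index sequence $\mathbf{j}=(j^{(1)},\ldots,j^{(k)})\in[m]^k$ and let $P,Q$ be its laws under $X$ and $X'$. Expanding the hockey-stick divergence over outputs,
\begin{equation*}
H_{\ee^\epsilon}(P\|Q)=\sum_{\mathbf{j}\in[m]^k}\big[P(\mathbf{j})-\ee^\epsilon Q(\mathbf{j})\big]_+,
\end{equation*}
the target bound reduces to showing that each of the $m^k$ summands is at most $\delta(\epsilon)$, the privacy profile of the sensitivity-$2$ additive noise mechanism composed over all $k$ rounds (the case $k=1$ must recover Theorem~\ref{thm:max_sel}, which fixes the bookkeeping).

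First I would fix a sequence $\mathbf{j}$ and condition on every noise variable except the $k$ selected ones $Z^{(1)}_{j^{(1)}},\ldots,Z^{(k)}_{j^{(k)}}$. Since the released outputs are pinned to $\mathbf{j}$, the adaptively chosen functions $f^{(t)}_i$ are all determined, and the event $\{\text{output}=\mathbf{j}\}$ factorizes into one threshold event per round: the selected noise at round $t$ must satisfy $Z^{(t)}_{j^{(t)}}\geq\tau^{(t)}$, where $\tau^{(t)}=\max_{i\neq j^{(t)}}\big(f^{(t)}_i(X)-f^{(t)}_{j^{(t)}}(X)+Z^{(t)}_i\big)$ depends only on the conditioned noises. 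The crucial sensitivity bookkeeping, exactly as in Theorem~\ref{thm:max_sel}, is that each $\tau^{(t)}$ is a difference of two sensitivity-$1$ quantities and hence shifts by at most $2$ between $X$ and $X'$; this is the source of the sensitivity-$2$ base mechanism.

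Next I would use the joint convexity of the hockey-stick integrand (the special case $f(z)=[z-\ee^\epsilon]_+$ of Lemma~\ref{lem:f_convex}) to pull the positive part through the expectation over the conditioned noises. Writing $g$ for the survival function of the noise and using that the selected noises are independent across rounds,
\begin{equation*}
\big[P(\mathbf{j})-\ee^\epsilon Q(\mathbf{j})\big]_+\leq\E\Big[\big[\textstyle\prod_{t}g(\tau^{(t)}(X))-\ee^\epsilon\prod_{t}g(\tau^{(t)}(X'))\big]_+\Big].
\end{equation*}
For a fixed realization of the conditioned noises the inner bracket is the contribution of a single product-of-half-lines event to the hockey-stick divergence between the $k$-fold product of the noise laws and its shifted version; as every per-round shift is bounded by $2$, this is at most the profile $\delta(\epsilon)$ of the sensitivity-$2$ additive mechanism composed over the $k$ rounds. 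This bound is uniform in the conditioned noises, so the expectation is also at most $\delta(\epsilon)$, and summing over the $m^k$ sequences yields $H_{\ee^\epsilon}(P\|Q)\leq m^k\cdot\delta(\epsilon)$.

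I expect the middle step to be the main obstacle, on two counts. First, one must verify that conditioning on the non-selected noises genuinely decouples $\{\text{output}=\mathbf{j}\}$ into an \emph{independent} product across rounds despite the adaptive dependence of $f^{(t)}_i$ on the earlier outputs, so that the conditional probability really factors as $\prod_t g(\tau^{(t)})$. Second, one must pass from a fixed product threshold event whose per-round shifts have magnitude at most $2$ and arbitrary sign to the canonical sensitivity-$2$ composition profile; this requires a dominating-pair / monotonicity argument showing that such shifts are majorized by the worst case, so that the single number $\delta(\epsilon)$ uniformly upper-bounds every summand. The monotone case in Theorem~\ref{thm:max_sel} (replacing $2$ by $1$) should carry over verbatim once the factorization is established.
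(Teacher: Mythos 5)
Your proposal is correct, but it takes a genuinely different route from the paper's proof. The paper argues recursively, following the dominating-pair composition technique of Thm.~27 of \citet{zhu2021optimal}: for $k=2$ it writes the hockey-stick divergence of the composition as a nested integral, absorbs the first round's likelihood ratio $\log\tfrac{f_1(t)}{f_1'(t)}$ into a data-dependent shift of $\epsilon$, applies the single-round domination from Thm.~\ref{thm:max_sel} (valid for every real, hence every shifted, $\epsilon$) to replace the conditional second-round pair by $m$ copies of $\big(p(\cdot-2),p(\cdot)\big)$, then swaps the order of integration and repeats for the first round, landing on $m^2\int\int\big[p(s-2)p(t-2)-\ee^\epsilon p(s)p(t)\big]_+$, with general $k$ by induction. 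You instead work globally and non-recursively: fix the entire index sequence $\mathbf{j}$, couple the non-selected noises identically across $X$ and $X'$ (legitimate, since they are data-independent), factorize $P(\mathbf{j})$ into per-round survival probabilities (valid under adaptivity because fixing $\mathbf{j}$ fixes the functions $f^{(t)}_i$), pull $[\cdot]_+$ inside the expectation by Jensen, and bound each fixed realization by the event characterization of the hockey-stick divergence: the bracket is at most $P_{\mathrm{shift}}^{\otimes k}(E)-\ee^\epsilon P^{\otimes k}(E)$ for the product-of-half-lines event $E=\prod_t[\tau^{(t)}(X'),\infty)$, using $\tau^{(t)}(X')\le\tau^{(t)}(X)+2$ and monotonicity of the survival function. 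Both obstacles you flag at the end do resolve exactly as you sketch, so your argument closes. As for what each buys: the paper's recursion is the standard dominating-pairs route and plugs directly into existing adaptive-composition machinery; yours avoids the induction and the data-dependent-$\epsilon$ bookkeeping entirely, is self-contained, and makes transparent exactly where the product structure and the sensitivity-$2$ shift enter. One bookkeeping note in your favor: your $\delta(\epsilon)$ is the profile of the $k$-fold composition of the sensitivity-$2$ mechanism, which is what the paper's proof actually produces; the ``$m$-wise composition'' in the theorem statement is evidently a typo for ``$k$-wise.''
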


The RDP analysis of the private selection provided by~\citet{papernot2022hyperparameter} shows that the RDP guarantees grow essentially as $\log m$, where $m$ is the expected number of candidates for the private selection algorithm. In case $Z$ is normally distributed, we directly see from our analysis that for a fixed $\delta$ the
$\epsilon$-values of the private selection algorithm grow as $\tfrac{1}{\sigma} \log^{\frac{1}{2}} \frac{m}{\delta}$. This result
is obtained for the RNM using a simple tail bound of the Gaussian.

\begin{corollary} \label{lem:convert2_eps_delta}
Consider the mechanism $\mathcal{M}$ defined in Eq.~\ref{eq:M_argmax} and suppose $Z$ is normally distributed with variance $\sigma^2$. 
Let $\delta>0$. Then $\mathcal{M}$ is $(\epsilon,\delta)$-DP for 
$$
\epsilon = \frac{2}{\sigma^2} + \frac{2}{\sigma} \sqrt{ 2 \log \frac{m}{\delta} }
$$
\end{corollary}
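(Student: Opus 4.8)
The plan is to combine the pointwise bound of Theorem~\ref{thm:max_sel} with an explicit tail bound on the privacy profile of the Gaussian mechanism, and then solve for $\epsilon$. By Theorem~\ref{thm:max_sel}, for every $\epsilon$ and all neighboring $X\sim X'$ we have $H_{\ee^\epsilon}(\mathcal{M}(X)\|\mathcal{M}(X'))\le m\cdot\delta(\epsilon)$, where $\delta(\epsilon)$ is the privacy profile of the one-dimensional additive-noise (here Gaussian) mechanism with noise variance $\sigma^2$ and \emph{sensitivity $2$} (the non-monotone case of Theorem~\ref{thm:max_sel}). By the hockey-stick characterization of $(\epsilon,\delta)$-DP (the Balle--Wang lemma stated in Section~\ref{sec:preliminaries}), it then suffices to produce an $\epsilon$ for which $m\cdot\delta(\epsilon)\le\delta$, i.e.\ $\delta(\epsilon)\le\delta/m$.

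The next step is to control $\delta(\epsilon)$ for the sensitivity-$2$ Gaussian mechanism via a simple Gaussian tail bound. Let $p,q$ denote the densities of the two worst-case output distributions $P=N(0,\sigma^2)$ and $Q=N(2,\sigma^2)$, and let $L=\log(p/q)$ be the privacy loss. A direct computation shows $L=\tfrac{2(1-x)}{\sigma^2}$, so under $x\sim P$ the loss is itself Gaussian, $L\sim N\big(\tfrac{2}{\sigma^2},\,\tfrac{4}{\sigma^2}\big)$. Using the identity $\delta(\epsilon)=H_{\ee^\epsilon}(P\|Q)=\E_P\big[(1-\ee^{\epsilon-L})_+\big]$ and bounding the clipped quantity by the indicator of its support, $(1-\ee^{\epsilon-L})_+\le\mathbf{1}[L>\epsilon]$, I obtain $\delta(\epsilon)\le\P_P(L>\epsilon)=\Phi(-t)$ with $t=\tfrac{\sigma\epsilon}{2}-\tfrac{1}{\sigma}$.

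Finally I would apply the standard one-sided Gaussian tail bound $\Phi(-t)\le\ee^{-t^2/2}$, valid for $t\ge0$, to get $\delta(\epsilon)\le\ee^{-t^2/2}$, and impose $m\,\ee^{-t^2/2}\le\delta$. This rearranges to $t\ge\sqrt{2\log(m/\delta)}$, that is $\tfrac{\sigma\epsilon}{2}-\tfrac{1}{\sigma}\ge\sqrt{2\log(m/\delta)}$; solving for $\epsilon$ yields precisely $\epsilon=\tfrac{2}{\sigma^2}+\tfrac{2}{\sigma}\sqrt{2\log\tfrac{m}{\delta}}$. For this choice one verifies $t=\sqrt{2\log(m/\delta)}\ge0$ (since $m\ge1\ge\delta$), so the tail bound was legitimately applied.

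The only real subtlety — the step I would watch most carefully — is keeping the sensitivity-$2$ scaling consistent throughout, since it is exactly what produces both the $2/\sigma^2$ offset and the $2/\sigma$ prefactor, together with checking the nonnegativity of $t$ so the one-sided tail bound is valid; the remainder is bookkeeping. As a sanity check, one could instead start from the exact Balle--Wang profile $\delta(\epsilon)=\Phi\big(\tfrac{1}{\sigma}-\tfrac{\sigma\epsilon}{2}\big)-\ee^{\epsilon}\Phi\big(-\tfrac{1}{\sigma}-\tfrac{\sigma\epsilon}{2}\big)$ and drop the nonnegative second term, which recovers the same $\Phi(-t)$ and hence the identical final bound.
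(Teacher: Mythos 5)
Your proposal is correct and follows essentially the same route as the paper: reduce via Theorem~\ref{thm:max_sel} to showing $\delta(\epsilon)\le\delta/m$ for the sensitivity-$2$ Gaussian mechanism, then bound its privacy profile by the tail of the Gaussian privacy loss random variable $\omega\sim\mathcal{N}(2/\sigma^2,4/\sigma^2)$ and apply the Chernoff-type tail bound $\ee^{-t^2/2}$. The only cosmetic difference is that you derive the privacy-loss distribution and the inequality $\delta(\epsilon)\le\P(L>\epsilon)$ from scratch, whereas the paper cites this directly as Lemma~3 of Balle and Wang (2018); the algebra and the final choice of $\epsilon$ are identical.
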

As Figure~\ref{fig:comparison_argmax2} shows, with the bound given in Thm.~\ref{thm:max_sel} we observe differences one usually observes between the accurate hockey stick and R\'enyi divergence bounds~\citep[for comparisons, see, e.g.][]{canonne2020discrete}.
When considering the private selection problem where the number of candidates is randomized, we obtain larger differences. Also, the $\epsilon$-growth order $\mathcal{O}(\tfrac{1}{\sigma} \log^{\frac{1}{2}} \frac{m}{\delta})$ is retained. All of this is discussed in the next two sections.

\begin{figure} [h!]
     \centering
        \includegraphics[width=.45\textwidth]{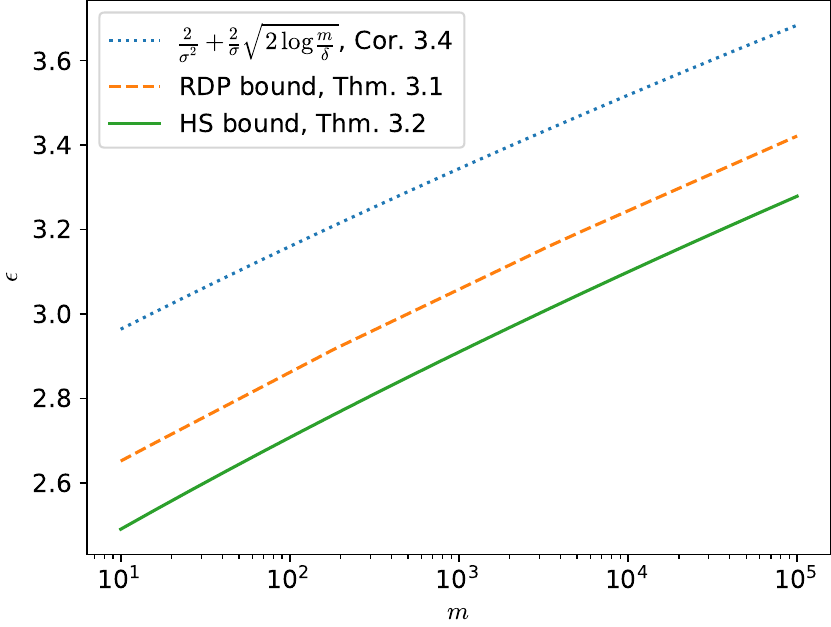}
        \caption{Comparison of the $(\epsilon,\delta)$-bounds for the RNM mechanism~\eqref{eq:M_argmax} when the base mechanisms $\mathcal{M}_i$, $i \in [m]$, are 1-d Gaussian mechanisms with sensitivity 1 and noise scale $\sigma=4.0$, and when $\delta=10^{-6}$. Also plotted is the bound of Corollary~\ref{lem:convert2_eps_delta}. }
 	\label{fig:comparison_argmax2}
\end{figure}

\section{General Bound for Private Selection} \label{sec:priv_sel}

We use the notation and setting of~\citet{papernot2022hyperparameter}. This means that the tuning algorithm $A$ outputs both the argument of the maximizer (the best hyperparameters or the index) and the output of the base mechanism (e.g., the model trained with the best hyperparameters).

Let $\mathcal{Y}$ denote the finite and ordered output space (the quality score), $Q(y)$ the density function of the quality score of the base mechanism taking values in $\mathcal{Y}$, $K$ the random variable for the number of times the base mechanism is run and $A(y)$ the density function of the tuning algorithm that outputs the best one of the $K$ alternatives.
Let $A$ and $A'$ denote the output distributions of the tuning algorithm evaluated on neighboring datasets $X$ and $X'$, respectively.
Then, the $f$-divergence between $A$ and $A'$ can be bounded using the following result.
We use the proof technique of~\citet{papernot2022hyperparameter} and similarly invoke the argument that our proof for finite $\mathcal{Y}$ can be extended to the general case.

\begin{theorem} \label{thm:main_thm}

Let $X \sim X'$ and let $A$ and $A'$ be the density functions of the hyperparameter tuning algorithm as defined above, evaluated on $X$ and $X'$, respectively. Let $Q$ and $Q'$ be the density functions of the quality score of the base mechanism, evaluated on  $X$ and $X'$, respectively. Let $K$ be random variable for the times the base mechanism is run and $\varphi(z)$ the PGF of $K$.
Let $f \, : \,  [0,\infty) \rightarrow \mathbb{R}$ be a convex function.
Then, 
\begin{equation*} 
\begin{aligned}
H_f (A || A') 
 \leq \sum_{y \in \mathcal{Y}} 
f \left( \frac{ Q(y)  \varphi'(q_y)   }{ Q'(y) \varphi'(q'_y) }\right) \cdot Q'(y) \varphi'(q'_y),
\end{aligned}
\end{equation*}
where for each $y \in \mathcal{Y}$, $q_y$ and $q'_y$ are obtained by applying the same 
$y$-dependent post-processing function to
$Q$ and $Q'$, respectively.
\end{theorem}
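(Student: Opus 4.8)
The plan is to first obtain a closed form for the tuning-output density $A$ in terms of the PGF $\varphi$, and then to bound the $f$-divergence term-by-term using the joint convexity of Lemma~\ref{lem:f_convex}. Since the algorithm returns the best (i.e.\ maximal) of $K$ i.i.d.\ quality scores, conditioning on $K=k$ gives $\Pr(\text{best} \le y \mid K=k) = \big(\sum_{y' \le y} Q(y')\big)^k$; writing $Q_{\le y} = \sum_{y'\le y}Q(y')$ and $Q_{<y} = \sum_{y'<y}Q(y')$ and averaging over $K$,
\[
A(y) = \varphi(Q_{\le y}) - \varphi(Q_{<y}) = \int_{Q_{<y}}^{Q_{\le y}} \varphi'(t)\,\dd t,
\]
and likewise $A'(y) = \int_{Q'_{<y}}^{Q'_{\le y}}\varphi'(t)\,\dd t$. (The same identity also follows from the coupling $V=\max_i U_i$ with $U_i \sim \mathrm{Unif}[0,1]$, whose CDF is exactly $\varphi$.)

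The obstacle is that the intervals of integration for $A(y)$ and $A'(y)$ differ, so I cannot directly match the two integrands against a common partition of $[0,1]$. I would remove this mismatch by reparametrizing each block by its fractional position $s \in [0,1]$: substituting $t = Q_{<y} + s\,Q(y)$,
\[
A(y) = Q(y)\int_0^1 \varphi'\big(Q_{<y} + s\,Q(y)\big)\,\dd s, \qquad A'(y) = Q'(y)\int_0^1 \varphi'\big(Q'_{<y} + s\,Q'(y)\big)\,\dd s,
\]
so that both are now averages of their integrands over the \emph{same} variable $s$.

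Now I would invoke Lemma~\ref{lem:f_convex}: the function $g(a,b) := f(a/b)\,b$ is jointly convex, so Jensen's inequality applied to the uniform law of $s$ (a probability measure) yields, for each fixed $y$,
\[
g\big(A(y),A'(y)\big) \le \int_0^1 g\Big(Q(y)\varphi'(Q_{<y}+sQ(y)),\; Q'(y)\varphi'(Q'_{<y}+sQ'(y))\Big)\,\dd s.
\]
Because the integrand is continuous in $s$, the mean value theorem for integrals produces a single $s^\ast_y \in [0,1]$ at which the integral is attained. Setting $q_y = Q_{<y}+s^\ast_y\, Q(y)$ and $q'_y = Q'_{<y}+s^\ast_y\, Q'(y)$ then gives $g(A(y),A'(y)) \le f\!\big(\tfrac{Q(y)\varphi'(q_y)}{Q'(y)\varphi'(q'_y)}\big)\,Q'(y)\varphi'(q'_y)$, and summing over $y \in \mathcal{Y}$ yields the claim. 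Here $q_y$ and $q'_y$ are genuinely the \emph{same} post-processing of $Q$ and of $Q'$: both equal the probability that a single base draw has score $<y$, or has score $=y$ and an independent coin of bias $s^\ast_y$ lands heads.

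I expect the main difficulty to be precisely this region-mismatch step: the fractional-rank substitution is what makes the two integrands comparable, and it is essential that the mean value theorem supplies one common fraction $s^\ast_y$ shared by the $Q$- and $Q'$-expressions, since this is exactly what certifies that $q_y,q'_y$ arise from a single post-processing map rather than from two unrelated mean-value points. Finally, I would lift the result from finite ordered $\mathcal{Y}$ to the general case by the same approximation argument invoked by~\citet{papernot2022hyperparameter}.
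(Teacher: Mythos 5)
Your proof is correct and follows essentially the same route as the paper's: the same PGF-based representation $A(y) = \int_{Q(<y)}^{Q(\leq y)}\varphi'(t)\,\dd t$, the same fractional-position coupling between the two intervals, and the same application of Lemma~\ref{lem:f_convex} with Jensen's inequality, with $q_y, q'_y$ given the identical post-processing interpretation. The only cosmetic difference is that you collapse the resulting average over $s$ to a common point $s^*_y$ via the mean value theorem for integrals, whereas the paper upper-bounds the expectation by a maximum over the coupled pair (and an additional maximum over $y'$); both steps are valid and yield the stated bound.
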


Looking at the bound given by Thm.~\ref{thm:main_thm}, we can decompose the right-hand side 
in case $f$ corresponds to the R\'enyi divergence and obtain~\citep[Lemma 7,][]{papernot2022hyperparameter} as a corollary.

\begin{remark}

Let $f(z) = z^\alpha$ for some $\alpha > 1$. Then, by Thm.~\ref{thm:main_thm},
\begin{equation} \label{eq:decompose}
\begin{aligned}
& \ee^{(\alpha-1) D_\alpha (A || A')} = H_f(A || A')  \\
&\leq \sum_{y \in \mathcal{Y}} \left( \frac{ Q(y)  }{ Q'(y) }\right)^\alpha  Q'(y) \cdot
\left( \frac{  \varphi'(q_y)   }{ \varphi'(q'_y) }\right)^\alpha  \varphi'(q'_y)  \\
& \leq \left[ \sum_{y \in \mathcal{Y}} \left( \frac{ Q(y)  }{ Q'(y) }\right)^\alpha  Q'(y) \right] 
\left( \frac{  \varphi'(q)   }{ \varphi'(q') }\right)^\alpha  \varphi'(q'), 
\end{aligned}
\end{equation}
for some $q$ and $q'$ that are obtained by applying the same post-processing function to $Q$ and $Q'$, respectively.
Taking the logarithm and dividing by $\alpha - 1$, we obtain~\citep[Lemma 7,][]{papernot2022hyperparameter}. 

\end{remark}

In case of RDP analysis, the bounds for the randomized private selection algorithms (see Theorems~\ref{thm:rdp_negbin} and~\ref{thm:rdp_poisson}) allow optimizing the bound~\eqref{eq:decompose} with respect to the privacy profile of $Q$. For example: as $q$ is a result of post-processing of $Q$, we have that for all $\epsilon \geq 0$
$$
q \leq \ee^\epsilon q' + \delta(\epsilon),
$$
where $\delta(\epsilon)$ gives the privacy profile of $Q$, and we can carry out an optimization of the bound~\eqref{eq:decompose}
individually for each RDP order $\alpha$ w.r.t. $\epsilon$. In case the function $f$ in Thm.~\ref{thm:main_thm} corresponds to the hockey stick divergence, the best we can have is a uniform bound for the ratio $\frac{ \varphi'(q_y)   }{  \varphi'(q'_y) }$ which uses the bound $q_y \leq \ee^\epsilon q_y' + \delta(\epsilon)$ with the same value of $\epsilon$ for each $y \in \mathcal{Y}$. As a result there is one degree of freedom less to optimize in the upper bounds we obtain using the hockey stick divergence.
Nevertheless,  the analysis with the hockey stick divergence becomes much simpler and the resulting $(\epsilon,\delta)$-DP bounds for the private selection become tighter. 

\section{Distribution Specific Bounds for Private Selection} \label{sec:negbin}

We next consider privacy profile bounds for two specific choices of the distribution $K$, the truncated negative binomial distribution and the binomial distribution. As we show, in both cases the bounds allow evaluating a considerably larger number of private candidates than the state-of-the-art bounds. The bounds for the binomial distribution generalize the bounds for the Poisson distribution and improve the state-of-the-art bounds for the Poisson distribution and also allow concentrating $K$ further.

\subsection{Truncated Negative Binomial Distribution} \label{subsec:negbin}

Suppose the number of trials $K$ is distributed according to the truncated negative binomial distribution
$\mathcal{D}_{\eta,\gamma}$ which is determined by $\gamma \in (0,1)$ and $\eta \in (-1,\infty)$
and by the probabilities ($k \in \mathbb{N}$) for $\eta \neq 0$ by
$$
\mathbb{P}(K=k) = \frac{(1-\gamma)^k}{\gamma^{-\eta}-1} \prod_{i=0}^{k-1} \left( \frac{i + \eta}{i+1} \right).
$$
and for $\eta=0$ by 
$$
\mathbb{P}(K=k) = \frac{(1-\gamma)^k}{k \cdot \log{1/\gamma}}.
$$
It holds that when $\eta \neq 0$,
$$
\mathbb{E} K  = \frac{\eta (1-\gamma)}{\gamma ( 1-\gamma^{\eta})}
$$
and when $\eta=0$,
$$
\mathbb{E} K  = \frac{1/\gamma - 1}{\log 1/\gamma}.
$$
The derivative of the corresponding probability generating function is  given by
\begin{equation} \label{eq:negbin_varphi}
\varphi'(z) = \big( 1 - ( 1- \gamma)z \big)^{-\eta-1} \cdot \gamma^{\eta+1} \cdot \mathbb{E} K.
\end{equation}

As a baseline, we consider the following RDP bound.
\begin{theorem}[\citealt{papernot2022hyperparameter}]  \label{thm:rdp_negbin}
Let $Q$  satisfy $\big(\alpha,\epsilon \big)$-RDP and $\big(\wh\alpha,\wh\epsilon \big)$-RDP for some $\alpha \in (1,\infty)$ and
 $\wh\alpha \in [1,\infty)$. 
Draw $K$ from a truncated negative binomial distribution distribution $\mathcal{D}_{\eta,\gamma}$, where $\gamma \in (0,1)$ and $\eta \in (-1,\infty)$.
Run $Q(X)$ for $K$ times. Then $A(X)$ returns the best value of those $K$ runs 
(also $Q$'s output). 
Then $A$ satisfies $\big(\alpha,\epsilon'(\alpha) \big)$-RDP, where
\begin{equation*}
    \begin{aligned}
        \epsilon'(\alpha) = \epsilon(\alpha) + (\eta+1) \left( 1 - \frac{1}{\wh\alpha} \right) \wh \epsilon + \frac{(1+\eta) \cdot \log(1/\gamma)}{\hat \alpha} + \frac{\log m}{\alpha - 1}.
    \end{aligned}
\end{equation*}
\end{theorem}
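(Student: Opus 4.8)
The plan is to obtain this RDP baseline directly from the general $f$-divergence bound of Theorem~\ref{thm:main_thm}, specialized to $f(z)=z^\alpha$, and then to insert the closed form \eqref{eq:negbin_varphi} of the PGF derivative. Concretely, I would start from the decomposition already recorded in \eqref{eq:decompose},
$$
\ee^{(\alpha-1)D_\alpha(A\|A')} \;\le\; \left[\sum_{y\in\cY}\left(\frac{Q(y)}{Q'(y)}\right)^\alpha Q'(y)\right]\left(\frac{\varphi'(q)}{\varphi'(q')}\right)^\alpha\varphi'(q').
$$
The first bracket is exactly $\ee^{(\alpha-1)D_\alpha(Q\|Q')}$, which the $(\alpha,\epsilon(\alpha))$-RDP hypothesis on $Q$ bounds by $\ee^{(\alpha-1)\epsilon(\alpha)}$. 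It then remains to control the second factor; taking $\tfrac1{\alpha-1}\log$ of the whole product and matching the three leftover summands of $\epsilon'(\alpha)$ reduces the theorem to the inequality
$$
\left(\frac{\varphi'(q)}{\varphi'(q')}\right)^\alpha\varphi'(q') \;\le\; m\cdot\ee^{(\alpha-1)(\eta+1)(1-1/\wh\alpha)\wh\epsilon}\cdot\gamma^{-(\alpha-1)(\eta+1)/\wh\alpha}.
$$

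Next I would insert \eqref{eq:negbin_varphi}. Writing $u=1-(1-\gamma)q$ and $u'=1-(1-\gamma)q'$, the mean $\E K=m$ and the constant $\gamma^{\eta+1}$ factor out cleanly, and the left-hand side collapses to $\gamma^{\eta+1}\,m\,u^{-\alpha(\eta+1)}(u')^{(\alpha-1)(\eta+1)}$. Dividing by $m$ and taking the $(\eta+1)$-th root, the target becomes the one-dimensional inequality $\gamma\,u^{-\alpha}(u')^{\alpha-1}\le \ee^{(\alpha-1)(1-1/\wh\alpha)\wh\epsilon}\gamma^{-(\alpha-1)/\wh\alpha}$, where I also record the range $u,u'\in[\gamma,1]$ forced by $q,q'\in[0,1]$. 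Observe already here that the exponent $\eta+1$ on $\varphi'$ is precisely what produces the $(\eta+1)$ prefactors in $\epsilon'(\alpha)$.

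The crux is relating $q$ and $q'$. Since $q$ and $q'$ arise by applying the \emph{same} post-processing map to $Q$ and $Q'$, the induced Bernoulli pair inherits $\wh\alpha$-RDP by the data-processing inequality; discarding the complementary term in the explicit Rényi divergence of two Bernoullis yields the one-sided bound $q\le \ee^{(1-1/\wh\alpha)\wh\epsilon}(q')^{1-1/\wh\alpha}$. Substituting this (it lower-bounds $u$, hence upper-bounds $u^{-\alpha}$) and maximizing the resulting expression over $q'\in[0,1]$ is the main obstacle: this constrained optimization must reconcile the exponent $\alpha$ appearing in the objective with the order $\wh\alpha$ appearing in the constraint, and must respect the boundary $u'\ge\gamma$. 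It is exactly from this step that the two factors $\ee^{(\alpha-1)(1-1/\wh\alpha)\wh\epsilon}$ (traceable to the $(1-1/\wh\alpha)\wh\epsilon$ in the one-sided bound) and $\gamma^{-(\alpha-1)/\wh\alpha}$ (from pushing $u'$ to its lower limit $\gamma$) arise.

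Finally I would assemble the pieces: multiply the two bounded factors, take $\tfrac1{\alpha-1}\log$, and read off $\epsilon'(\alpha)=\epsilon(\alpha)+(\eta+1)(1-1/\wh\alpha)\wh\epsilon+(1+\eta)\log(1/\gamma)/\wh\alpha+\log m/(\alpha-1)$. One secondary point to verify is the passage from the per-$y$ quantities $q_y,q'_y$ of Theorem~\ref{thm:main_thm} to the single pair $(q,q')$ used in \eqref{eq:decompose}, i.e.\ that replacing each PGF-ratio factor by its worst case over $y$ and pulling it out of the sum is legitimate; because the scalar bound derived above is uniform over $(q,q')\in[0,1]^2$, this incurs no loss, and the case $\wh\alpha=1$ is handled as the limiting (KL) instance where the $(1-1/\wh\alpha)$ factor vanishes.
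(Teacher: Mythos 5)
First, a contextual note: the paper never proves this statement --- it is imported verbatim as a baseline from \citet{papernot2022hyperparameter}, and the paper's own Remark after Thm.~\ref{thm:main_thm} only re-derives your starting point \eqref{eq:decompose} (their Lemma~7). So your attempt has to be judged against the argument in that reference, and against that standard it has a genuine gap at exactly the step you call the crux. Your reduction is correct up to there: bounding the first bracket by $\ee^{(\alpha-1)\epsilon(\alpha)}$, inserting \eqref{eq:negbin_varphi}, and arriving at the scalar target $\gamma\, u^{-\alpha}(u')^{\alpha-1}\le \ee^{(\alpha-1)(1-1/\wh\alpha)\wh\epsilon}\,\gamma^{-(\alpha-1)/\wh\alpha}$ with $u=1-(1-\gamma)q$, $u'=1-(1-\gamma)q'$ are all fine. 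But this target is \emph{false} under the only constraint you impose, namely $q\le \ee^{(1-1/\wh\alpha)\wh\epsilon}(q')^{1-1/\wh\alpha}$. Take $q=1$ and $q'=\ee^{-\wh\epsilon}$, which satisfies your constraint with equality. Then $u=\gamma$ and $u'=\gamma+(1-\gamma)\bigl(1-\ee^{-\wh\epsilon}\bigr)\approx\gamma+\wh\epsilon$, so the left side equals $\bigl(u'/\gamma\bigr)^{\alpha-1}\approx\bigl(1+\wh\epsilon/\gamma\bigr)^{\alpha-1}$, while for large $\wh\alpha$ the right side is $\approx \ee^{(\alpha-1)\wh\epsilon}$. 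With $\gamma=10^{-3}$, $\wh\epsilon=10^{-2}$, $\wh\alpha=100$ the left side is $\approx 11^{\alpha-1}$ versus $\approx 1.08^{\alpha-1}$ on the right, so no amount of clever optimization can close your reduction: the feasible set defined by your inequality contains points where the claim fails.

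The missing idea is that you applied probability preservation to the wrong event. The dangerous region for the objective is $q\approx 1$ (so $u\approx\gamma$) with $q'$ bounded away from $1$, and there the quantity $u=\gamma q+(1-q)$ is governed by $1-q$, not by $q$; an upper bound on $q$ in terms of $q'$ only bites when $q'$ is small and leaves that region essentially unconstrained. What is needed is the preservation bound for the \emph{complementary} event (equivalently, in the direction from $Q'$ to $Q$): $1-q'\le \ee^{(1-1/\wh\alpha)\wh\epsilon}(1-q)^{1-1/\wh\alpha}$, i.e.\ $1-q\ge \ee^{-\wh\epsilon}(1-q')^{\wh\alpha/(\wh\alpha-1)}$, which in particular forces $q'=1$ whenever $q=1$ and excludes the counterexample above; this is precisely how Lemma~8 of \citet{papernot2022hyperparameter} enters their proof. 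Two further, smaller points: even with the correct constraint, your outline defers the entire constrained maximization (``the main obstacle'') to a step you never carry out --- in the actual proof this optimization, which must balance $u\ge\gamma$ against the fractional-power constraint and produces exactly the two factors $\ee^{(1-1/\wh\alpha)\wh\epsilon}$ and $\gamma^{-1/\wh\alpha}$, is the substantive work, not a routine verification. The parts of your proposal that are unproblematic are the RDP bound on the $Q$-bracket, the uniform-in-$y$ replacement of $(q_y,q'_y)$ by a single worst-case pair, and the $\wh\alpha\to 1$ limit.
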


Using the PGF~\eqref{eq:negbin_varphi} and our general result Thm.\ref{thm:main_thm}, we obtain the following bound using the hockey-stick divergence.

\begin{theorem} \label{thm:negbin}
Let $K \sim \mathcal{D}_{\eta,\gamma}$ and let $\delta(\epsilon_1)$, ${\epsilon_1} \in \mathbb{R}$, define the privacy profile of the base mechanism $Q$.
Then, for $A$ and $A'$, the output distributions of the selection algorithm evaluated on neighboring datasets $X$ and $X'$, respectively, and
for all $\epsilon_1 \geq 0$,
$$
H_{\ee^\epsilon}(A || A') \leq m \cdot \delta(\wh\epsilon)
$$
where 
\begin{equation*} 
 \wh\epsilon = \epsilon - (\eta+1)  \log \left( \ee^{\epsilon_1} + \frac{1-\gamma}{\gamma} \cdot {\delta}(\epsilon_1) \right).   
\end{equation*}
\end{theorem}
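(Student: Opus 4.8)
The plan is to feed the hockey-stick integrand $f(z)=[z-\ee^{\epsilon}]_+$ and the truncated negative binomial PGF into Thm~\ref{thm:main_thm}. That theorem then bounds $H_{\ee^\epsilon}(A\|A')$ by $\sum_{y\in\cY}[\,Q(y)\varphi'(q_y)-\ee^{\epsilon}Q'(y)\varphi'(q'_y)\,]_+$, where $q_y,q'_y$ are the base quality-score CDFs evaluated at $y$. Since $\{\text{score}<y\}$ is a fixed event, it is a post-processing of the base mechanism, so the privacy profile of $Q$ gives $q_y\le \ee^{\epsilon_1}q'_y+\delta(\epsilon_1)$ for every $\epsilon_1\ge 0$. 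I would then substitute the closed form $\varphi'(z)=\gamma^{\eta+1}m\,(1-(1-\gamma)z)^{-(\eta+1)}$ from \eqref{eq:negbin_varphi}, recording that $\varphi'$ is increasing and that $\varphi'(1)=\E[K]=m$.

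The main device is to absorb the two $\varphi'$ factors into a single rescaling of the threshold. Writing the subtracted factor through a uniform bound on the ratio $\varphi'(q_y)/\varphi'(q'_y)$: if one shows $\varphi'(q_y)\le \beta^{\eta+1}\varphi'(q'_y)$, then each summand is at most $\varphi'(q_y)\big(Q(y)-\ee^{\wh\epsilon}Q'(y)\big)$ with $\ee^{\wh\epsilon}=\ee^{\epsilon}\beta^{-(\eta+1)}$, and bounding the remaining leading factor by $\varphi'(q_y)\le m$ collapses the sum to $m\sum_y[Q(y)-\ee^{\wh\epsilon}Q'(y)]_+\le m\,\delta(\wh\epsilon)$. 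With the closed form the ratio bound is the scalar inequality $1-(1-\gamma)q'_y\le \beta\,(1-(1-\gamma)q_y)$; inserting $q_y\le \ee^{\epsilon_1}q'_y+\delta(\epsilon_1)$ together with $q_y\le 1$ and optimizing over the admissible $(q_y,q'_y)$ is where $\beta=\ee^{\epsilon_1}+\tfrac{1-\gamma}{\gamma}\delta(\epsilon_1)$ should emerge, the factor $\tfrac{1-\gamma}{\gamma}$ stemming from $1-(1-\gamma)\cdot 1=\gamma$ at the top of the CDF.

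The step I expect to be the real obstacle is that this pointwise ratio bound degrades precisely as $q_y\to 1$: there the admissible ratio can exceed $\beta^{\eta+1}$, so the termwise argument alone cannot be valid across all parameter regimes. The fix must use the global fact that $q_y$ is a CDF --- equivalently that the CDF of $A$ is $\varphi(q_y)$, since $\varphi'(q_y)Q(y)=\tfrac{\dd}{\dd y}\varphi(q_y)$ --- so that the mass $Q(y)$ is negligible exactly where the ratio is worst. I would therefore recast the divergence through the survival representation, using that the survival function of $A$ equals $1-\varphi(q_y)$, and apply the privacy profile of $Q$ to both $\{\text{score}<y\}$ and its complement $\{\text{score}\ge y\}$, so that the dangerous upper-tail contribution is offset by the complementary bound $1-q_y\le \ee^{\epsilon_1}(1-q'_y)+\delta(\epsilon_1)$; I would then verify that the factors recombine to leave the prefactor exactly $m$ rather than $m\beta^{\eta+1}$, yielding $H_{\ee^\epsilon}(A\|A')\le m\,\delta(\wh\epsilon)$. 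A useful consistency check throughout is the pure-DP limit $\delta(\epsilon_1)=0$, where the claim reduces to $A$ being $(\eta+2)\epsilon_1$-DP, matching the Liu--Talwar-type guarantee.
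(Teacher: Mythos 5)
Your overall skeleton is the one the paper uses: instantiate Thm.~\ref{thm:main_thm} with the hockey-stick function, bound the ratio $\varphi'(q_y)/\varphi'(q'_y)$ by a constant $\beta^{\eta+1}$ uniformly in $y$, absorb that constant into the threshold via $\wh\epsilon=\epsilon-(\eta+1)\log\beta$, and use $\varphi'(q_y)\le\varphi'(1)=m$ to collapse the sum to $m\sum_y[Q(y)-\ee^{\wh\epsilon}Q'(y)]_+\le m\,\delta(\wh\epsilon)$. That collapse step in your second paragraph is correct and is exactly the paper's computation (the paper writes it in exponent form).

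The gap is in how you get the uniform ratio bound, and in your diagnosis of the obstacle. You are right that the direct inequality $q_y\le\ee^{\epsilon_1}q'_y+\delta(\epsilon_1)$ cannot yield $\beta=\ee^{\epsilon_1}+\tfrac{1-\gamma}{\gamma}\delta(\epsilon_1)$: with $\delta(\epsilon_1)=0$, $q_y=1$, $q'_y=\ee^{-\epsilon_1}$ the ratio equals $\left(\tfrac{(1-\gamma)(1-\ee^{-\epsilon_1})+\gamma}{\gamma}\right)^{\eta+1}$, which blows up as $\gamma\to0$ while $\beta^{\eta+1}=\ee^{(\eta+1)\epsilon_1}$ stays bounded. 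But your conclusion that ``the termwise argument alone cannot be valid,'' and the survival-function detour you then sketch (and never carry out --- ``I would then verify that the factors recombine'' is precisely the unproven step), are both off the mark. The fix is only to apply the privacy profile to the \emph{complement} kernel in the other order: since $q_y,q'_y$ arise from the same post-processing kernel and the privacy profile is a maximum over both orderings of neighbors, one has $1-q'_y\le\ee^{\epsilon_1}(1-q_y)+\delta(\epsilon_1)$ --- note the primed quantity on the \emph{left}; your stated version $1-q_y\le\ee^{\epsilon_1}(1-q'_y)+\delta(\epsilon_1)$ is also true but points the wrong way, because the numerator of $\varphi'(q_y)/\varphi'(q'_y)=\left(\tfrac{(1-\gamma)(1-q'_y)+\gamma}{(1-\gamma)(1-q_y)+\gamma}\right)^{\eta+1}$ contains $1-q'_y$. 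With the correctly oriented inequality and $\gamma\le\gamma\ee^{\epsilon_1}$, the numerator is at most $\ee^{\epsilon_1}\big[(1-\gamma)(1-q_y)+\gamma\big]+(1-\gamma)\delta(\epsilon_1)$, and since the denominator is at least $\gamma$, the ratio is pointwise at most $\left(\ee^{\epsilon_1}+\tfrac{1-\gamma}{\gamma}\delta(\epsilon_1)\right)^{\eta+1}$ uniformly over $q_y\in[0,1]$ --- including the $q_y\to1$ regime you feared (your counterexample pair is excluded by this constraint). That is the paper's entire proof; plugging this $\beta$ into your own collapse step finishes the argument, with no global or survival-function reasoning needed.
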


We directly obtain the following pure $\epsilon$-DP result from Thm.~\ref{thm:negbin}.
Notice that it includes Theorem 1.3 ($3 \epsilon$-bound) by~\citet{liu2019private} as a special case.

\begin{corollary} \label{cor:pointwise0}
Let $K \sim \mathcal{D}_{\eta,\gamma}$. If the base mechanism $Q$ is $\epsilon$-DP, then the selection algorithm $A$ is $ (\eta+2) \epsilon$-DP. For $\eta=1$ we get Theorem 1.3 of~\citep{liu2019private}.
\end{corollary}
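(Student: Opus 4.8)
The plan is to read off the corollary as a one-line specialization of Theorem~\ref{thm:negbin}, using the fact that pure $\epsilon$-DP of $Q$ is equivalent to a privacy profile that vanishes at (and beyond) the level $\epsilon$. Concretely, if $Q$ is $\epsilon$-DP then $\delta(\epsilon) = 0$, and since the hockey-stick divergence $H_{\ee^{\epsilon_1}}$ is non-increasing in $\epsilon_1$, we also have $\delta(\epsilon_1) = 0$ for every $\epsilon_1 \geq \epsilon$. I would invoke Theorem~\ref{thm:negbin} with the target hockey-stick parameter set to $(\eta+2)\epsilon$, so that its conclusion reads $H_{\ee^{(\eta+2)\epsilon}}(A \| A') \leq m \cdot \delta(\hat\epsilon)$ with
\[
\hat\epsilon = (\eta+2)\epsilon - (\eta+1)\log\!\left( \ee^{\epsilon_1} + \frac{1-\gamma}{\gamma}\,\delta(\epsilon_1) \right),
\]
valid for every free parameter $\epsilon_1 \geq 0$.

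The key step is the choice $\epsilon_1 = \epsilon$. Because $Q$ is $\epsilon$-DP we have $\delta(\epsilon) = 0$, so the penalty term inside the logarithm collapses and $\log \ee^{\epsilon} = \epsilon$, giving
\[
\hat\epsilon = (\eta+2)\epsilon - (\eta+1)\epsilon = \epsilon.
\]
Substituting back, the right-hand side of the bound is $m \cdot \delta(\epsilon) = m \cdot 0 = 0$, hence $H_{\ee^{(\eta+2)\epsilon}}(A \| A') = 0$ for every pair of neighbors $X \sim X'$. By the hockey-stick characterization of DP (the lemma of \citet{balle2018subsampling}) this is precisely $\big((\eta+2)\epsilon, 0\big)$-DP, i.e.\ $A$ is $(\eta+2)\epsilon$-DP in the pure sense; setting $\eta = 1$ yields the $3\epsilon$ bound of \citet{liu2019private}.

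Since the argument is a single substitution, I do not anticipate a genuine obstacle. The one point worth flagging is that $\epsilon_1 = \epsilon$ is the correct choice precisely because it is \emph{self-consistent}: it simultaneously kills the penalty term $\delta(\epsilon_1)$ (as $Q$ is $\epsilon$-DP) and makes $\hat\epsilon$ land exactly back at $\epsilon$, where the profile vanishes again. Any larger $\epsilon_1$ would pull $\hat\epsilon$ strictly below $\epsilon$, into a region where $\delta(\hat\epsilon)$ is no longer guaranteed to be zero, so this balance point is exactly what produces the clean constant $\eta+2$. I would state this choice explicitly rather than optimizing over $\epsilon_1$, since it already delivers the claimed pure-DP guarantee.
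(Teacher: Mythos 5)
Your proof is correct and follows essentially the same route as the paper's: both invoke Theorem~\ref{thm:negbin} with the free parameter $\epsilon_1$ set to the pure-DP level of $Q$ (where $\delta(\epsilon_1)=0$ collapses the logarithmic penalty to $(\eta+1)\epsilon_1$), so that the resulting $\wh\epsilon$ lands back at a point where the profile vanishes and the bound reads $H_{\ee^{(\eta+2)\epsilon}}(A\,\|\,A') \leq m\cdot 0 = 0$. Your write-up is if anything slightly more explicit than the paper's (naming the self-consistency of the choice $\epsilon_1=\epsilon$ and the hockey-stick characterization of pure DP), but there is no substantive difference in approach.
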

The following $(\epsilon,\delta)$-DP result is also a straightforward corollary of Thm.~\ref{thm:negbin}.
\begin{corollary} \label{cor:pointwise}
Let $K \sim \mathcal{D}_{\eta,\gamma}$. If the base mechanism $Q$ is $(\epsilon,\delta)$-DP, then then the selection algorithm $A$ is $ \big((\eta+2) \epsilon + \gamma^{-1}  \delta, m \delta \big)$-DP.
\end{corollary}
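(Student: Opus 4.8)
The plan is to obtain Corollary~\ref{cor:pointwise} as the one-point specialization of Theorem~\ref{thm:negbin}. The only facts about $Q$ that I will use are that $(\epsilon,\delta)$-DP means exactly $\delta(\epsilon)\le\delta$ for its privacy profile $\delta(\cdot)$, and that $\epsilon_1\mapsto\delta(\epsilon_1)$ is non-increasing. To avoid a clash of symbols I will write $\epsilon_A$ for the target order of the selection algorithm (the quantity called $\epsilon$ inside Theorem~\ref{thm:negbin}), reserving $\epsilon$ for the base guarantee.

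First I would instantiate the free parameter of Theorem~\ref{thm:negbin} by the natural choice $\epsilon_1=\epsilon$, so that $\delta(\epsilon_1)=\delta(\epsilon)\le\delta$. The theorem then gives
\begin{equation*}
H_{\ee^{\epsilon_A}}(A\|A') \le m\cdot\delta(\wh\epsilon), \qquad \wh\epsilon = \epsilon_A - (\eta+1)\log\!\left(\ee^{\epsilon} + \tfrac{1-\gamma}{\gamma}\,\delta(\epsilon)\right).
\end{equation*}
I would then force $\wh\epsilon\ge\epsilon$: by monotonicity of the profile this yields $\delta(\wh\epsilon)\le\delta(\epsilon)\le\delta$, which is exactly the claimed $\delta_A=m\delta$. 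Solving $\wh\epsilon\ge\epsilon$ for $\epsilon_A$ and using $\delta(\epsilon)\le\delta$ inside the increasing logarithm shows it suffices to take
\begin{equation*}
\epsilon_A \ge \epsilon + (\eta+1)\log\!\left(\ee^{\epsilon} + \tfrac{1-\gamma}{\gamma}\,\delta\right).
\end{equation*}

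The remaining work is purely to replace this transcendental right-hand side by the advertised closed form $(\eta+2)\epsilon+\gamma^{-1}\delta$. I would pull $\ee^{\epsilon}$ out of the logarithm, writing $\log(\ee^\epsilon + \tfrac{1-\gamma}{\gamma}\delta) = \epsilon + \log(1 + \tfrac{1-\gamma}{\gamma}\delta\,\ee^{-\epsilon})$, so that the $(\eta+1)\epsilon$ contribution merges with the leading $\epsilon$ into the $(\eta+2)\epsilon$ term; this also makes transparent that at $\delta=0$ we recover precisely $\epsilon_A=(\eta+2)\epsilon$, i.e.\ the pure-DP Corollary~\ref{cor:pointwise0}, which is a good consistency check. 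For the residual I would apply $\log(1+x)\le x$ together with $\ee^{-\epsilon}\le 1$ (valid since $\epsilon\ge0$) to linearize $(\eta+1)\log(1+\tfrac{1-\gamma}{\gamma}\delta\,\ee^{-\epsilon})$ in $\delta$ and collapse the $\gamma$-dependence into the factor $\gamma^{-1}$.

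The step I expect to be the main obstacle is exactly this last real-analysis simplification. The honest consequence of the logarithmic bound carries the prefactor $(\eta+1)\tfrac{1-\gamma}{\gamma}$ on $\delta$, and some care is needed to argue that in the regime of interest it is dominated by the cleaner $\gamma^{-1}$ appearing in the statement (equivalently, that $(\eta+1)(1-\gamma)\,\ee^{-\epsilon}\le 1$, or that one is content with the looser linearization). Everything else — the choice $\epsilon_1=\epsilon$, the monotonicity argument giving $\delta(\wh\epsilon)\le\delta$, and the identification of the $m\delta$ factor — follows immediately from Theorem~\ref{thm:negbin}.
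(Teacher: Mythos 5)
Your proposal follows essentially the same route as the paper's own proof — instantiate Theorem~\ref{thm:negbin} at $\epsilon_1=\epsilon$, reduce the claim to $\wh\epsilon\ge\epsilon$ via monotonicity of the privacy profile, and linearize the logarithm — and the obstacle you flag in the final step is genuine. In fact it is a gap in the paper's own proof, not a defect of your argument. The paper's chain of inequalities gives $\wh\epsilon \ge \epsilon_A - (\eta+1)\bigl(\epsilon_1 + \gamma^{-1}\delta(\epsilon_1)\bigr)$ (your version is the same up to the sharper factors $(1-\gamma)$ and $\ee^{-\epsilon}$ that you retain), so to conclude $\delta(\wh\epsilon)\le\delta$ one must take $\epsilon_A \ge (\eta+2)\epsilon + (\eta+1)\gamma^{-1}\delta$; the paper's final sentence silently drops the factor $(\eta+1)$ multiplying the $\delta$-term. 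What this route actually proves is that $A$ is $\bigl((\eta+2)\epsilon + (\eta+1)\gamma^{-1}\delta,\; m\delta\bigr)$-DP, which is exactly your ``honest'' conclusion; the advertised constant $\gamma^{-1}$ is justified precisely under your condition $(\eta+1)(1-\gamma)\ee^{-\epsilon}\le 1$ (e.g.\ when $\eta\le 0$, or $\epsilon\ge\log(\eta+1)$, or $\gamma\ge\tfrac{\eta}{\eta+1}$), but not in general. In particular, for the geometric case $\eta=1$ highlighted right after the corollary, with small $\epsilon$ and small $\gamma$ the remark's claim would have to read $(3\epsilon+2m\delta,\,m\delta)$ rather than $(3\epsilon+m\delta,\,m\delta)$.

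Two further points. First, varying the free parameter $\epsilon_1\neq\epsilon$ does not rescue the constant in the problematic regime: given only the hypothesis $\delta(\epsilon)\le\delta$, the worst-case profile is flat at $\delta$ for arguments above $\epsilon$ and jumps up below $\epsilon$, and for small $\gamma$ the quantity $\ee^{\epsilon_1}+\tfrac{1-\gamma}{\gamma}\delta(\epsilon_1)$ is then minimized at $\epsilon_1=\epsilon$, so the $(\eta+1)$ prefactor is unavoidable within this argument. Second, since Theorem~\ref{thm:negbin} is only an upper bound, none of this produces a counterexample to the corollary as stated; it shows that the corollary does not follow from Theorem~\ref{thm:negbin} by the paper's (or your) argument, and that your weakened form is the correct conclusion of it.
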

Notice that for the geometric distribution ($\eta=1$), Cor.~\ref{cor:pointwise} implies that
if $Q$ is $(\epsilon,\delta)$-DP, then $A$ is $ \big(3 \epsilon + m \, \delta, m \, \delta \big)$-DP.

We can show that for a fixed $\delta$ the $\epsilon$-value of the private selection algorithm is
$\mathcal{O}(\log^{\frac{1}{2}} \frac{m}{\delta})$, in case the base mechanism is Gaussian differentially private~\citep{dong2022gaussian}.

\begin{corollary}[$\epsilon$-values when $Q$ is GDP] \label{lem:GDP}
Let $K \sim \mathcal{D}_{\eta,\gamma}$. 
Suppose the base mechanism is dominated by the Gaussian     mechanism with noise parameter $\sigma > 0$ and $L_2$-sensitivity 1. Then, for a fixed $\delta>0$, the private selection algorithm $A$ is $(\epsilon,\delta)$-DP for
$$
\epsilon = (\eta+2) \left( \frac{1}{2 \sigma^2} + \frac{1}{\sigma} \sqrt{ 2 \log \frac{1}{\gamma \cdot \delta}}\right) + \delta.
$$
\end{corollary}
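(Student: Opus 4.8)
The plan is to feed a Gaussian tail bound into the distribution-specific bound of Theorem~\ref{thm:negbin}. That theorem reduces the whole problem to controlling $m\cdot\delta(\wh\epsilon)$, where $\delta(\cdot)$ is the privacy profile of the base mechanism $Q$, $\wh\epsilon = \epsilon - (\eta+1)\log\big(\ee^{\epsilon_1} + \tfrac{1-\gamma}{\gamma}\delta(\epsilon_1)\big)$, and $\epsilon_1 \ge 0$ is a free parameter at our disposal. Thus it suffices to exhibit a choice of $\epsilon$ and $\epsilon_1$ for which $m\cdot\delta(\wh\epsilon) \le \delta$, since by the $f$-DP characterization this is exactly the condition $H_{\ee^\epsilon}(A\|A') \le \delta$ for $(\epsilon,\delta)$-DP.

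First I would establish the Gaussian tail bound on the profile. Since $Q$ is dominated by the Gaussian mechanism with scale $\sigma$ and $L_2$-sensitivity $1$, the privacy-loss random variable under the first distribution is $\mathcal{N}\big(\tfrac{1}{2\sigma^2}, \tfrac{1}{\sigma^2}\big)$, so $\delta(\epsilon) \le \mathbb{P}\big(\mathcal{N}(\tfrac{1}{2\sigma^2},\tfrac{1}{\sigma^2}) > \epsilon\big) = \Phi\big(-\sigma(\epsilon - \tfrac{1}{2\sigma^2})\big)$ (this uses only $H_{\ee^\epsilon}(P\|Q) \le \mathbb{P}_P(\log \tfrac{P}{Q} > \epsilon)$ and domination). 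Applying $\Phi(-t) \le \ee^{-t^2/2}$ gives, for $\epsilon \ge \tfrac{1}{2\sigma^2}$,
\[
\delta(\epsilon) \le \exp\!\Big(-\tfrac{\sigma^2}{2}\big(\epsilon - \tfrac{1}{2\sigma^2}\big)^2\Big).
\]
The two consequences I need are: $\delta(\epsilon) \le \gamma\delta$ as soon as $\epsilon \ge \tfrac{1}{2\sigma^2} + \tfrac{1}{\sigma}\sqrt{2\log\tfrac{1}{\gamma\delta}} =: B$, and $\delta(\epsilon) \le \delta/m$ as soon as $\epsilon \ge \tfrac{1}{2\sigma^2} + \tfrac{1}{\sigma}\sqrt{2\log\tfrac{m}{\delta}}$.

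Next I would set $\epsilon_1 = B$. Then $\delta(\epsilon_1) \le \gamma\delta$, so $\tfrac{1-\gamma}{\gamma}\delta(\epsilon_1) \le (1-\gamma)\delta$, and using $\log(\ee^{\epsilon_1} + c) = \epsilon_1 + \log(1 + c\,\ee^{-\epsilon_1}) \le \epsilon_1 + c\,\ee^{-\epsilon_1}$ the subtracted term obeys $(\eta+1)\log\big(\ee^{\epsilon_1} + \tfrac{1-\gamma}{\gamma}\delta(\epsilon_1)\big) \le (\eta+1)\epsilon_1 + \delta$, where the residual collapses to at most $\delta$ under the mild condition $(\eta+1)(1-\gamma)\ee^{-\epsilon_1} \le 1$. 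Consequently $\wh\epsilon \ge \epsilon - (\eta+1)\epsilon_1 - \delta = \epsilon - (\eta+1)B - \delta$. Finally I impose $m\cdot\delta(\wh\epsilon) \le \delta$; for the geometric case $\eta=1$ one has $m = 1/\gamma$, so this is implied by $\wh\epsilon \ge B$, and taking $\epsilon = (\eta+2)B + \delta$ forces $\wh\epsilon \ge B$ and closes the argument, since substituting $B$ yields precisely the stated $\epsilon$.

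The main obstacle is the bookkeeping that couples the two tail bounds: the inner parameter $\epsilon_1$ must be pushed just far enough that $\delta(\epsilon_1)$ drops below $\gamma\delta$ (so the logarithmic correction collapses into a single additive $\delta$), while the outer level $\wh\epsilon$ must clear the threshold making $m\,\delta(\wh\epsilon) \le \delta$. The cleanest point is the identity $m = 1/\gamma$ for the geometric distribution, which lets $\log\tfrac{m}{\delta}$ and $\log\tfrac{1}{\gamma\delta}$ coincide; for general $\eta$ I would need to control $\mathbb{E}[K] = \tfrac{\eta(1-\gamma)}{\gamma(1-\gamma^{\eta})}$ against $1/\gamma$ to justify replacing $\log\tfrac{m}{\delta}$ by $\log\tfrac{1}{\gamma\delta}$, which is the delicate step and the place where the lower-order $+\delta$ term must be tracked carefully.
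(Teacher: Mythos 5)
Your proposal follows essentially the same route as the paper's proof: invoke Thm.~\ref{thm:negbin}, bound the privacy profile of the base mechanism through the Gaussian privacy-loss random variable $\mathcal{N}\big(\tfrac{1}{2\sigma^2},\tfrac{1}{\sigma^2}\big)$ and a Chernoff bound, choose the inner threshold $\epsilon_1$ so that the logarithmic correction collapses to $(\eta+1)\epsilon_1+\delta$, and set $\epsilon=(\eta+2)\epsilon_1+\delta$. The only substantive difference is the choice of $\epsilon_1$: the paper takes $\epsilon_1=\tfrac{1}{2\sigma^2}+\tfrac{1}{\sigma}\sqrt{2\log\tfrac{m}{\delta}}$, so that $\delta(\epsilon_1)\le\delta/m$ simultaneously collapses the correction term and yields $m\,\delta(\wh\epsilon)\le m\,\delta(\epsilon_1)\le\delta$ at the end, whereas you take $\epsilon_1=\tfrac{1}{2\sigma^2}+\tfrac{1}{\sigma}\sqrt{2\log\tfrac{1}{\gamma\delta}}$ to match the stated constant, which gives $\delta(\epsilon_1)\le\gamma\delta$ and hence only $m\,\delta(\wh\epsilon)\le m\gamma\delta$.

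The ``delicate step'' you flag is genuinely delicate, and it is worth being precise about what each choice buys. Your route closes whenever $m\gamma\le1$; since $\eta\mapsto\tfrac{1-\gamma^\eta}{\eta}$ is decreasing, one has $m\gamma=\tfrac{\eta(1-\gamma)}{1-\gamma^\eta}\le1$ exactly for $\eta\le1$, so your argument is in fact complete for all $\eta\le1$, not just the geometric case. The paper's route closes for every $\eta$, but what it actually proves is the corollary with $\sqrt{2\log\tfrac{m}{\delta}}$ in place of $\sqrt{2\log\tfrac{1}{\gamma\delta}}$; these coincide only when $m=1/\gamma$, i.e.\ $\eta=1$. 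For $\eta>1$ one has $m>1/\gamma$, so the paper's own proof yields a strictly larger $\epsilon$ than the statement claims, and the corollary as written is established by neither argument in that regime --- your hesitation points at a real inconsistency between the paper's statement and its proof rather than at a defect peculiar to your approach. A further small point in your favor: you track the factor $(\eta+1)$ multiplying the residual $\log(1+\cdot)$ term (via your condition $(\eta+1)(1-\gamma)\ee^{-\epsilon_1}\le1$), whereas the paper silently drops it, so strictly the paper's additive term should read $(\eta+1)\delta$ rather than $\delta$. These are lower-order-in-$\delta$ quibbles; the substance of the two proofs is the same.
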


Figure~\ref{fig:comparison_negbin} illustrates the various bounds for the truncated negative binomial distribution with $\eta=1$
when the base mechanism is the Gaussian mechanism with $\sigma=4$ and $L_2$-sensitivity 1, 
when $m =30,300$ and $3000$.
Figure~\ref{fig:comparison_negbin2} shows the increase of the $\epsilon$-values for a fixed %
value of $\delta$ and the bound of Lemma~\ref{lem:GDP} for comparison.

\begin{figure} [h!]
     \centering
        \includegraphics[width=.45\textwidth]{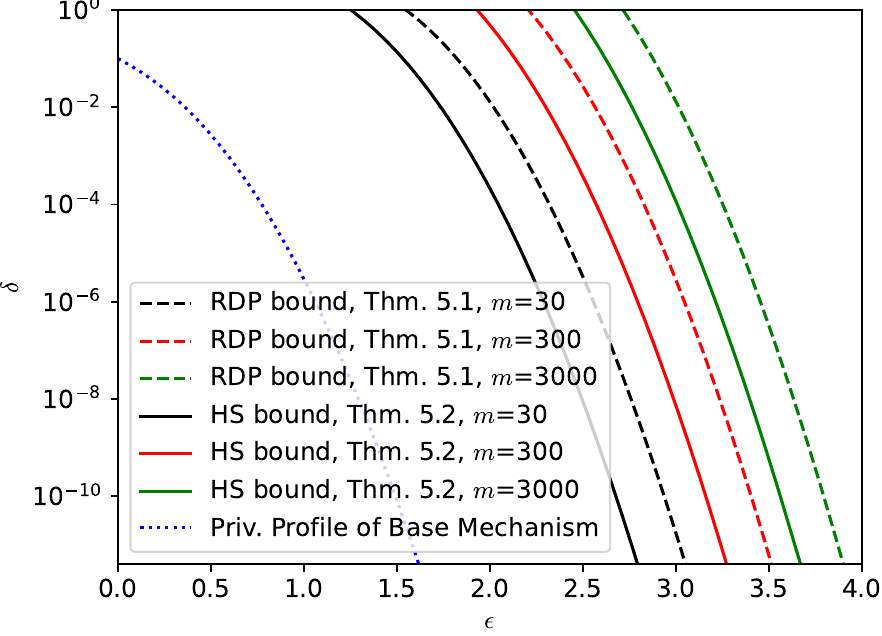}
        \caption{Comparison of various $(\epsilon,\delta)$-bounds when $K \sim \mathcal{D}_{\eta,\gamma}$ with $\eta=1$ (the geometric distribution, $m = \gamma^{-1}$) and $m=30,300,3000$. The base mechanism is the Gaussian mechanism with $L_2$-sensitivity 1 and noise parameter $\sigma=4$. 
        }
 	\label{fig:comparison_negbin}
\end{figure}

\begin{figure} [h!]
     \centering
        \includegraphics[width=.45\textwidth]{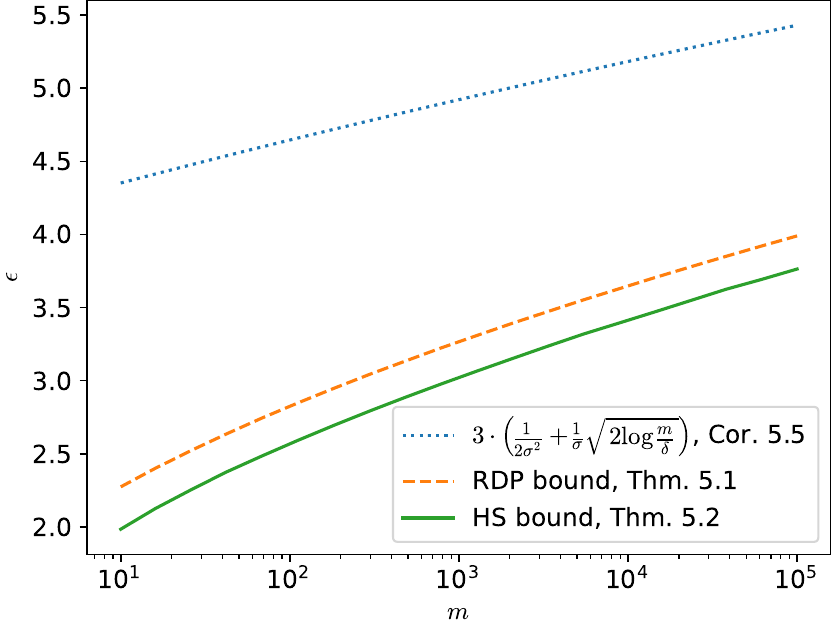}
        \caption{Growth of $\epsilon$-values for $\delta=10^{-6}$ for the RDP and hockey stick divergence based bounds as a function of $m$, when $K \sim \mathcal{D}_{\eta,\gamma}$ with $\eta=1$ and the base mechanism is the Gaussian mechanism with $L_2$-sensitivity 1 and noise parameter $\sigma=4$.
        The privacy profile bound given by Thm.~\ref{thm:negbin} retains the $\mathcal{O}(\log^{\frac{1}{2}} \frac{m}{\delta})$ growth of $\epsilon$-values from the DP RNM (See Fig.~\ref{fig:comparison_argmax2}).
        }
 	\label{fig:comparison_negbin2}
\end{figure}

\subsection{Binomial Distribution} \label{subsec:bin}

For choosing the distribution for $K$, we consider another alternative  that is suitable for smaller number of candidates.~\citet{papernot2022hyperparameter} consider as an alternative for the distribution of $K$
the Poisson distribution which is more concentrated than the truncated negative binomial distribution and thus gives an attractive choice from the practical point of view.
By considering $K$ to be binomially distributed, $K \sim \mathrm{Bin}(n,p)$, we can still further concentrate the distribution of $K$, with a small additional privacy cost. 
The hockey stick-based approach simplifies the analysis such that when compared to the RDP bounds for the case where $K$ is Poisson distributed, we essentially get much more concentrated $K$ for the same privacy cost using the binomial distribution.

Our bound for $K \sim \mathrm{Bin}(n,p)$ is a strict generalization of the Poisson distribution case, as we get the result for the Poisson distribution as a limit when $p \rightarrow 0$. The RDP bound to compare is the one given by~\citet{papernot2022hyperparameter}.

\begin{theorem}[\citealt{papernot2022hyperparameter}]  \label{thm:rdp_poisson}

Let $Q$  satisfy $\big(\alpha,\epsilon \big)$-RDP and $(\widehat{\epsilon},\widehat{\delta})$-DP for some $\alpha \in (1,\infty)$ and
$\epsilon,\widehat{\epsilon},\widehat{\delta} \geq 0$. 
Draw $K$ from a Poisson distribution with mean $m$.
Run $Q(X)$ for $K$ times. Then $A(X)$ returns the best value of those $K$ runs (also $Q$'s output).
If $K=0$, $A(X)$ returns some arbitrary output. If $\ee^{\widehat{\epsilon}} \leq 1 + \frac{1}{\alpha - 1}$, then $A$ satisfies $\big(\alpha,\epsilon'(\alpha) \big)$-RDP, where
$$
\epsilon'(\alpha) = \epsilon + m \cdot \widehat{\delta} + \frac{\log m}{\alpha - 1}.
$$
\end{theorem}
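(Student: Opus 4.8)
The plan is to specialize the R\'enyi instance of Theorem~\ref{thm:main_thm}, written out in~\eqref{eq:decompose}, to the Poisson law, and then to convert the resulting multiplicative generating-function factor into the additive terms $m\widehat\delta$ and $\tfrac{\log m}{\alpha-1}$ by feeding in the $(\widehat\epsilon,\widehat\delta)$-DP guarantee of $Q$. The key structural input is that for the max-based selection the post-processing $q_y$ appearing in Theorem~\ref{thm:main_thm} is simply the CDF of the quality score, $q_y=\P_{X}(\text{score}\le y)$, so its DP relationship to $q'_y$ is inherited from $Q$.

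First I would record the generating function: for $K\sim\mathrm{Poisson}(m)$ the PGF is $\varphi(z)=\ee^{m(z-1)}$, hence $\varphi'(z)=m\,\ee^{m(z-1)}$. Applying Theorem~\ref{thm:main_thm} with $f(z)=z^\alpha$ (i.e. the first inequality in~\eqref{eq:decompose}) gives
\begin{equation*}
\ee^{(\alpha-1)D_\alpha(A\|A')}=H_f(A\|A')\le \sum_{y\in\cY}\left(\frac{Q(y)}{Q'(y)}\right)^\alpha Q'(y)\cdot\frac{\varphi'(q_y)^\alpha}{\varphi'(q'_y)^{\alpha-1}}.
\end{equation*}
The whole proof then reduces to a uniform-in-$y$ bound on the Poisson factor $\varphi'(q_y)^\alpha/\varphi'(q'_y)^{\alpha-1}$.

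Substituting $\varphi'(z)=m\,\ee^{m(z-1)}$ yields $\varphi'(q_y)^\alpha/\varphi'(q'_y)^{\alpha-1}=m\,\ee^{m[\alpha q_y-(\alpha-1)q'_y-1]}$. Rewriting the exponent through the survival functions $u_y=1-q_y=\P_{X}(\text{score}>y)$ and $v_y=1-q'_y=\P_{X'}(\text{score}>y)$, the constant cancels and the exponent collapses to $m\big[(\alpha-1)v_y-\alpha u_y\big]$. Since thresholding at $y$ is a post-processing, the $(\widehat\epsilon,\widehat\delta)$-DP of $Q$ gives $v_y\le \ee^{\widehat\epsilon}u_y+\widehat\delta$ for every $y$, so
\begin{equation*}
(\alpha-1)v_y-\alpha u_y\le u_y\big[(\alpha-1)\ee^{\widehat\epsilon}-\alpha\big]+(\alpha-1)\widehat\delta.
\end{equation*}
The hypothesis $\ee^{\widehat\epsilon}\le 1+\tfrac{1}{\alpha-1}$ is exactly what makes $(\alpha-1)\ee^{\widehat\epsilon}-\alpha\le 0$; as $u_y\ge 0$ the bracket is nonpositive, so the exponent is at most $(\alpha-1)m\widehat\delta$ and $\varphi'(q_y)^\alpha/\varphi'(q'_y)^{\alpha-1}\le m\,\ee^{(\alpha-1)m\widehat\delta}$ uniformly. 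Pulling this factor out and using $(\alpha,\epsilon)$-RDP, $\sum_{y}(Q(y)/Q'(y))^\alpha Q'(y)=\ee^{(\alpha-1)D_\alpha(Q\|Q')}\le \ee^{(\alpha-1)\epsilon}$, gives $\ee^{(\alpha-1)D_\alpha(A\|A')}\le m\,\ee^{(\alpha-1)m\widehat\delta}\ee^{(\alpha-1)\epsilon}$, and taking $\tfrac{1}{\alpha-1}\log$ produces $\epsilon'(\alpha)=\epsilon+m\widehat\delta+\tfrac{\log m}{\alpha-1}$.

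The main obstacle is the step hidden inside ``$A(y)=Q(y)\varphi'(q_y)$'': this identity is exact only for continuous $\cY$. For finite $\cY$ the selection density is $A(y)=\ee^{-m(1-q_y)}(1-\ee^{-mQ(y)})$, and the one-sided estimates $1-\ee^{-x}\le x$ and $1-\ee^{-x}\ge x\ee^{-x}$ only give $A(y)\le Q(y)\varphi'(q_y)$ together with $A'(y)\ge Q'(y)\varphi'\big((q'_y)^-\big)$, i.e. a strict/non-strict threshold shift in the denominator. Propagating that shift through the DP step introduces a spurious $\ee^{\widehat\epsilon}Q(y)$ correction that vanishes only in the continuous limit. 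Controlling this discretization error, and invoking (as in~\citet{papernot2022hyperparameter}) that the finite-$\cY$ bound extends to the general case, is the delicate part; everything else is the substitution and the sign check enabled by the condition $\ee^{\widehat\epsilon}\le 1+\tfrac{1}{\alpha-1}$.
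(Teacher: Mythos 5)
Your derivation is correct, and it should be said up front that the paper itself contains no proof of Thm.~\ref{thm:rdp_poisson}: the statement is imported from \citet{papernot2022hyperparameter} purely as the RDP baseline against which the paper's hockey-stick bounds (Thm.~\ref{thm:hs_bin}, Cor.~\ref{cor:binomial_to_poisson}) are compared. The closest the paper comes is the remark containing \eqref{eq:decompose}, which notes that taking $f(z)=z^\alpha$ in Thm.~\ref{thm:main_thm} recovers Lemma~7 of \citet{papernot2022hyperparameter}, but it never carries out the Poisson-specific computation. Your proof completes exactly that route: substituting $\varphi'(z)=m\,\ee^{m(z-1)}$, passing to the survival probabilities $u_y=1-q_y$, $v_y=1-q'_y$, invoking $v_y\le \ee^{\wh\epsilon}u_y+\wh\delta$ (legitimate, since $q_y$ and $q'_y$ are images of $Q$ and $Q'$ under one common randomized threshold function, so the DP guarantee applies to the complementary event), and using $\ee^{\wh\epsilon}\le 1+\tfrac{1}{\alpha-1}$ to make the coefficient of $u_y$ nonpositive. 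This gives the uniform factor $m\,\ee^{(\alpha-1)m\wh\delta}$, and combining with the $(\alpha,\epsilon)$-RDP of $Q$ yields $\epsilon'(\alpha)=\epsilon+m\wh\delta+\tfrac{\log m}{\alpha-1}$; the algebra checks out. Structurally your argument is the RDP twin of the computation the paper \emph{does} write down for the hockey-stick divergence (inequality \eqref{eq:HS_bounding2} in the remark after Cor.~\ref{cor:binomial_to_poisson}), and what it buys is a self-contained verification that the cited baseline is itself a corollary of the paper's Thm.~\ref{thm:main_thm}.

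Two caveats. First, the ``main obstacle'' in your closing paragraph is not an obstacle for the proof you actually gave: you never need the identity $A(y)=Q(y)\varphi'(q_y)$. For finite $\cY$, Thm.~\ref{thm:main_thm} is proved from the exact representation \eqref{eq:RDP_A0} via Jensen's inequality and the coupling \eqref{eq:coupling}; the $\le$/$<$ threshold mismatch you worry about is precisely what that coupling (the common interpolation parameter $t_y$) repairs, and the theorem hands you \emph{some} admissible pair $(q_y,q'_y)$ produced by one post-processing. Since your bound on $\varphi'(q_y)^\alpha/\varphi'(q'_y)^{\alpha-1}$ is uniform over all such pairs, it applies to whichever pair the theorem produces, so no discretization error remains to be controlled. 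Second, a gap you inherit from the framework: for Poisson, $\P(K=0)=\ee^{-m}>0$, whereas Thm.~\ref{thm:main_thm} is formulated for $\P(K=0)=0$. The data-independent output at $K=0$ adds an atom of equal mass $\ee^{-m}$ to both $A$ and $A'$, contributing an extra $\ee^{-m}$ inside the logarithm (hence at most $\tfrac{\ee^{-m}}{m(\alpha-1)}$ additively to $\epsilon'$), which the stated constant does not accommodate. The paper explicitly waves this term away in its appendix, and the original proof in \citet{papernot2022hyperparameter} is where it is handled exactly; this is the one respect in which your sketch falls short of the cited statement.
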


For a hockey-stick divergence bound, we consider the binomially distributed $K$ which is a strict generalization of the Poisson case and includes its privacy profile bound as a special case.
We can derive the following result from Thm.~\ref{thm:main_thm} when using the PGF of the binomial distribution. 
\begin{theorem} \label{thm:hs_bin}
Let $K \sim \mathrm{Bin}(n,p)$ for some $n \in \mathbb{N}$ and $0< p < 1$, and let $\delta({\epsilon_1})$, ${\epsilon_1} \in \mathbb{R}$, define the privacy profile of the base mechanism $Q$.
Suppose 
$$
{\epsilon_1} \geq \log \left( 1 + \frac{p}{1-p} \cdot \delta(\epsilon_1) \right).
$$
Then, for $A$ and $A'$, the output distributions of the selection algorithm evaluated on neighboring datasets $X$ and $X'$, respectively, for all $\epsilon > 0$ and
for all $\epsilon_1 \geq 0$,
\begin{equation} \label{eq:bin_bound_0}
    H_{\ee^\epsilon}(A || A') \leq m \cdot \delta(\wh{\epsilon}), 
\end{equation}
where 
$$
\wh{\epsilon} = \epsilon - (n-1) \log \big(  1 +  p \cdot (\ee^{\epsilon_1} - 1) + p \cdot \delta(\epsilon_1)  \big).
$$
\end{theorem}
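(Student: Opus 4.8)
The plan is to start from the general hockey-stick bound of Theorem~\ref{thm:main_thm} with $f(z)=[z-\ee^\epsilon]_+$ and substitute the binomial PGF derivative. For $K\sim\mathrm{Bin}(n,p)$ one has $\varphi(z)=(1-p+pz)^n$, hence $\varphi'(z)=np\,(1-p+pz)^{n-1}=m\,\psi(z)$ with $m=\mathbb E K=np$ and $\psi(z):=(1-p+pz)^{n-1}$. Factoring the common scalar $m=\varphi'(1)$ out of every summand, Theorem~\ref{thm:main_thm} gives
\[
H_{\ee^\epsilon}(A\|A')\le m\sum_{y\in\mathcal Y}\big[\,Q(y)\psi(q_y)-\ee^\epsilon Q'(y)\psi(q'_y)\,\big]_+ .
\]
So it suffices to control each term $[\,Q(y)\psi(q_y)-\ee^\epsilon Q'(y)\psi(q'_y)\,]_+$ and recover the factor $m$ out front.

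The crux is a pointwise comparison of $\psi(q_y)$ and $\psi(q'_y)$. Since $q_y,q'_y$ are the same post-processing of $Q,Q'$, the privacy profile of $Q$ together with the data-processing inequality for the hockey-stick divergence gives $q_y\le \ee^{\epsilon_1}q'_y+\delta(\epsilon_1)$. Setting $\beta:=1-p+p\ee^{\epsilon_1}+p\,\delta(\epsilon_1)=1+p(\ee^{\epsilon_1}-1)+p\,\delta(\epsilon_1)$, I claim that under the hypothesis $\epsilon_1\ge\log\!\big(1+\tfrac{p}{1-p}\delta(\epsilon_1)\big)$ --- equivalently $(1-p)(\ee^{\epsilon_1}-1)\ge p\,\delta(\epsilon_1)$ --- one has $\psi(q_y)\le\beta^{\,n-1}\psi(q'_y)$ for every $y$. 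By monotonicity of $z\mapsto(1-p+pz)^{n-1}$ it is enough to prove $1-p+pq_y\le\beta\,(1-p+pq'_y)$, and using $q_y\le \ee^{\epsilon_1}q'_y+\delta(\epsilon_1)$ this reduces to $1-p+p\ee^{\epsilon_1}t+p\,\delta(\epsilon_1)\le\beta(1-p+pt)$ for $t=q'_y\in[0,1]$. The gap $g(t):=\beta(1-p+pt)-(1-p+p\ee^{\epsilon_1}t+p\,\delta(\epsilon_1))$ satisfies $g(1)=0$ and $g'(t)=p\big(p\,\delta(\epsilon_1)-(1-p)(\ee^{\epsilon_1}-1)\big)\le0$ exactly under the hypothesis, so $g$ is non-increasing and hence $g(t)\ge g(1)=0$ on $[0,1]$, proving the claim.

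To finish I would assemble these ingredients so as to avoid an extraneous $\beta^{n-1}$ factor. The right move is to apply the comparison to the \emph{subtracted} term, i.e.\ $\psi(q'_y)\ge\beta^{-(n-1)}\psi(q_y)$: lowering $\psi(q'_y)$ only increases the bracket, giving
\[
\big[\,Q(y)\psi(q_y)-\ee^\epsilon Q'(y)\psi(q'_y)\,\big]_+\le \psi(q_y)\big[\,Q(y)-\ee^{\wh\epsilon}Q'(y)\,\big]_+,\qquad \ee^{\wh\epsilon}=\ee^\epsilon\beta^{-(n-1)} .
\]
Since $0\le\psi(q_y)\le1$ (because $q_y\le1$), the right-hand side is at most $[\,Q(y)-\ee^{\wh\epsilon}Q'(y)\,]_+$. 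Summing over $y$ and restoring the factor $m$ yields $H_{\ee^\epsilon}(A\|A')\le m\sum_y[\,Q(y)-\ee^{\wh\epsilon}Q'(y)\,]_+=m\,H_{\ee^{\wh\epsilon}}(Q\|Q')\le m\,\delta(\wh\epsilon)$, with $\wh\epsilon=\epsilon-(n-1)\log\beta$, as claimed.

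The main obstacle is exactly this last bookkeeping. The naive route --- using $\psi(q_y)\le\beta^{n-1}\psi(q'_y)$ on the positive term and then $\psi(q'_y)\le1$ --- rescales $Q(y)$ by $\beta^{n-1}$ and produces the loose estimate $m\,\beta^{n-1}\delta(\wh\epsilon)$. Routing the factor through the negative term instead lets $\beta^{-(n-1)}$ act only on $\ee^\epsilon$ (shifting $\epsilon\mapsto\wh\epsilon$) while $\psi(q_y)\le1$ disposes of the positive term cleanly. I would also verify the boundary case $n=1$ (where $\psi\equiv1$ and $\wh\epsilon=\epsilon$) and confirm the equivalence of the stated hypothesis with $(1-p)(\ee^{\epsilon_1}-1)\ge p\,\delta(\epsilon_1)$, which is what makes $g'\le0$ and places the maximum of the relevant ratio at $t=1$.
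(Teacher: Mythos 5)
Your proposal is correct and follows essentially the same route as the paper's proof: both derive the key ratio bound $\varphi'(q_y)/\varphi'(q'_y) \leq \big(1 + p(\ee^{\epsilon_1}-1) + p\,\delta(\epsilon_1)\big)^{n-1}$ from $q_y \leq \ee^{\epsilon_1} q'_y + \delta(\epsilon_1)$ under the stated hypothesis (the paper via a monotone-rational-function lemma, you via an equivalent linear gap argument with maximum at $q'_y = 1$), and both close by shifting $\epsilon \mapsto \wh\epsilon$ and using $\varphi'(q_y) \leq np = m$, which is exactly your $\psi(q_y) \leq 1$ after factoring out $m$.
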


We get the hockey-stick divergence bound for the case $K \sim \mathrm{Poisson}(m)$ as a corollary of Thm.~\ref{thm:hs_bin}.

\begin{corollary} \label{cor:binomial_to_poisson}
Let $K \sim \mathrm{Poisson}(m)$ for some $m \in \mathbb{N}$, and let $\delta({\epsilon_1})$, ${\epsilon_1} \in \mathbb{R}$, define the privacy profile of the base mechanism $Q$.
Then, for $A$ and $A'$, the output distributions of the selection algorithm evaluated on neighboring datasets $X$ and $X'$, respectively, and
for all $\epsilon > 0$ and for all $\epsilon_1 \geq 0$,
\begin{equation} \label{eq:poisson_bound_0}
    H_{\ee^\epsilon}(A || A') \leq m \cdot \delta(\wh{\epsilon}), 
\end{equation}
where 
$$
\wh{\epsilon} = \epsilon - m \cdot ( \ee^{\epsilon_1} - 1 ) - m \cdot \delta(\epsilon_1).
$$    
\end{corollary}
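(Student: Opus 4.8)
The plan is to obtain Corollary~\ref{cor:binomial_to_poisson} as the limiting case of Theorem~\ref{thm:hs_bin} under the classical Poisson approximation of the binomial. Fix $m \in \mathbb{N}$ and $\epsilon_1 > 0$, and set $p = p_n := m/n$ so that $\E[K_n] = n p_n = m$ is held constant while $n \to \infty$. Writing $K_n \sim \mathrm{Bin}(n, m/n)$ and letting $A_n, A_n'$ denote the corresponding selection outputs on $X, X'$, Theorem~\ref{thm:hs_bin} supplies, for every sufficiently large $n$, the bound $H_{\ee^\epsilon}(A_n \| A_n') \le m \cdot \delta(\wh\epsilon_n)$ with $\wh\epsilon_n = \epsilon - (n-1)\log\bigl(1 + p_n(\ee^{\epsilon_1}-1) + p_n \delta(\epsilon_1)\bigr)$. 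The corollary then follows by passing to the limit on both sides, once I (i) check the hypothesis of Theorem~\ref{thm:hs_bin} eventually holds, (ii) identify $\lim_n \wh\epsilon_n$ with the Poisson exponent, and (iii) justify convergence of the left-hand side.

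For (i), the condition $\epsilon_1 \ge \log\bigl(1 + \tfrac{p_n}{1-p_n}\delta(\epsilon_1)\bigr)$ has right-hand side $\log\bigl(1 + \tfrac{m}{n-m}\delta(\epsilon_1)\bigr) \to \log 1 = 0$, so it holds for all large $n$ whenever $\epsilon_1 > 0$; the boundary value $\epsilon_1 = 0$ is recovered at the end by letting $\epsilon_1 \downarrow 0$ and using continuity of $\delta$. For (ii), set $c = (\ee^{\epsilon_1}-1) + \delta(\epsilon_1)$, so the subtracted term is $(n-1)\log\bigl(1 + \tfrac{mc}{n}\bigr)$; a first-order expansion of the logarithm gives $(n-1)\log(1 + mc/n) \to mc$, hence $\wh\epsilon_n \to \epsilon - m(\ee^{\epsilon_1}-1) - m\delta(\epsilon_1) =: \wh\epsilon$, which is exactly the exponent in the corollary, and continuity of the privacy profile $\delta$ then yields $m\cdot\delta(\wh\epsilon_n) \to m\cdot\delta(\wh\epsilon)$.

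The main obstacle is (iii), the convergence $H_{\ee^\epsilon}(A_n\|A_n') \to H_{\ee^\epsilon}(A\|A')$ towards the Poisson selection outputs $A, A'$. I would handle this through total variation: the selection output law is a fixed (mixture) post-processing of the law of $K$, so by the data-processing inequality $d_{\TV}(A_n, A) \le d_{\TV}\bigl(\mathrm{Bin}(n, m/n), \mathrm{Poisson}(m)\bigr)$, and Le Cam's inequality bounds the latter by $n (m/n)^2 = m^2/n \to 0$ (and likewise for $A_n', A'$). Since $H_{\ee^\epsilon}(P\|R) = \sum_{y \in \mathcal{Y}}\,[P(y) - \ee^\epsilon R(y)]_+$ and $x \mapsto [x]_+$ is $1$-Lipschitz, the hockey-stick divergence satisfies $|H_{\ee^\epsilon}(A_n\|A_n') - H_{\ee^\epsilon}(A\|A')| \le \|A_n - A\|_1 + \ee^\epsilon \|A_n' - A'\|_1 \to 0$. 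Passing to the limit in $H_{\ee^\epsilon}(A_n\|A_n') \le m\cdot\delta(\wh\epsilon_n)$ delivers the claim. As an alternative that bypasses the limiting argument entirely, I could re-run the proof of Theorem~\ref{thm:hs_bin} verbatim with the Poisson PGF derivative $\varphi'(z) = m\,\ee^{m(z-1)}$: the uniform ratio bound then reads $\varphi'(q_y)/\varphi'(q'_y) = \ee^{m(q_y - q'_y)} \le \ee^{m[(\ee^{\epsilon_1}-1)+\delta(\epsilon_1)]}$, using $q_y \le \ee^{\epsilon_1} q'_y + \delta(\epsilon_1)$ and $q'_y \le 1$, which reproduces the exponent $\wh\epsilon$ directly.
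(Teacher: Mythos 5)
Your proposal is correct and follows essentially the same route as the paper: the paper's own proof likewise takes the limit of Theorem~\ref{thm:hs_bin} with $p = m/n$, using Le Cam's inequality and total-variation convergence of $\mathrm{Bin}(n,m/n)$ to $\mathrm{Poisson}(m)$ to transfer the bound, together with $(n-1)\log(1+mc/n) \to mc$; and your alternative direct argument via the Poisson PGF $\varphi'(z) = m\,\ee^{m(z-1)}$ is exactly the paper's follow-up remark. Your write-up is in fact slightly more careful than the paper's on the supporting details (explicitly verifying the hypothesis of Theorem~\ref{thm:hs_bin} for large $n$, the $\epsilon_1 = 0$ boundary case, and the Lipschitz continuity of the hockey-stick divergence with respect to total variation).
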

The proof of Cor.~\ref{cor:binomial_to_poisson} essentially follows from the fact that $\mathrm{Bin}(n,m/n)$ approaches $\mathrm{Poisson}(m)$ in total variation distance  as $n$ grows and that the bound~\eqref{eq:bin_bound_0} approaches the bound~\eqref{eq:poisson_bound_0} as $n$ grows. 
Figure~\ref{fig:comparison_bin} illustrates that when compared to the RDP bound of Thm.~\ref{thm:rdp_poisson} for the Poisson distributed $K$, 
we can obtain much smaller probabilities for small values of $K$ for the same privacy cost when using $K \sim \mathrm{Bin}(n,m/n)$ and Thm.~\ref{thm:hs_bin}.

\begin{figure} [h!]
     \centering
        \includegraphics[width=.47\textwidth]{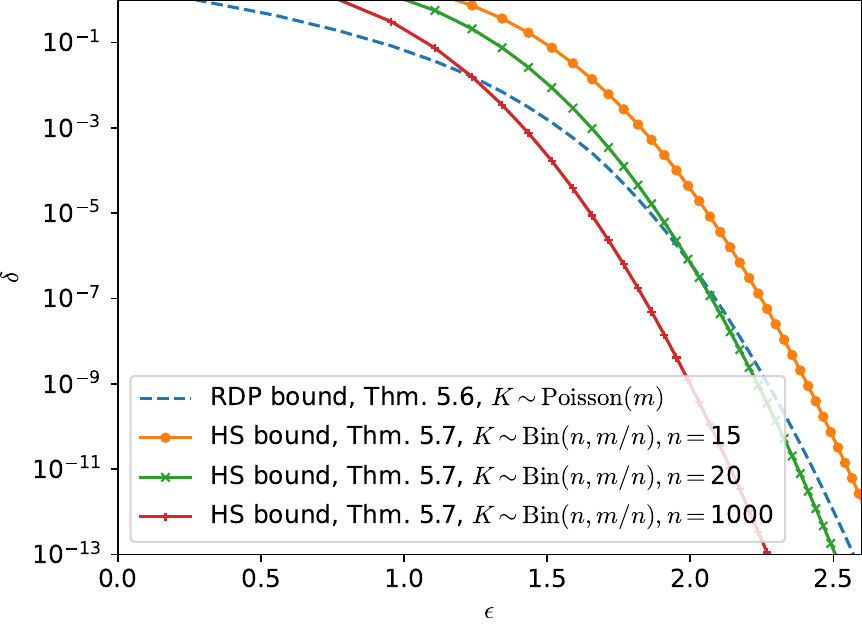}
        \includegraphics[width=.47\textwidth]{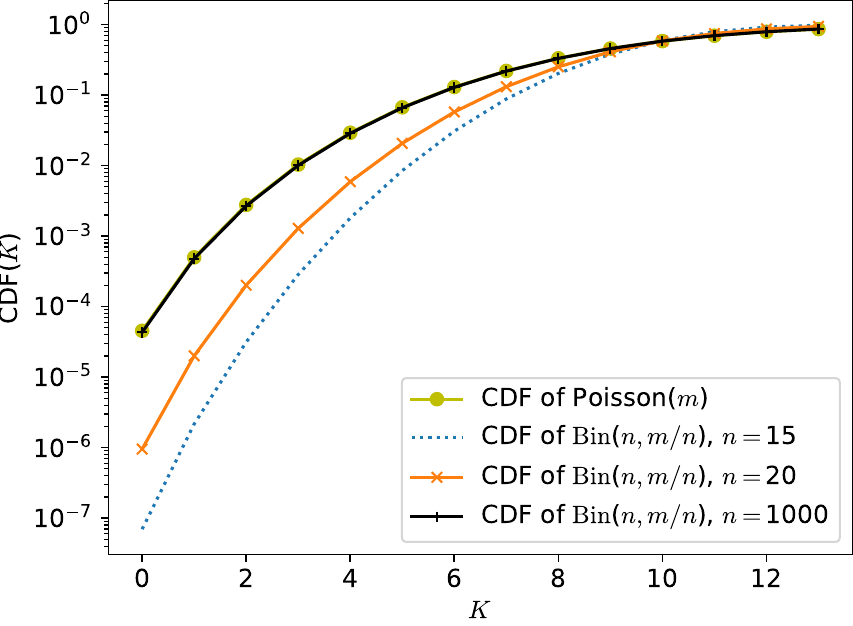}
        \caption{Top: Comparison of the bound of Thm.~\ref{thm:hs_bin} for $K \sim \textrm{Bin}(n,m/n)$ for different values of $n$, when $m=10$, and the RDP bound of Thm.~\ref{thm:rdp_poisson}. The base mechanism is the Gaussian mechanism with sensitivity 1 and $\sigma=4.0$. 
        Bottom: Comparison of the CDFs for different values of $n$. When comparing to the RDP bound, we see that at $\delta \approx 10^{-6}$ we get more concentrated $K$ for free by using the binomial distribution and Thm.~\ref{thm:hs_bin}.}
 	\label{fig:comparison_bin}
\end{figure}

\section{Applications} 

\subsection{Hyperparameter tuning for Propose-Test-Release}

Propose-Test-Release (PTR)~\citep{dwork2009differential} is one of the most versatile recipes for data-adaptive DP mechanism design. In its vanilla form, it involves three steps: (1) propose (or guess) a bound of the local sensitivity; (2) privately test whether the bound is valid; (3) If it passes the test, calibrate noise proportional to the proposed bound; otherwise, refuse to answer. Recently,~\citet{redberg2023generalized} generalized this approach by considering data-adaptive privacy losses instead.
The biggest challenge to apply the method is to know which bound to propose. The data-dependent privacy loss will depend on both the dataset and the hyperparameters of the query (as an example, think of the noise scale in additive noise mechanisms). To tune these hyperparameters we consider the private selection algorithm with geometrically distributed $K$.
The tricky issue with PTR is that PTR does not satisfy R\'enyi DP, and thus disqualifies the approach from \citet{papernot2022hyperparameter}.  Meanwhile, our methods deal with $(\varepsilon,\delta)$-DP and $\delta$-approximate Gaussian DP very naturally. Using our Thm.~\ref{thm:negbin} we can select the best threshold to propose a large number of candidates without resorting to composition. 
We have the following result for  Generalized PTR with an $(\wh\epsilon,\wh\delta)$-DP test.
We refer to~\citet{redberg2023generalized} for more details.
 \begin{theorem}[\citealt{redberg2023generalized}] \label{thm:gptr}
Consider a proposal $\phi$ and a data-dependent function $\epsilon_\phi(X)$ w.r.t. $\delta>0$. Suppose that we have an 
$(\wh\epsilon,\wh\delta)$-DP test $\mathcal{T} : \mathcal{X} \rightarrow \{0,1\}$ such that when $\epsilon_\phi(X) > \epsilon$,
$\mathcal{T}(X) = 1$ with probability $\delta'$ and 0 with probability $1-\delta'$. Then the Generalized PTR algorithm~\citep[][Alg.\;2]{redberg2023generalized} is $(\epsilon+\wh\epsilon,\delta+\wh\delta+\delta')$-DP.
\end{theorem}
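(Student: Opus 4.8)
The Generalized PTR algorithm (Alg.~2 of \citealt{redberg2023generalized}) first runs the $(\wh\epsilon,\wh\delta)$-DP test $\cT$, and then, if $\cT(X)=1$, releases the output $\cM_\phi(X)$ of the base mechanism calibrated to the proposal $\phi$, and otherwise outputs a null symbol $\perp$. The plan is to verify the $(\epsilon+\wh\epsilon,\delta+\wh\delta+\delta')$-DP guarantee directly from Definition~\ref{def:indistinguishability}, bounding $\P(A(X)\in S)$ for neighbors $X \sim X'$ and an arbitrary measurable $S$ in the augmented output space $\mathcal{O}\cup\{\perp\}$. The key structural fact I would exploit is that, by construction of $\epsilon_\phi$, releasing $\cM_\phi(X)$ satisfies $\P(\cM_\phi(X)\in S_0)\le \ee^{\epsilon}\P(\cM_\phi(X')\in S_0)+\delta$ on the \emph{good region} $G=\{X:\epsilon_\phi(X)\le\epsilon\}$, where $S_0=S\setminus\{\perp\}$.

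First I would split $\P(A(X)\in S)$ according to the test outcome and to whether $X$ lies in $G$, and handle the three resulting pieces separately. The mass placed on $\perp$ is governed solely by the test: since $\cT$ is $(\wh\epsilon,\wh\delta)$-DP, $\P(\cT(X)=0)\le \ee^{\wh\epsilon}\P(\cT(X')=0)+\wh\delta$, which already fits under the target $\ee^{\epsilon+\wh\epsilon}$ factor with slack $\wh\delta$. For the release mass when $X\in G$, the defining property of the data-dependent privacy loss supplies the $(\epsilon,\delta)$ bound above, which is to be combined with the test's guarantee on the event $\{\cT=1\}$. Finally, when $X\notin G$ the release mass is controlled by the hypothesis that $\cT(X)=1$ with probability at most $\delta'$, so the entire contribution of the bad region is at most $\delta'$.

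To turn these three bounds into a single composition that pays each privacy cost only once, I would introduce an analysis-only surrogate $\wt A$ that runs the same test but, upon passing, releases a modified mechanism $\wt\cM$ which coincides with $\cM_\phi$ on $G$ and is $(\epsilon,\delta)$-DP everywhere. By the adaptive composition theorem applied to the test ($(\wh\epsilon,\wh\delta)$-DP) followed by the conditional release of $\wt\cM$ ($(\epsilon,\delta)$-DP for each fixed test outcome), $\wt A$ is $(\epsilon+\wh\epsilon,\delta+\wh\delta)$-DP; and since $A$ and $\wt A$ differ only on the event that the test passes on a dataset outside $G$, which by hypothesis has probability at most $\delta'$, I would transfer the guarantee from $\wt A$ to $A$ by a coupling argument that charges this discrepancy to the failure probability. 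The hard part will be precisely this last step: the test and the release are coupled through the pass/fail event, so one cannot simply multiply the two $(\epsilon,\delta)$-guarantees without double-counting the test's cost, and one must argue carefully enough to keep the bad-region contribution additive as $\delta'$ rather than inflated by an $\ee^{\epsilon+\wh\epsilon}$ factor, while also checking the symmetric inequality obtained by swapping $X$ and $X'$, whose good/bad status may differ.
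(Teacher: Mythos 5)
First, a caveat on the comparison: this theorem is quoted from \citet{redberg2023generalized} as background for the PTR application, and the present paper contains no proof of it, so your attempt can only be judged against the standard argument establishing it. Your skeleton is right in two of its three pieces: in the bad case $\epsilon_\phi(X)>\epsilon$ the released (non-$\perp$) mass is at most $\P(\cT(X)=1)\le\delta'$, and the $\perp$-mass is controlled by the test's DP, giving $\P(A(X)\in S)\le \ee^{\wh\epsilon}\P(A(X')\in S)+\wh\delta+\delta'$ there. The genuine gap is the good-region case, exactly the step you yourself flagged as hard, and the surrogate route does not close it. (i) The surrogate $\wt\cM$ (agreeing with $\cM_\phi$ on $G$ and $(\epsilon,\delta)$-DP everywhere) is asserted, never constructed, and its existence is not automatic: the hypothesis constrains $\cM_\phi$ only on pairs with an endpoint in $G$, whereas a globally DP extension must also satisfy the inequality across pairs of bad datasets and across the boundary of $G$; no default release on the complement of $G$ achieves this in general. (ii) Even granting $\wt\cM$, the transfer is lossy exactly where you feared: since $\wt A(X')$ and $A(X')$ may differ with probability up to $\delta'$, one gets $\P(A(X)\in S)=\P(\wt A(X)\in S)\le \ee^{\epsilon+\wh\epsilon}\P(\wt A(X')\in S)+\delta+\wh\delta\le \ee^{\epsilon+\wh\epsilon}\P(A(X')\in S)+\delta+\wh\delta+\ee^{\epsilon+\wh\epsilon}\delta'$, which is strictly weaker than the claim.

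The missing idea is to drop the global surrogate and argue on the fixed ordered pair $(X,X')$, paying each privacy cost once and sequentially. If $\epsilon_\phi(X)\le\epsilon$, the definition of the data-dependent privacy loss already gives, for this very pair, $\P(\cM_\phi(X)\in S_0)\le\min\{1,\ee^{\epsilon}\P(\cM_\phi(X')\in S_0)+\delta\}$ with $S_0=S\setminus\{\perp\}$; no extension to other datasets is needed. Write $\P(A(X)\in S)=\P(\cT(X)=1)\P(\cM_\phi(X)\in S_0)+\mathbf{1}[\perp\in S]\,\P(\cT(X)=0)$, substitute the bound above (one additive $\delta$, using $\min\{1,x+\delta\}\le\min\{1,x\}+\delta$), so that $\P(A(X)\in S)\le \E[g(\cT(X))]+\delta$ where $g(1):=\min\{1,\ee^{\epsilon}\P(\cM_\phi(X')\in S_0)\}$ and $g(0):=\mathbf{1}[\perp\in S]$; then, because $g$ is $[0,1]$-valued, the test's DP inequality extends to expectations, $\E[g(\cT(X))]\le \ee^{\wh\epsilon}\E[g(\cT(X'))]+\wh\delta$, and finally $\E[g(\cT(X'))]\le \ee^{\epsilon}\P(A(X')\in S)$. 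This yields $\P(A(X)\in S)\le \ee^{\epsilon+\wh\epsilon}\P(A(X')\in S)+\delta+\wh\delta$, paying $\wh\delta$ exactly once — note that your ``three separate pieces'' plan would charge $\wh\delta$ to both the $\{\cT=1\}$ and $\{\cT=0\}$ pieces — while $\delta'$ stays confined to the bad case. It also dissolves your worry about swapping $X$ and $X'$: the DP definition quantifies over ordered pairs, and the case split is made on the numerator dataset, so running both cases is all that is required.
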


Our approach is to wrap the private selection algorithm around Generalized PTR, and tune the parameter $\phi$. We use the point-wise guarantees given by Thm.~\ref{thm:gptr} for Gen. PTR and our Thm.~\ref{thm:rdp_negbin} for the tuning algorithm. To illustrate the effectiveness of this approach, we consider a linear regression problem on two UCI benchmark data sets~\citep{UCI}, see Fig.~\ref{fig:comparison_OPS}.
We apply  Generalized PTR to the one-posterior sample (OPS) algorithm described in~\citep{redberg2023generalized} which includes privately releasing the $L_2$-norm of the non-private solution and also the smallest eigenvalue of the feature covariance matrix.
As baselines we have the same approach using the privacy bounds of~\citet[Thm.\;3.5,][]{liu2019private},
the output perturbation method~\citep{Chaudhuri2011} and the non-adaptive method OPS-Balanced by~\citet{wang2018revisiting}.


\begin{figure}  [ht!]
     \centering
        \includegraphics[width=.45\textwidth]{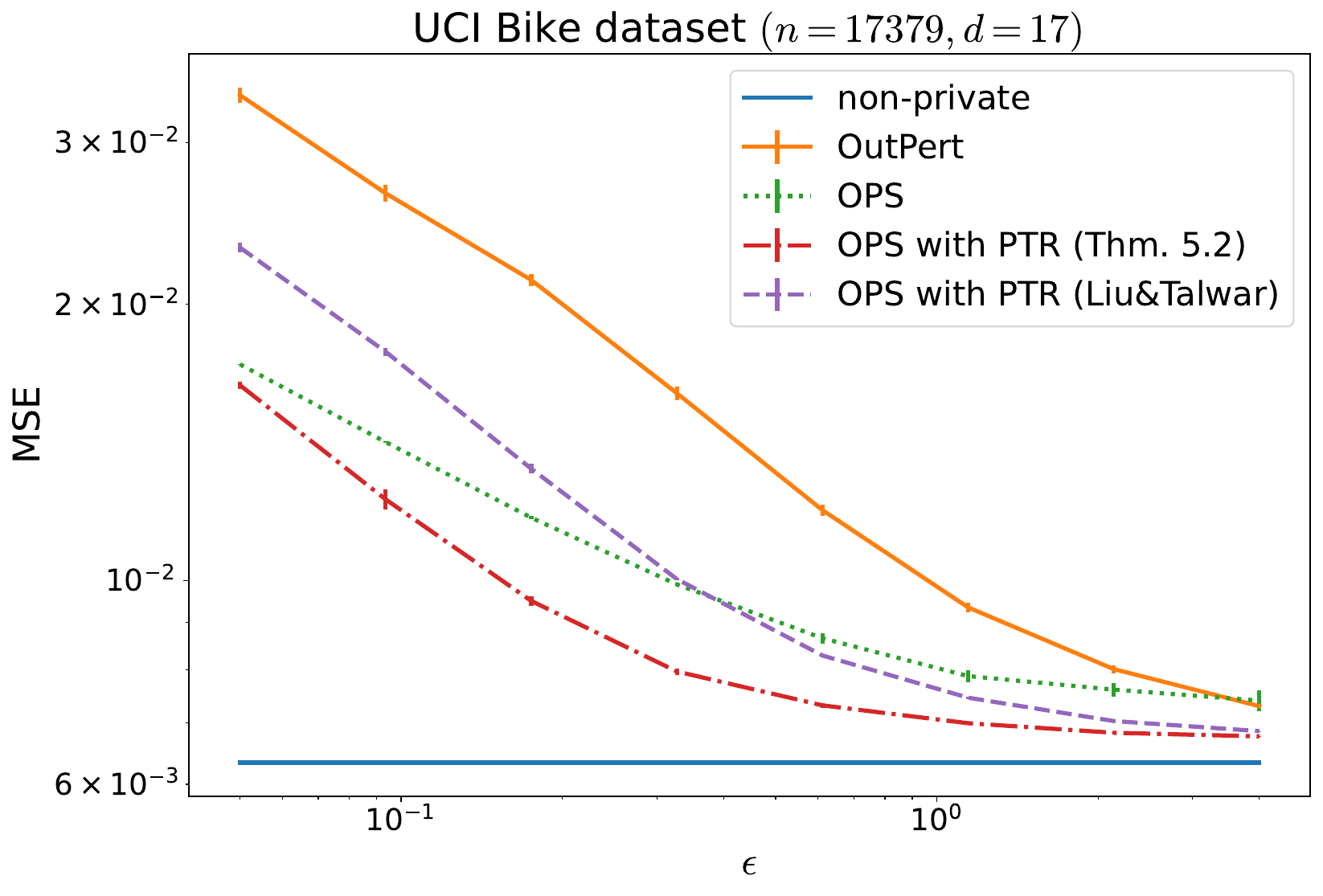}
        \includegraphics[width=.45\textwidth]{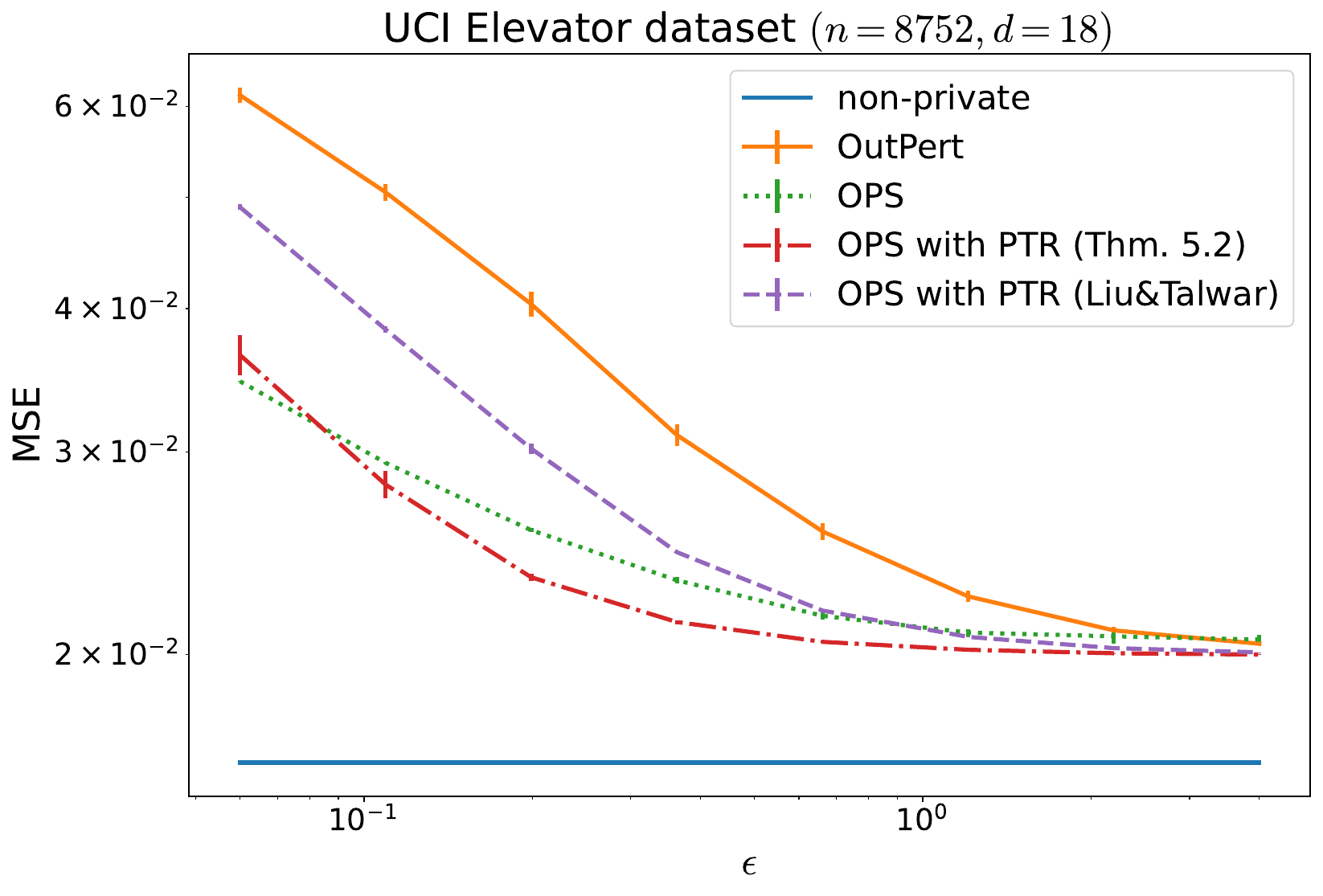}
        \caption{Linear regression problem on two UCI benchmark datasets. Tuning OPS-PTR (i.e., generalized PTR applied to the one-posterior sample algorithm) via the private selection algorithm outperforms baseline methods when the privacy cost of the tuning procedure is calculated using our Thm.~\ref{thm:negbin}.}
 	\label{fig:comparison_OPS}
\end{figure}

\subsection{Private Hyperparameter Tuning of DP-SGD}

Our results enable using numerical accountants for computing the privacy profile $\delta(\epsilon)$ for the subsampled Gaussian mechanism~\citep[see, e.g.,][]{koskela2021tight,gopi2021,zhu2021optimal}. We consider the simplest case, i.e., the Poisson subsampling and the add/remove neighborhood relation of datasets. We apply the numerical method proposed by~\citet{koskela2021tight} to the dominating pairs given in~\citep[Thm.\;11,][]{zhu2021optimal} to obtain accurate privacy profiles for the base mechanism. We remark that~\citet{zhu2021optimal} give privacy profiles for various subsampling schemes under both add/remove and substitute neighborhood relations of datasets. With these results, one can numerically construct the  dominating pair of distributions using the methods of~\citet{doroshenko2022connect} and also obtain upper bounds for compositions as well. 



We find that our bounds are tighter than the RDP bounds across a variety of parameter combinations.
Figure~\ref{fig:dpsgd1} shows comparisons with parameters taken from an example of~\citep{papernot2022hyperparameter}. RDP parameters are evaluated using the Opacus library~\citep{opacus}. Often, using larger batch sizes and noise ratios leads to increased privacy-utility tradeoff~\citep{de2022unlocking,ponomareva2023dp}. Figure~\ref{fig:dpsgd2} shows comparisons in a setting of such a high-accuracy experiment~\citep{de2022unlocking}.

\begin{figure} [ht!]
     \centering
        \includegraphics[width=.45\textwidth]{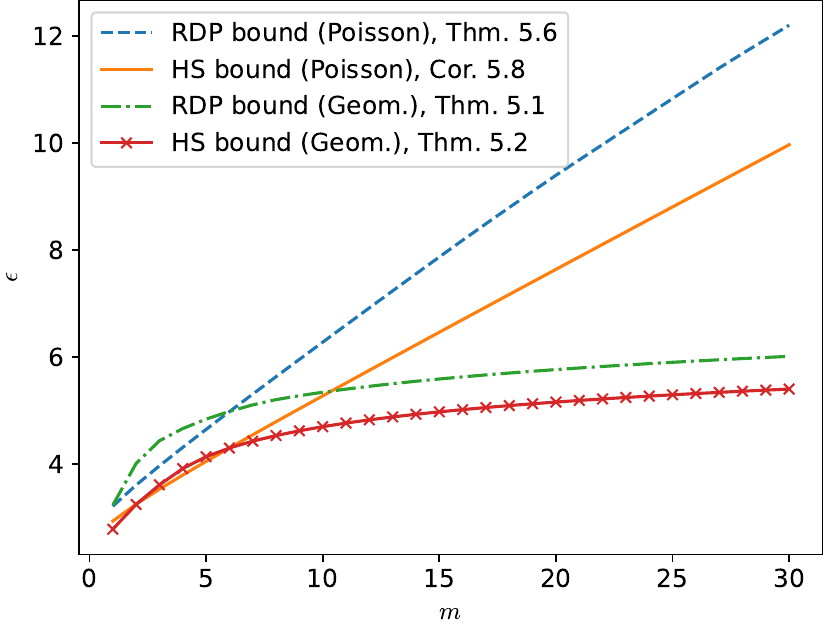}
        \caption{Comparison of the hockey-stick divergence bounds of Thm.~\ref{thm:negbin} and Cor.~\ref{cor:binomial_to_poisson}, and the RDP bounds of Thm.~\ref{thm:rdp_negbin} and~\ref{thm:rdp_poisson} for the private selection for a Poisson-subsampled Gaussian mechanism with   subsampling ratio $q=256/60000$, noise parameter $\sigma=1.1$ and number of steps $T= \ceil[]{60/q }$.}
 	\label{fig:dpsgd1}
\end{figure}

\begin{figure} [ht!]
     \centering
        \includegraphics[width=.45\textwidth]{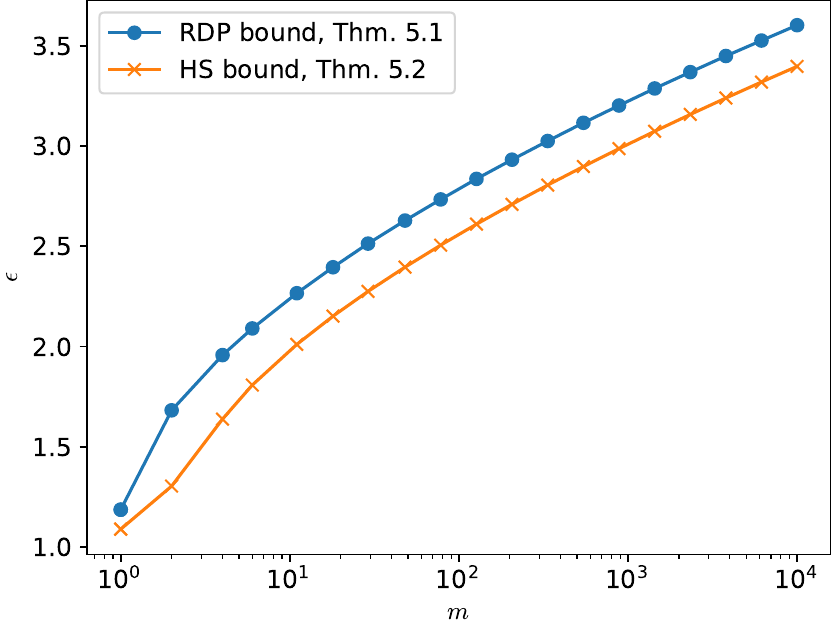}
        \caption{Comparison of the bounds, when the base mechanism is a Poisson-subsampled Gaussian mechanism with the parameters $q=16384/50000$, $\sigma=21.1$ and $T=250$~\citep[see Table 18 in][]{de2022unlocking}. We see that the bound of Thm.~\ref{thm:negbin} allows evaluating approximately 3 times as many models as the RDP bound.}
 	\label{fig:dpsgd2}
\end{figure}

\paragraph{Adjusting Hyperparameters.} 
One important question is, how to adjust the hyperparameters in case we are optimizing parameters that affect the privacy guarantees themselves. For example, one may consider tuning the noise parameter $\sigma$ in DP-SGD in which case one may also consider adjusting the length of the training. 

The bound of Thm.~\ref{thm:negbin} essentially requires only point-wise information about the privacy profile of the base mechanism $Q$. However, the bound can be optimized w.r.t. the privacy profile of $Q$ which may affect it considerably. 
For finding ideal point-wise DP thresholds, we consider the following procedure. Suppose the base mechanisms all satisfy  $(\epsilon_Q,\delta)$ for some $\epsilon_Q>0$ and $\delta>0$. As the privacy profiles start to resemble those of the Gaussian mechanism for large numbers of compositions~\citep{dong2022gaussian}, we carry out the optimization of the upper bound of Thm.~\ref{thm:negbin} w.r.t. to the privacy profile of a Gaussian mechanism (GM) that is adjusted to be $(\epsilon_Q,\delta)$-DP. More specifically, we first adjust the noise scale $\sigma$ of the GM such that it is $(\epsilon_Q,\delta)$-DP, to obtain a privacy profile $\delta_Q(\epsilon)$. Then, in case we are using $K \sim \mathcal{D}_{\eta,\gamma}$, we carry out the minimization  
$$
\epsilon_1 = \argmin_{\epsilon} \log\left( \ee^{\epsilon} + \frac{1-\gamma}{\gamma} \delta_Q(\epsilon) \right).
$$
and set $\delta_1 = \delta_Q(\epsilon_1)$. This results in the first threshold $(\epsilon_1,\delta_1)$.
Using the same privacy profile we find an $\wh \epsilon$-value where this GM is $(\wh \epsilon,\delta/m)$-DP. The following  corollary result of Thm.~\ref{thm:negbin} then gives the
$\epsilon$-value for which the private selection algorithm is $(\epsilon,\delta)$-DP.  
\begin{corollary}  \label{cor:adjust}
    Suppose the base mechanism $Q$ is $(\epsilon_1,\delta_1)$-DP and  $(\wh \epsilon,\delta/m)$-DP for some $\epsilon_1\geq0$ and $\wh \epsilon \geq 0$. Then, the private selection algorithm with $K \sim \mathcal{D}_{\eta,\gamma}$ is $(\epsilon,\delta)$-DP for 
    $$
    \epsilon = \wh \epsilon + (\eta+1) \log\left( \ee^{\epsilon_1} + \frac{1-\gamma}{\gamma} \delta_1 \right).
    $$
\end{corollary}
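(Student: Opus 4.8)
The plan is to read the result off directly from Thm.~\ref{thm:negbin} together with the hockey-stick characterization of $(\epsilon,\delta)$-DP, so that the only real work is bookkeeping with monotonicity. First I would recall that, writing $\delta_Q(\cdot)$ for the privacy profile of the base mechanism $Q$, the selection algorithm $A$ with $K \sim \mathcal{D}_{\eta,\gamma}$ is $(\epsilon,\delta)$-DP exactly when $H_{\ee^\epsilon}(A || A') \le \delta$ for every neighboring pair $X \sim X'$ (equivalently, the maximum over such pairs is $\le \delta$). Thm.~\ref{thm:negbin} already supplies, for every such pair, the bound $H_{\ee^\epsilon}(A || A') \le m \cdot \delta_Q(\wt\epsilon)$, where I write $\wt\epsilon = \epsilon - (\eta+1)\log\big(\ee^{\epsilon_1} + \tfrac{1-\gamma}{\gamma}\,\delta_Q(\epsilon_1)\big)$, renaming the theorem's derived exponent from $\wh\epsilon$ to $\wt\epsilon$ to avoid a clash with the corollary's DP parameter $\wh\epsilon$. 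Hence it suffices to choose $\epsilon$ so that $m \cdot \delta_Q(\wt\epsilon) \le \delta$, i.e.\ $\delta_Q(\wt\epsilon) \le \delta/m$.

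Next I would convert the two DP hypotheses into point-wise statements about the profile: being $(\epsilon_1,\delta_1)$-DP means $\delta_Q(\epsilon_1) \le \delta_1$, and being $(\wh\epsilon,\delta/m)$-DP means $\delta_Q(\wh\epsilon) \le \delta/m$. Substituting the stated choice $\epsilon = \wh\epsilon + (\eta+1)\log\big(\ee^{\epsilon_1} + \tfrac{1-\gamma}{\gamma}\,\delta_1\big)$ into the definition of $\wt\epsilon$ gives
\[
\wt\epsilon = \wh\epsilon + (\eta+1)\Big[\log\big(\ee^{\epsilon_1} + \tfrac{1-\gamma}{\gamma}\,\delta_1\big) - \log\big(\ee^{\epsilon_1} + \tfrac{1-\gamma}{\gamma}\,\delta_Q(\epsilon_1)\big)\Big] \ge \wh\epsilon,
\]
where the inequality uses $\delta_Q(\epsilon_1) \le \delta_1$ together with the fact that $x \mapsto \ee^{\epsilon_1} + \tfrac{1-\gamma}{\gamma}x$ is increasing (since $\gamma \in (0,1)$ forces $\tfrac{1-\gamma}{\gamma} > 0$) and that $\log$ is increasing. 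Since a privacy profile $\delta_Q(\cdot)$ is non-increasing in its argument (larger $\epsilon$ only shrinks $[\,\cdot - \ee^\epsilon\,\cdot\,]_+$ pointwise), $\wt\epsilon \ge \wh\epsilon$ yields $\delta_Q(\wt\epsilon) \le \delta_Q(\wh\epsilon) \le \delta/m$. Chaining this with Thm.~\ref{thm:negbin} gives $H_{\ee^\epsilon}(A || A') \le m\,\delta_Q(\wt\epsilon) \le \delta$ for every neighboring pair, which is the claim.

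There is no substantive obstacle here; the corollary is essentially a monotonicity repackaging of Thm.~\ref{thm:negbin}. The one point worth stating carefully is the replacement of the exact profile value $\delta_Q(\epsilon_1)$ appearing in the theorem by the hypothesized upper bound $\delta_1$: this can only inflate the chosen $\epsilon$, and it is precisely the monotonicity of $\delta_Q$ that makes the inflation harmless (in the construction of the ``Adjusting Hyperparameters'' paragraph one in fact takes $\delta_1 = \delta_Q(\epsilon_1)$, so the two exponents coincide and the bound is tight). I would also verify the implicit nonnegativity conditions: under $\epsilon_1 \ge 0$ the logarithm's argument is $\ge 1$, so the added term is nonnegative, and with $\eta \in (-1,\infty)$ and $\wh\epsilon \ge 0$ the resulting $\epsilon$ is itself nonnegative, so the profile argument stays in the valid range.
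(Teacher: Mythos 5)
Your proof is correct and matches the paper's intended derivation: the paper states Corollary~\ref{cor:adjust} as an immediate consequence of Thm.~\ref{thm:negbin} (no separate proof is given), and your argument---translating the two DP hypotheses into point-wise profile bounds, substituting the stated $\epsilon$, and using monotonicity of $x \mapsto \ee^{\epsilon_1} + \tfrac{1-\gamma}{\gamma}x$ and of the privacy profile to get $\delta_Q(\wt\epsilon) \le \delta_Q(\wh\epsilon) \le \delta/m$---is exactly that consequence spelled out. The renaming of the theorem's exponent to avoid clashing with the corollary's $\wh\epsilon$ and the check that $\eta+1>0$ are the right points to be careful about, and you handled both.
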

Figure~\ref{fig:DPSGD_adjusted} shows the upper bound obtained using this procedure, when we are tuning the $\sigma$-parameter for the Poisson-subsampled Gaussian mechanism. We fix $q=0.01$ and set as a threshold $\epsilon_Q=1.5$ and $\delta=10^{-6}$.
We consider three noise-level candidates: $\sigma=2.0$, $\sigma=3.0$ and $\sigma=4.0$. For each of these three candidates, the number of iterations $T$ is determined to be the maximum such that the privacy profile of the candidate is below $(\epsilon_1,\delta_1)$- and $(\wh \epsilon,\delta/m)$-thresholds. As a result we can run the candidate models for $\approx$ 4000, 10000 and 18000 iterations, respectively. We compare the $(\epsilon,\delta)$-DP bound given by Cor.~\ref{cor:adjust} and the bounds we would obtain by optimizing Thm.~\ref{thm:negbin} individually for each candidate mechanism and see that there is a very small gap.

\begin{figure} [H]
     \centering
        \includegraphics[width=.35\textwidth]{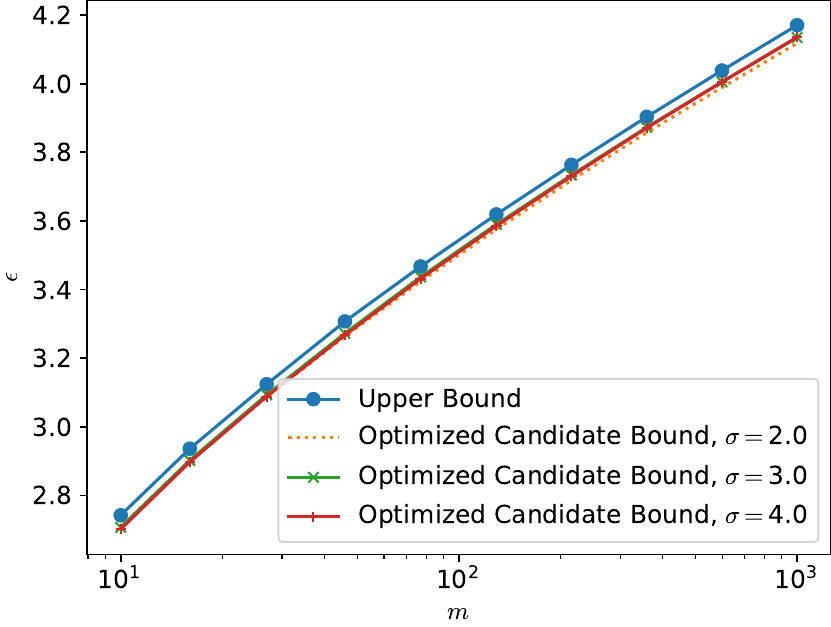}
        \caption{Tuning the $\sigma$-parameter for the Poisson-subsampled Gaussian mechanism with geometrically-distributed $K$. We fix $q=0.01$ and consider three $\sigma$-candidates: 2.0, 3.0 and 4.0. Shown are the $(\epsilon,\delta)$-bounds given by Cor.~\ref{cor:adjust} and the bounds obtained by optimizing Thm.~\ref{thm:negbin} individually for each candidate mechanism.}
 	\label{fig:DPSGD_adjusted}
\end{figure}




\section{Conclusions} 

We have filled a gap in the private selection literature by providing $(\epsilon,\delta)$-privacy analysis for various selection algorithms that is able to accurately use the privacy profiles of the candidate mechanisms. When compared to existing RDP bounds, in DP-SGD tuning, for example, the new bounds allow evaluating approximately 3 times as many candidate models.
The bounds also improve existing point-wise $(\epsilon,\delta)$-bounds which translates to improved utility in data-adaptive
analyses using the generalized PTR framework. We have also shown how to use the bounds to adjust parameters of the candidate models when tuning hyperparameters that affect the privacy guarantees of the candidate mechanisms. 

\subsection*{Acknowledgments}
The work was partially supported by NSF Award \#2048091.

\bibliography{priv_sel}

\newpage
\appendix

\section{Proofs for Section~\ref{sec:max_sel}}

\subsection{Proof of Theorem~\ref{thm:max_sel} (Privacy Profile of Additive Noise RNM)}

\begin{theorem}
Let $X \sim X'$ and $\epsilon \in \mathbb{R}$. We have
$$
H_{\ee^\epsilon} \big(\mathcal{M}(X) || \mathcal{M}(X') \big) \leq m \cdot \delta(\epsilon),
$$
where $\delta(\epsilon)$ is the privacy profile of the additive noise mechanism with sensitivity 2. 
\end{theorem}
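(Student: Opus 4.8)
The plan is to turn the $m$-way comparison of the two argmax output distributions into $m$ one-dimensional tail comparisons, one for each candidate index, and to show that each such comparison is governed by a noise shift of magnitude at most $2$, so that the sensitivity-$2$ privacy profile $\delta(\epsilon)$ controls it. Write $F$ and $\bar F = 1 - F$ for the CDF and complementary CDF of the common noise $Z$, and fix neighbours $X \sim X'$. For a fixed index $i \in [m]$, I would condition on the vector of the remaining noises $z_{-i} = (z_j)_{j \neq i}$ and observe that $i$ is selected precisely when $z_i$ exceeds the threshold
$$\tau_i^X(z_{-i}) := \max_{j \neq i}\big(f_j(X) + z_j\big) - f_i(X).$$
Since the $Z_j$ are i.i.d.\ with complementary CDF $\bar F$, this gives the representation
$$\Pr[\mathcal{M}(X) = i] = \E_{z_{-i}}\big[\bar F(\tau_i^X(z_{-i}))\big], \qquad \Pr[\mathcal{M}(X') = i] = \E_{z_{-i}}\big[\bar F(\tau_i^{X'}(z_{-i}))\big],$$
where both expectations are taken over the \emph{same} draw of $z_{-i}$.

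The next step is the key shift bound. For a fixed $z_{-i}$, the two thresholds differ by
$$\tau_i^{X'}(z_{-i}) - \tau_i^{X}(z_{-i}) = \Big[\max_{j\neq i}\big(f_j(X')+z_j\big) - \max_{j\neq i}\big(f_j(X)+z_j\big)\Big] - \big[f_i(X') - f_i(X)\big].$$
Because the maximum of shifted values is $1$-Lipschitz in the shifts and $\abs{f_j(X)-f_j(X')}\le 1$ for every $j$, each bracket lies in $[-1,1]$, so $\abs{\tau_i^{X'}-\tau_i^{X}} \le 2$ for every $z_{-i}$. Under the monotonicity assumption $f_j(X)\ge f_j(X')$ for all $j$ (or all $\le$), the first bracket lies in $[-1,0]$ and the second in $[-1,0]$, so their difference lies in $[-1,1]$ and the shift is at most $1$; this is exactly what lets the sensitivity drop from $2$ to $1$.

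Finally I would assemble the bound. Since $\mathcal{M}$ has the finite output space $[m]$,
$$H_{\ee^\epsilon}\big(\mathcal{M}(X)\,\|\,\mathcal{M}(X')\big) = \sum_{i=1}^m \big[\Pr[\mathcal{M}(X)=i] - \ee^\epsilon\Pr[\mathcal{M}(X')=i]\big]_+.$$
Using the coupled representation and convexity of $[\,\cdot\,]_+$ (Jensen's inequality), each summand is bounded by $\E_{z_{-i}}\big[[\bar F(\tau_i^X) - \ee^\epsilon \bar F(\tau_i^{X'})]_+\big]$. For fixed $z_{-i}$, writing $a = \tau_i^{X'}-\tau_i^X$ with $\abs a \le 2$, the half-line event $E = \{\,\cdot > \tau_i^{X'}\}$ satisfies $\bar F(\tau_i^X) = \Pr[Z+a \in E]$ and $\bar F(\tau_i^{X'}) = \Pr[Z \in E]$, so by the characterization of the hockey-stick divergence as a supremum over events, $\bar F(\tau_i^X) - \ee^\epsilon \bar F(\tau_i^{X'}) \le H_{\ee^\epsilon}(Z+a\,\|\,Z) \le \delta(\epsilon)$, the sensitivity-$2$ profile. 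Hence each summand is at most $\delta(\epsilon)$, and summing over the $m$ indices yields $m\cdot\delta(\epsilon)$.

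The main obstacle is the coupling in the first step: conditioning on the shared remaining noise $z_{-i}$ is what places the two threshold functions on a common probability space and forces them to differ by a bounded shift. Without this coupling, Jensen would not localize the comparison to a single-coordinate perturbation, and one would be led back to an $m$-fold composition bound rather than the desired union-type factor $m$; verifying the $1$-Lipschitz property of the maximum and tracking the signs of the two brackets under monotonicity is the only additional technical care required.
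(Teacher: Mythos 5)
Your proof is correct, including the monotonicity refinement, but it takes a genuinely different route from the paper's. The paper starts from the same decomposition of $H_{\ee^{\epsilon}}$ over the $m$ output indices, but then represents each output probability by integrating over the \emph{winner's} noise $r_i$, keeping the probability over the competitors' noises inside the integrand; it performs the change of variables $r_i \mapsto r_i - 2$ so that the shifted density $p(r_i-2)$ appears explicitly, uses the sensitivity bound as an event inclusion to show that the resulting ratio of competitor probabilities is at most $1$, applies Jensen's inequality to the jointly convex perspective function $(u,v)\mapsto f(u/v)\,v$ (Lemma~\ref{lem:f_convex}), and finally uses monotonicity and non-negativity of $f$ to reduce each summand to $\int f\big(p(r_i-2)/p(r_i)\big)\,p(r_i)\,\dd r_i = \delta(\epsilon)$. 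You instead condition on the \emph{competitors'} noises $z_{-i}$, so each output probability becomes an expectation of a one-dimensional tail probability of $z_i$; the sensitivity enters as the threshold-shift bound $\abs{\tau_i^{X'}-\tau_i^{X}}\le 2$, and each conditional comparison is closed with the event-supremum characterization of the hockey-stick divergence, $\Pr[Z+a\in E]-\ee^{\epsilon}\Pr[Z\in E]\le H_{\ee^{\epsilon}}(Z+a\,||\,Z)\le\delta(\epsilon)$. The trade-off is this: your argument is more elementary (only linearity of expectation, convexity of $[\cdot]_+$, and the event characterization are needed), and it directly produces the worst-case quantity $\sup_{\abs{a}\le 2}H_{\ee^{\epsilon}}(Z+a\,||\,Z)$, which is exactly the sensitivity-$2$ privacy profile appearing in the statement; the paper's argument is phrased for a general convex, non-decreasing, non-negative $f$, so the identical computation also covers R\'enyi-type $f$-divergences (mirroring the RDP proof of Zhu and Wang on which it is modeled), whereas your closing step is specific to the hockey stick and would need to be replaced by a data-processing argument to generalize.
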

\begin{proof}
For each $i \in [m]$, denoting the density function of $Z_i$ by $p(r_i)$, we have that
\begin{equation} \label{eq:phase0}
\begin{aligned}
& H_f \big(\mathcal{M}(X) || \mathcal{M}(X') \big)    \\
&=  \sum\limits_{i=1}^m f \left( \frac{\mathbf{P}(\mathcal{M}(X) = i)}{\mathbf{P}(\mathcal{M}(X') = i)} \right)  \mathbf{P}(\mathcal{M}(X') = i) \\
& =  \sum\limits_{i=1}^m  
f \left( \frac{\int_{-\infty}^{\infty} p(r_i-2) \cdot \mathbf{P}(f_i(X) + r_i -2 > \max_{j\in [m], j\neq i} \{f_j(X)+r_j\}) \, \dd r_i }{\int_{-\infty}^{\infty} p(r_i) \cdot \mathbf{P}(f_i(X')  + r_i > \max_{j\in [m], j\neq i} \{f_j(X') +r_j\}) \, \dd r_i }   \right) \\
& \quad \quad \quad  \quad \cdot \int_{-\infty}^{\infty} p(r_i) \cdot \mathbf{P}(f_i(X')  + r_i > \max_{j\in [m], j\neq i} \{f_j(X') +r_j\}) \, \dd r_i,
\end{aligned}
\end{equation}
where $p(r_i)$'s are the density functions of $Z_i$'s, respectively, and
where the randomness in $\mathbf{P}(f_i(X) + r_i > \max_{j\in [m], j\neq i} \{f_j(X)+r_j\})$ is w.r.t. $r_j$'s.

From the Lipschitz property it follows that for all $i \in [m]$,
\begin{equation} \label{eq:lipschitz}
    \frac{\mathbf{P}(f_i(X) + r_i -2 > \max_{j\in [m], j\neq i} \{f_j(X)+r_j\})  }{\mathbf{P}(f_i(X')  + r_i > \max_{j\in [m], j\neq i} \{f_j(X') +r_j\})  }   \leq 1. 
\end{equation}
For an $f$ divergence determined by a convex $f$ of the RNM of additive noise mechanisms evaluated at $X$ and $X'$, we have that
\begin{equation} \label{eq:m_delta} 
\begin{aligned}
& H_f \big(\mathcal{M}(X) || \mathcal{M}(X') \big) =
\sum_{i=1}^m  f\left( \frac{\mathbf{P}(\mathcal{M}(X) = i)}{\mathbf{P}(\mathcal{M}(X') = i)} \right) \mathbf{P}(\mathcal{M}(X') = i)  \\
&=
\sum_{i=1}^m f \left( \frac{\int_{-\infty}^{\infty} p(r_i-2) \cdot \mathbf{P}(f_i(X) + r_i -2 > \max_{j\in [m], j\neq i} \{f_j(X)+r_j\}) \, \dd r_i }{\int_{-\infty}^{\infty} p(r_i) \cdot \mathbf{P}(f_i(X')  + r_i > \max_{j\in [m], j\neq i} \{f_j(X') +r_j\}) \, \dd r_i }   \right) \\
& \quad \quad \quad  \quad \cdot \int_{-\infty}^{\infty} p(r_i) \cdot \mathbf{P}(f_i(X')  + r_i > \max_{j\in [m], j\neq i} \{f_j(X') +r_j\}) \, \dd r_i \\
& \leq
\sum_{i=1}^m \int_{-\infty}^{\infty} f \left( \frac{ p(r_i-2) \cdot \mathbf{P}(f_i(X) + r_i -2 > \max_{j\in [m], j\neq i} \{f_j(X)+r_j\})  }{ p(r_i) \cdot \mathbf{P}(f_i(X')  + r_i > \max_{j\in [m], j\neq i} \{f_j(X') +r_j\}) }   \right) \\
& \quad \quad \quad  \quad \cdot p(r_i) \cdot \mathbf{P}(f_i(X')  + r_i > \max_{j\in [m], j\neq i} \{f_j(X') +r_j\}) \, \dd r_i \\
& \leq
\sum_{i=1}^m \int_{-\infty}^{\infty} f \left( \frac{ p(r_i-2)   }{ p(r_i) }   \right) \cdot p(r_i) \cdot \mathbf{P}(f_i(X')  + r_i > \max_{j\in [m], j\neq i} \{f_j(X') +r_j\}) \, \dd r_i \\
& \leq
\sum_{i=1}^m \int_{-\infty}^{\infty} f \left( \frac{ p(r_i-2)   }{ p(r_i) }   \right) \cdot p(r_i) \, \dd r_i \\
& =
\sum_{i=1}^m H_f \big(p(r_i) || p(r_i-2) \big) \\
& = m \cdot \delta(\epsilon),
\end{aligned}
\end{equation}
in case $f$ corresponds to the hockey-stick-divergence and
where $\delta(\epsilon)$ is the privacy profile of the additive noise mechanism with sensitivity 2. In the first inequality we have used  Lemma~\ref{lem:f_convex} and Jensen's inequality and in the second inequality we have used the Lipschitz property and the fact that $f(z)$ is a non-decreasing function of $z$.
In case of monotonicity, i.e., if $f_i(X) \geq f_i(X')$ for all $i \in [m]$, the condition~\eqref{eq:lipschitz} holds with $r_i-2$ replaced by $r_i-1$ and we have the result with $r_i - 2$ replaced by $r_i-1$.

\end{proof}

\subsection{Proof of Theorem~\ref{lem:composition_rnm} }
\begin{lemma}
In case of an adaptive composition of $k$ mechanisms of the form~\eqref{eq:M_argmax}, we get the 
privacy profile upper bound $m^k \cdot \delta(\epsilon)$, where $\delta(\epsilon)$ is the privacy profile of a $m$-wise composition of the additive noise mechanism with noise $Z$ and sensitivity $2$. 
\begin{proof}
We use the proof technique used in~\citep[Thm.\;27][]{zhu2021optimal}
and consider an adaptive composition of two mechanisms. The general case follows from the proof.
Let $X \sim X'$. Denote the density functions of $\mathcal{M}^1(X)$ and $\mathcal{M}^2\big(\mathcal{M}^1(X),X \big)$ by $f_1(t)$ and $f_2(t,s)$, respectively, and the density functions of $\mathcal{M}^1(X')$ and $\mathcal{M}^2\big(\mathcal{M}^1(X'),X'\big)$ by $f_1'(t)$ and $f_2'(t,s)$, respectively.
Denote the density function of $Z$ by $p(t)$. Using the bound given by Thm.~\ref{thm:max_sel}, we have:
\begin{equation*}
    \begin{aligned}
        H_{\ee^\epsilon} \big( (\mathcal{M}(X),\mathcal{M}(X) \big) || (\mathcal{M}(X),\mathcal{M}(X) \big)  \big)  & =
   \int_{\mathbb{R}} \int_{\mathbb{R}} \max \{ f_1(t) f_2(t,s) - \ee^\epsilon f_1'(t) f_2'(t,s), 0 \} \, \dd s \dd t \\
    &= \int_{\mathbb{R}} f_1(t) \left( \int_{\mathbb{R}} \max \{  f_2(t,s) - \ee^{\epsilon - \log \tfrac{f_1(t)}{f_1'(t)}} f_2'(t,s), 0 \} \, \dd s \right) \dd t \\
    &\leq m \cdot \int_{\mathbb{R}} f_1(t) \left( \int_{\mathbb{R}} \max \{  p(s-2) - \ee^{\epsilon - \log \tfrac{f_1(t)}{f_1'(t)}} p(s), 0 \} \, \dd s \right) \dd t \\
    &= m \cdot \int_{\mathbb{R}} p(s-2) \left( \int_{\mathbb{R}} \max \{  f_1(t) - \ee^{\epsilon - \log \tfrac{p(s-2)}{p(s)}} f_1'(t), 0 \} \, \dd t \right) \dd s \\
    &\leq m^2 \cdot \int_{\mathbb{R}} p(s-2) \left( \int_{\mathbb{R}} \max \{  p(t-2) - \ee^{\epsilon - \log \tfrac{p(s-2)}{p(s)}} p(t), 0 \} \, \dd t \right) \dd s \\
    &= m^2 \cdot \int_{\mathbb{R}} \int_{\mathbb{R}}  \max \{ p(s-2) p(t-2)   - \ee^{\epsilon} p(s) p(t), 0 \} \, \dd t \dd s \\
    \end{aligned}
\end{equation*}
which shows the claim for $k=2$. The general case follows by induction.
\end{proof}
\end{lemma}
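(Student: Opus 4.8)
The plan is to adopt the dominating-pair composition technique of~\citet[Thm.\;27]{zhu2021optimal}: I would establish the bound for the adaptive composition of \emph{two} RNM mechanisms of the form~\eqref{eq:M_argmax} and then extend to general $k$ by induction, since the two-fold argument already contains every idea. Write $\mathcal{M}^1,\mathcal{M}^2$ for the two rounds, let $f_1,f_1'$ be the densities of the first-round output on $X,X'$, and let $f_2(t,\cdot),f_2'(t,\cdot)$ be the conditional densities of the second round given first-round output $t$. Since the hockey-stick divergence between two product densities is the integral of the positive part of their difference, I would start from
\begin{equation*}
H_{\ee^\epsilon}\big( \cdots \big) = \int_{\R}\int_{\R} \max\{ f_1(t)\,f_2(t,s) - \ee^\epsilon f_1'(t)\,f_2'(t,s),\, 0 \} \, \dd s \, \dd t .
\end{equation*}

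The core idea is to \emph{peel off one round at a time}, each peel contributing a factor $m$ and replacing one round of RNM densities by the plain sensitivity-$2$ additive-noise densities $p(\cdot-2)$ and $p(\cdot)$. First I would condition on the first-round output $t$: factoring $f_1(t)>0$ into the inner integral turns the threshold into $\epsilon - \log\tfrac{f_1(t)}{f_1'(t)}$, so the inner integral is exactly $H_{\ee^{\epsilon-\log(f_1(t)/f_1'(t))}}$ of the second-round RNM evaluated on $X$ against $X'$. The crucial point is that Theorem~\ref{thm:max_sel} holds for \emph{every} real threshold and for the (adaptive) second-round mechanism at each fixed $t$; applying it bounds the inner integral by $m$ times the same expression with $f_2(t,s),f_2'(t,s)$ replaced by $p(s-2),p(s)$.

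Next I would swap the order of integration (valid by nonnegativity of the integrand, i.e.\ Tonelli) so that $s$ becomes the outer variable, factor out $p(s-2)>0$ to absorb the ratio $\tfrac{p(s-2)}{p(s)}$ into the threshold, and recognize the inner integral over $t$ as $H_{\ee^{\epsilon-\log(p(s-2)/p(s))}}$ of the \emph{first}-round RNM. Applying Theorem~\ref{thm:max_sel} a second time yields another factor $m$ and replaces $f_1,f_1'$ by $p(t-2),p(t)$. Undoing the threshold shifts then collapses everything into
\begin{equation*}
m^2 \cdot \int_{\R}\int_{\R} \max\{ p(s-2)\,p(t-2) - \ee^\epsilon p(s)\,p(t),\, 0 \} \, \dd t \, \dd s = m^2 \cdot \delta(\epsilon),
\end{equation*}
whose right-hand side is precisely the hockey-stick divergence between the product of two sensitivity-$2$ additive-noise mechanisms, i.e.\ the privacy profile of their two-fold composition. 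The induction on $k$ then peels rounds $k,k-1,\ldots,1$ in turn, each contributing one factor $m$ and one sensitivity-$2$ noise coordinate, giving $m^k\cdot\delta(\epsilon)$ with $\delta(\epsilon)$ the profile of the $k$-fold noise composition.

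The main obstacle I anticipate is the bookkeeping around the \emph{shifted thresholds}: every peel applies Theorem~\ref{thm:max_sel} not at $\epsilon$ but at $\epsilon$ minus a log-density-ratio that depends on the coordinate already integrated out, so I must be careful that (i) Theorem~\ref{thm:max_sel} is genuinely uniform over all real thresholds, and (ii) the adaptive second-round mechanism still has the additive-noise RNM structure for each realized first-round output, so that the same per-round bound applies. Once the threshold accounting is set up correctly, the interchange of integration and the factorization of the positive part are routine, as both rely only on positivity of the densities.
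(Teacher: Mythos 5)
Your proposal is correct and follows essentially the same route as the paper's own proof: conditioning on the first-round output to shift the hockey-stick threshold by the log-density ratio, applying Theorem~\ref{thm:max_sel} to peel off the second round (factor $m$, densities replaced by $p(\cdot-2),p(\cdot)$), swapping the order of integration to peel off the first round the same way, and concluding $m^2\cdot\delta(\epsilon)$ with induction for general $k$. Your added remarks on the uniformity of Theorem~\ref{thm:max_sel} over all real thresholds and on Tonelli justifying the interchange are exactly the points the paper's argument implicitly relies on.
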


\subsection{Proof of Corollary~\ref{lem:convert2_eps_delta} }

\begin{lemma} 
Consider the mechanism $\mathcal{M}$ defined in Eq.~\ref{eq:M_argmax} and suppose $Z$ is normally distributed with variance $\sigma^2$. 
Let $\delta>0$. Then $\mathcal{M}$ is $(\epsilon,\delta)$-DP for 
$$
\epsilon = \frac{2}{\sigma^2} + \frac{2}{\sigma} \sqrt{ 2 \log \frac{m}{\delta} }
$$
\begin{proof}
Let $\delta>0$. By~\citep[Lemma 3,][]{balle2018improving} we know that the Gaussian mechanism with $L_2$-sensitivity $\Delta$ and noise scale $\sigma$ is $(\epsilon,\delta)$-DP if 
$$
\mathbb{P}( \omega \geq \epsilon) \leq \delta,
$$
where
$$
\omega \sim \mathcal{N}\left( \frac{\Delta^2}{2 \sigma^2}, \frac{\Delta^2}{\sigma^2} \right).
$$
Using a simple Chernoff bound for the Gaussian, we see that for any $\epsilon \geq \tfrac{\Delta^2}{2 \sigma^2}$,
$$
\mathbb{P}( \omega \geq \epsilon) \leq  \ee^{ - \frac{\wt\epsilon^2 \sigma^2}{2}},
$$
where $\wt \epsilon = \epsilon -  \frac{\Delta^2}{2 \sigma^2}$. Setting 
$$
\epsilon = \tfrac{\Delta^2}{2 \sigma^2} + \tfrac{\Delta}{\sigma} \sqrt{ 2 \log \tfrac{m}{\delta} },
$$
we see that
$$
\mathbb{P}( \omega \geq \epsilon) \leq  \frac{1}{m}.
$$
The claim follows setting $\Delta=2$ and using Thm.~\ref{thm:max_sel}.

\end{proof}
\end{lemma}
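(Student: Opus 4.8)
The plan is to combine the privacy profile bound of Theorem~\ref{thm:max_sel} with a Gaussian tail estimate. By Theorem~\ref{thm:max_sel}, for every $\epsilon$ we have $H_{\ee^\epsilon}(\mathcal{M}(X)||\mathcal{M}(X')) \le m\cdot\delta(\epsilon)$, where $\delta(\epsilon)$ is the privacy profile of the one-dimensional Gaussian mechanism with noise scale $\sigma$ and \emph{sensitivity} $\Delta = 2$. Since $(\epsilon,\delta)$-DP is equivalent to the hockey stick divergence being at most $\delta$, it suffices to exhibit an $\epsilon$ for which $m\cdot\delta(\epsilon) \le \delta$, i.e.\ $\delta(\epsilon) \le \delta/m$.

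First I would control the Gaussian privacy profile. Invoking the representation of \citet{balle2018improving} (Lemma~3), the Gaussian mechanism with $L_2$-sensitivity $\Delta$ and noise scale $\sigma$ satisfies $\delta(\epsilon) \le \mathbb{P}(\omega \ge \epsilon)$ with $\omega \sim \mathcal{N}\!\big(\tfrac{\Delta^2}{2\sigma^2},\, \tfrac{\Delta^2}{\sigma^2}\big)$. Next I would apply a one-sided Chernoff bound to this normal tail: for any $\epsilon \ge \tfrac{\Delta^2}{2\sigma^2}$, writing $\wt\epsilon = \epsilon - \tfrac{\Delta^2}{2\sigma^2}$, the standard sub-Gaussian estimate gives $\mathbb{P}(\omega \ge \epsilon) \le \exp\!\big(-\tfrac{\wt\epsilon^2\sigma^2}{2\Delta^2}\big)$.

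The final step is to solve for $\epsilon$. Setting $\epsilon = \tfrac{\Delta^2}{2\sigma^2} + \tfrac{\Delta}{\sigma}\sqrt{2\log\tfrac{m}{\delta}}$ makes $\wt\epsilon = \tfrac{\Delta}{\sigma}\sqrt{2\log\tfrac{m}{\delta}}$, so the Chernoff exponent collapses to $-\log\tfrac{m}{\delta}$ and the tail bound becomes exactly $\delta/m$. Hence $\delta(\epsilon)\le\delta/m$, and therefore $m\cdot\delta(\epsilon)\le\delta$. Substituting the RNM sensitivity $\Delta = 2$ then yields the claimed value $\epsilon = \tfrac{2}{\sigma^2} + \tfrac{2}{\sigma}\sqrt{2\log\tfrac{m}{\delta}}$, and the requirement $\epsilon \ge \Delta^2/(2\sigma^2)$ needed for the tail bound holds trivially because the added square-root term is nonnegative.

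I do not anticipate a genuine obstacle, since once Theorem~\ref{thm:max_sel} is available the argument reduces to a routine Gaussian tail calculation; the only delicate points are bookkeeping. One must remember to use sensitivity $\Delta = 2$ rather than $1$ — the doubled sensitivity is precisely the price of the union-bound-free RNM analysis of Theorem~\ref{thm:max_sel} — and must carry the variance $\Delta^2/\sigma^2$ correctly through the Chernoff exponent so that the constants in the final $\epsilon$ come out as stated. Getting the factor $m$ to cancel against the $\log(m/\delta)$ in the exponent is exactly what forces the choice of $\wt\epsilon$, so this cancellation is the one place to verify carefully.
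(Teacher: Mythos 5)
Your proposal is correct and takes essentially the same route as the paper's proof: Theorem~\ref{thm:max_sel} reduces the claim to bounding the privacy profile $\delta(\epsilon)$ of a sensitivity-$\Delta$ Gaussian mechanism by $\delta/m$, which is done via the Balle--Wang tail representation $\delta(\epsilon)\le\mathbb{P}(\omega\ge\epsilon)$ with $\omega \sim \mathcal{N}\big(\tfrac{\Delta^2}{2\sigma^2},\tfrac{\Delta^2}{\sigma^2}\big)$, a Gaussian Chernoff bound, and the substitution $\Delta=2$ at the end. Your bookkeeping is in fact slightly more careful than the paper's: you correctly carry the variance through the Chernoff exponent as $-\wt\epsilon^{\,2}\sigma^2/(2\Delta^2)$ and conclude that the tail equals $\delta/m$, whereas the paper's displayed exponent $-\wt\epsilon^{\,2}\sigma^2/2$ and its claimed tail bound $1/m$ contain typos (harmless, since the intended calculation is exactly yours).
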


\section{Proof of Theorem~\ref{thm:main_thm} (General Bound for the Privacy Selection)}

Similarly to~\citep[Lemma 7,][]{papernot2022hyperparameter} we denote
$Q(\leq y) = \sum_{y' \leq y} Q(y')$ and $Q(< y) = \sum_{y' < y} Q(y')$ and formulate
our main result for the case $\mathbb{P}(K=0)=0$. The case $\mathbb{P}(K=0)>0$ could be included in the upper bound as a small additional term of the form $f(0) \cdot \mathbb{P}(K=0)$. E.g., in case of the hockey stick divergence, it does not account for the divergence in case $\epsilon>0$ so we neglect it.

Also, $\varphi$ denotes the probability generating function of $K$, i.e.,
$$
\varphi(z) = \sum\limits_{k=1}^\infty \mathbb{P}(K=k) \cdot z^k.
$$
As shown in~\citep[Lemma 7,][]{papernot2022hyperparameter},
\begin{equation} \label{eq:RDP_A0} 
\begin{aligned}
A(y) &= \sum_{k=1}^{\infty} \mathbf{P}[K=k] \cdot \big( Q( \leq y)^k - Q(<y)^k \big) \\
 &=  \varphi \big( Q( \leq y) \big)  - \varphi \big(  Q(<y) \big) \\
 &= \int_{Q(<y)}^{Q( \leq y)} \varphi '(z) \, \dd z \\
& = Q(y) \cdot \mathop{\mathbb{E}}_{X \leftarrow [Q(<y),Q(\leq y)]} [\varphi'(X)],
\end{aligned}
\end{equation}
where $X \leftarrow [Q(<y),Q(\leq y)]$ denotes uniformly distributed r.v. on the interval $[Q(<y),Q(\leq y)]$.

\begin{theorem}
Let $X \sim X'$ and let $A$ and $A'$ be the density functions of the hyperparameter tuning algorithm, evaluated on $X$ and $X'$, respectively. Let $Q$ and $Q'$ be the density functions of the quality score of the base mechanism, evaluated on  $X$ and $X'$, respectively. Let $K$ be random variable for the times the base mechanism is run and $\varphi(z)$ the PGF of $K$.
Let $f \, : \,  [0,\infty) \rightarrow \mathbb{R}$ be a convex function.
Then, 
\begin{equation*} 
\begin{aligned}
H_f (A || A') 
 \leq \sum_{y \in \mathcal{Y}} 
f \left( \frac{ Q(y)  \varphi'(q_y)   }{ Q'(y) \varphi'(q'_y) }\right) \cdot Q'(y) \varphi'(q'_y),
\end{aligned}
\end{equation*}
where for each $y \in \mathcal{Y}$, $q_y$ and $q'_y$ are obtained by applying the same 
$y$-dependent post-processing function to
$Q$ and $Q'$, respectively.
\begin{proof}
For the mechanism $A$ defined in~\eqref{eq:RDP_A0}, we can bound the HS divergence as follows:
\begin{equation} \label{eq:H_alpha}
\begin{aligned}
H_f (A || A') &= \sum_{y \in \mathcal{Y}} f \left( \frac{ A(y)   }{ A'(y) }\right) \cdot A(y) \\
&= \sum_{y \in \mathcal{Y}} f \left( \frac{ Q(y) \mathop{\mathbb{E}}_{X \leftarrow [Q(<y),Q(\leq y)]} [\varphi'(X)]   }{ Q'(y) \mathop{\mathbb{E}}_{X' \leftarrow [Q'(<y),Q'(\leq y)]} [\varphi'(X')] }\right) \cdot Q'(y) \mathop{\mathbb{E}}_{X' \leftarrow [Q'(<y),Q'(\leq y)]} [\varphi'(X')]  \\
& \leq \sum_{y \in \mathcal{Y}} 
\mathop{\mathbb{E}}_{X \leftarrow [Q(<y),Q(\leq y)], \,\, X' \leftarrow [Q'(<y),Q'(\leq y)]}
f \left( \frac{ Q(y)  \varphi'(X)   }{ Q'(y) \varphi'(X') }\right) \cdot Q'(y) \varphi'(X') \\
& \leq \sum_{y \in \mathcal{Y}} \max_{y'}
\mathop{\mathbb{E}}_{X \leftarrow [Q(<y'),Q(\leq y')], \,\, X' \leftarrow [Q'(<y'),Q'(\leq y')]}
f \left( \frac{ Q(y)  \varphi'(X)   }{ Q'(y) \varphi'(X') }\right) \cdot Q'(y) \varphi'(X'), \\
\end{aligned}
\end{equation}
where in the first inequality we use Lemma~\ref{lem:f_convex} and Jensen's inequality. Notice that in the second inequality the maximum is taken 
only over the arguments in the expectation over $X$ and $X'$.

Jensen's inequality applies in case $X$ and $X'$ are arbitrarily coupled and
we use the same coupling between $X$ and $X'$ as in~\citep[Lemma 7,][]{papernot2022hyperparameter}, i.e., we couple $X$ and $X'$ such that
\begin{equation} \label{eq:coupling}
\frac{X - Q(< y)}{Q(y)} = \frac{X' - Q'(< y)}{Q'(y)}.
\end{equation}
We see that for $X \leftarrow [Q(<y'),Q(\leq y')]$ and $X' \leftarrow [Q'(<y'),Q'(\leq y')]$, the expressions~\eqref{eq:coupling} are between 0 and 1.
Thus, continuing from~\eqref{eq:H_alpha}, we find that
\begin{equation*} 
\begin{aligned}
H_f (A || A') & \leq \sum_{y \in \mathcal{Y}} \max_{y'}
\mathop{\mathbb{E}}_{X \leftarrow [Q(<y'),Q(\leq y')], \,\, X' \leftarrow [Q'(<y'),Q'(\leq y')]}
f \left( \frac{ Q(y)  \varphi'(X)   }{ Q'(y) \varphi'(X') }\right) \cdot Q'(y) \varphi'(X') \\
& \leq \sum_{y \in \mathcal{Y}} \max_{y'} \,
\max_{(X, X')}
f \left( \frac{ Q(y)  \varphi'(X)   }{ Q'(y) \varphi'(X') }\right) \cdot Q'(y) \varphi'(X') \\
& = \sum_{y \in \mathcal{Y}} \max_{y'}
f \left( \frac{ Q(y)  \varphi'(Q(< y') + t_y \cdot Q(y'))   }{ Q'(y) \varphi'(Q'(< y') + t_y \cdot Q'(y')) }\right) \cdot Q'(y) \varphi'(Q'(< y') + t_y \cdot Q'(y'))
\end{aligned}
\end{equation*}
for some $\{ t_y \}_{y \in \mathcal{Y}}$, where $t_y \in [0,1]$ for all $y \in \mathcal{Y}$. Furthermore, taking the maximum over $t_y$'s, we get
\begin{equation*} 
\begin{aligned}
H_f (A || A') 
& \leq \sum_{y \in \mathcal{Y}} \max_{y',t_y}
f \left( \frac{ Q(y)  \varphi'(Q(< y') + t_y \cdot Q(y'))   }{ Q'(y) \varphi'(Q'(< y') + t_y \cdot Q'(y')) }\right) \cdot Q'(y) \varphi'(Q'(< y') + t_y \cdot Q'(y')) \\
& \leq \sum_{y \in \mathcal{Y}} 
f \left( \frac{ Q(y)  \varphi'(q_y)   }{ Q'(y) \varphi'(q'_y) }\right) \cdot Q'(y) \varphi'(q'_y),
\end{aligned}
\end{equation*}
where for each $y \in \mathcal{Y}$, $q_y$ and $q'_y$ are obtained by applying the same $y$-dependent post-processing function to $Q$ and $Q'$, respectively. This can be seen using a similar reasoning as in~\citep[Lemma 7,][]{papernot2022hyperparameter}. It follows from the fact that for all $y \in \mathcal{Y}$,
there clearly exist $y_*,t_*$ such that
$$
(y_*,t_*) = \mathop{\mathrm{argmax}}_{y',t_y} f \left( \frac{ Q(y)  \varphi'(Q(< y') + t_y \cdot Q(y'))   }{ Q'(y) \varphi'(Q'(< y') + t_y \cdot Q'(y')) }\right) \cdot Q'(y) \varphi'(Q'(< y') + t_y \cdot Q'(y')).
$$
The kernel of the post-processing function is then given by
$$
g(z) = \begin{cases} 1, \quad &\textrm{if } z < y_*, \\
t_*, \quad &\textrm{if } z = y_*, \\
0, \quad &\textrm{else,} \\
\end{cases}
$$
i.e., $q_y = \sum_{z \in \mathcal{Y}} g(z) Q(z)$ and  $q'_y = \sum_{z \in \mathcal{Y}} g(z) Q'(z)$.
\end{proof}
\end{theorem}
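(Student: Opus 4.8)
The plan is to start from the closed-form expression for the tuning algorithm's density in terms of the probability generating function. Using $A(y) = \varphi(Q(\le y)) - \varphi(Q(<y)) = \int_{Q(<y)}^{Q(\le y)} \varphi'(z)\,\dd z$, I would rewrite this integral over an interval of length $Q(y)$ as $A(y) = Q(y)\cdot \E_{X}[\varphi'(X)]$, where $X$ is uniform on $[Q(<y),Q(\le y)]$, and similarly $A'(y) = Q'(y)\cdot\E_{X'}[\varphi'(X')]$ with $X'$ uniform on $[Q'(<y),Q'(\le y)]$. Substituting these into $H_f(A\|A') = \sum_{y}f(A(y)/A'(y))A'(y)$ then expresses the divergence entirely in terms of $Q,Q'$ and expectations of $\varphi'$, which is the form amenable to a convexity argument.

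The heart of the argument is to move $f$ and the multiplicative factor inside the expectation. By Lemma~\ref{lem:f_convex}, the map $(u,v)\mapsto f(u/v)\,v$ is jointly convex on $\{u\ge 0,\ v>0\}$; applying this with $u = Q(y)\varphi'(X)$ and $v = Q'(y)\varphi'(X')$ and invoking Jensen's inequality under a suitable coupling of $X$ and $X'$ yields
\[
f\!\left(\frac{Q(y)\,\E[\varphi'(X)]}{Q'(y)\,\E[\varphi'(X')]}\right)Q'(y)\,\E[\varphi'(X')] \le \E\!\left[f\!\left(\frac{Q(y)\,\varphi'(X)}{Q'(y)\,\varphi'(X')}\right)Q'(y)\,\varphi'(X')\right].
\]
Summing over $y$ gives the first upper bound. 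The coupling I would use is the natural one that parametrizes both intervals by a common $t\in[0,1]$, namely $(X-Q(<y))/Q(y) = (X'-Q'(<y))/Q'(y) = t$, so that Jensen applies to a genuine joint law of $(X,X')$.

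To eliminate the expectation I would bound it by its worst case: replace $\E_{(X,X')}[\cdots]$ by a maximum over the coupling parameter $t$ and over which interval index $y'$ is used, which upper bounds each summand by a single evaluation at $X = Q(<y') + t\,Q(y')$ and $X' = Q'(<y') + t\,Q'(y')$. The final step is to recognize that, for the maximizing pair $(y_*,t_*)$, these arguments take the form $q_y = \sum_z g(z)Q(z)$ and $q'_y = \sum_z g(z)Q'(z)$ for the common kernel $g(z) = \mathbf{1}[z<y_*] + t_*\,\mathbf{1}[z=y_*]$; this is exactly a randomized post-processing applied identically to $Q$ and $Q'$, which delivers the stated bound in terms of $q_y,q'_y$.

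The step I expect to be the main obstacle is justifying the Jensen/joint-convexity move rigorously: one must check that the chosen coupling is a valid coupling of the two uniform laws and that $f(u/v)\,v$ is jointly convex on the relevant region (including its boundary behaviour as $v\to 0$), so that Jensen's inequality is legitimately applicable termwise. Equally delicate is the passage from the expectation to the pointwise maximum, since it requires the maximizer to correspond to an \emph{admissible} post-processing kernel $g$ taking values in $[0,1]$ — this is precisely what later lets the distribution-specific results such as Thm.~\ref{thm:negbin} invoke the privacy profile $\delta(\epsilon)$ of $Q$ through the post-processed quantities $q_y,q'_y$.
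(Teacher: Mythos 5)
Your proposal is correct and follows essentially the same route as the paper's own proof: the representation $A(y) = Q(y)\,\E_{X \leftarrow [Q(<y),Q(\leq y)]}[\varphi'(X)]$, Jensen's inequality via the joint convexity of $f(u/v)\,v$ under the coupling $(X - Q(<y))/Q(y) = (X' - Q'(<y))/Q'(y)$, the passage from expectation to a maximum over the interval index $y'$ and the parameter $t$, and finally the identification of the maximizer with the post-processing kernel $g(z) = \mathbf{1}[z < y_*] + t_*\,\mathbf{1}[z = y_*]$. The obstacles you flag (validity of the coupling and admissibility of the kernel) are exactly the points the paper addresses, in the same way.
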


\subsection{The Case of Continuous Output Score Function}

In case the ordered output space $\mathcal{Y}$ of the base mechanism is continuous, the proof simplifies considerably.

\begin{theorem}
For continuous output space $\mathcal{Y}$,
\begin{equation} \label{eq:thm_statement_continuous}
\begin{aligned}
H_f (A || A') 
 = \int_{\mathcal{Y}} 
f \left( \frac{ Q(y) \cdot \varphi'\big( Q( \leq y) \big)   }{ Q'(y) \cdot \varphi'\big( Q'( \leq y) \big) }\right) \cdot Q'(y) \cdot \varphi'\big( Q'( \leq y) \big) \, \dd y,
\end{aligned}
\end{equation} 
where where $Q(y)$ and $Q( \leq y)$ denote the density function the CDF of the base mechanism $Q$, respectively.
\begin{proof}
The CDF of the private selection algorithm at $y \in \mathcal{Y}$ is given by
$$
A( \leq y) = \sum_{k=1}^{\infty} \mathbf{P}[K=k] \cdot  Q( \leq y)^k,
$$
Therefore, differentiating, we see that the density function of $A$ is given by
\begin{equation} \label{eq:thm_alternative0} 
\begin{aligned}
 A(y) &= \sum_{k=1}^{\infty} k \cdot \mathbf{P}[K=k] \cdot Q(y) \cdot Q( \leq y)^{k-1} \\
&= Q(y) \sum_{k=1}^{\infty} k \cdot \mathbf{P}[K=k] \cdot Q( \leq y)^{k-1} \\
&= Q(y) \cdot \varphi'\big( Q( \leq y) \big).
\end{aligned}
\end{equation}
We have a similar representation for the density $A'(y)$, and the claim follows.
\end{proof}
\end{theorem}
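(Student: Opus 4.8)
The plan is to compute the output density $A(y)$ of the selection algorithm in closed form and then substitute it directly into the definition of the $f$-divergence, so that the claimed identity follows by a one-line calculation. First I would write down the CDF of $A$. Since the algorithm draws $K$ i.i.d.\ quality scores from $Q$ and reports the best one, conditioning on $\{K=k\}$ the maximum of $k$ i.i.d.\ draws has CDF $Q(\leq y)^k$; averaging over the law of $K$ gives
\[
A(\leq y) = \sum_{k=1}^\infty \mathbf{P}[K=k]\, Q(\leq y)^k = \varphi\big(Q(\leq y)\big),
\]
where the second equality is just the PGF of $K$ evaluated at the point $Q(\leq y)\in[0,1]$.

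Next I would differentiate in $y$. Because $\varphi$ is a power series with non-negative coefficients summing to $1$, it is real-analytic on $[0,1)$, so term-by-term differentiation is valid; combined with the chain rule this yields the density
\[
A(y) = \frac{\dd}{\dd y}\,\varphi\big(Q(\leq y)\big) = Q(y)\cdot \varphi'\big(Q(\leq y)\big),
\]
which is exactly the representation in~\eqref{eq:thm_alternative0}. The same computation applied to $Q'$ gives $A'(y) = Q'(y)\,\varphi'\big(Q'(\leq y)\big)$. Plugging both densities into $H_f(A\|A') = \int_{\mathcal{Y}} f\big(A(y)/A'(y)\big)\,A'(y)\,\dd y$ then produces the stated identity verbatim.

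The point I want to stress is why the continuous case is an \emph{equality} rather than the upper bound of Thm.~\ref{thm:main_thm}. In the discrete proof the density is $A(y) = Q(y)\,\mathbb{E}_{X \leftarrow [Q(<y),Q(\leq y)]}[\varphi'(X)]$, and one must invoke Lemma~\ref{lem:f_convex} together with Jensen's inequality to pull the expectation out of $f$, which only gives an inequality. In the continuous setting the atom at $y$ has measure zero, so $Q(<y) = Q(\leq y)$ and the averaging interval collapses to a single point; the expectation degenerates to evaluation at $Q(\leq y)$ and no convexity argument is needed, which is precisely what upgrades the bound to an equality.

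The only genuine technical points are the justification of the term-by-term differentiation of $\varphi\big(Q(\leq y)\big)$ and of the max-CDF formula $Q(\leq y)^k$; the latter requires ties among the $K$ draws to occur with probability zero, which is guaranteed by the continuity of $Q$. I do not expect either to pose a real obstacle, so the main work is simply assembling the chain-rule computation cleanly rather than overcoming a conceptual difficulty.
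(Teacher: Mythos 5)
Your proposal follows essentially the same route as the paper's proof: compute the CDF $A(\leq y) = \varphi\big(Q(\leq y)\big)$, differentiate via the chain rule to obtain $A(y) = Q(y)\,\varphi'\big(Q(\leq y)\big)$, and substitute into the definition of $H_f$. Your additional remarks on why the continuous case yields an equality (the averaging interval $[Q(<y), Q(\leq y)]$ collapsing to a point) and on the measure-zero tie condition are correct clarifications, not deviations.
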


\section{Proofs for Section~\ref{sec:negbin}}

\subsection{Proof of Theorem~\ref{thm:negbin}} \label{sec:appendix_negbin}

\begin{theorem}
Let $K \sim \mathcal{D}_{\eta,\gamma}$ and let $\delta(\epsilon_1)$, ${\epsilon_1} \in \mathbb{R}$, define the privacy profile of the base mechanism $Q$.
Then, for $A$ and $A'$, the output distributions of the selection algorithm evaluated on neighboring datasets $X$ and $X'$, respectively, and
for all $\epsilon_1 \geq 0$,
$$
H_{\ee^\epsilon}(A || A') \leq m \cdot \delta(\wh\epsilon)
$$
where 
\begin{equation*} 
 \wh\epsilon = \epsilon - (\eta+1)  \log \left( \ee^{\epsilon_1} + \frac{1-\gamma}{\gamma} \cdot {\delta}(\epsilon_1) \right).   
\end{equation*}

\begin{proof}
Let $q$ and $q'$ be results of applying some post-processing function to $Q$ and $Q'$, respectively.
Then,
\begin{equation} \label{eq:1_q}
1-q' \leq \ee^{\epsilon_1}(1-q) + {\delta}(\epsilon_1)
\end{equation}
for all $\epsilon_1 \in \mathbb{R}$. Thus, we see that for all $\epsilon_1 \geq 0$,
\begin{equation} \label{eq:geom_bounding1}
\begin{aligned}
    \frac{\varphi'(q)}{\varphi'(q')} &=  \left(  \frac{(\gamma-1)q' + 1}{(\gamma-1)q + 1} \right)^{\eta+1}   \\
    &=  \left( \frac{(1-\gamma)(1-q') + \gamma}{(1-\gamma)(1-q) + \gamma}   \right)^{\eta+1}  \\
    &\leq \left(  \frac{(1-\gamma)(1-q) \ee^{\epsilon_1} + \delta(\epsilon_1)(1-\gamma) + \gamma}{(1-\gamma)(1-q) + \gamma}   \right)^{\eta+1}  \\
    &\leq \left(  \frac{(1-\gamma)(1-q) \, \ee^{\epsilon_1} + \delta(\epsilon_1)(1-\gamma) + \gamma \, \ee^{\epsilon_1}}{(1-\gamma)(1-q) + \gamma}  \right)^{\eta+1}  \\
    &= \left( \ee^{\epsilon_1} + \frac{\delta(\epsilon_1)(1-\gamma)}{(1-\gamma)(1-q) + \gamma}  \right)^{\eta+1}  \\
    &\leq \left( \ee^{\epsilon_1} + \frac{1-\gamma}{\gamma} \delta(\epsilon_1) \right)^{\eta+1},
\end{aligned}    
\end{equation}
where we have used the inequality~\eqref{eq:1_q} in the first inequality and the inequality $\gamma \leq \ee^{\epsilon_1} \gamma$ in the second inequality.
Using Thm.~\ref{thm:main_thm} for the hockey-stick divergence $f(z) = [z - \ee^\epsilon]_+$, we have that for all $\epsilon_1 \geq 0$,
\begin{equation} \label{eq:geom_bounding2}
\begin{aligned}
H_f (A || A') & \leq \sum_{y \in \mathcal{Y}} 
\left[\frac{ Q(y)  \varphi'(q_y)   }{ Q'(y) \varphi'(q'_y) } - \ee^\epsilon \right]_+ \cdot Q'(y) \cdot \varphi'(q'_y) \\
& = \sum_{y \in \mathcal{Y}} 
\left[1 - \ee^{\epsilon - \log \frac{ Q(y) }{ Q'(y) } - \log \frac{ \varphi'(q_y)   }{  \varphi'(q'_y) }}\right]_+ \cdot Q(y) \cdot \frac{\mathbb{E}[K] \cdot \gamma^{\eta+1}}{\big(1 - q_y (1-\gamma)\big)^{\eta+1}}\\
& \leq \sum_{y \in \mathcal{Y}} 
\left[1 - \ee^{\epsilon - \log \frac{ Q(y) }{ Q'(y) } - \log \frac{ \varphi'(q_y)   }{  \varphi'(q'_y) }}\right]_+ \cdot Q(y) \cdot \mathbb{E}[K] \\
& \leq \sum_{y \in \mathcal{Y}} 
\left[1 - \ee^{\epsilon - \log \frac{ Q(y) }{ Q'(y) } - (\eta + 1)  \log \left( \ee^{\epsilon_1} + \frac{1-\gamma}{\gamma} \delta(\epsilon_1)    \right) }\right]_+ \cdot Q(y) \cdot \mathbb{E}[K] \\
&= \mathbb{E}[K] \cdot H_{\ee^{\wt\epsilon}} \big( Q || Q'    \big),
\end{aligned}
\end{equation}
where $\wt\epsilon = \epsilon - (\eta + 1)  \log \left( \ee^{\epsilon_1} + \frac{1-\gamma}{\gamma} \delta(\epsilon_1) \right)$.
In the third inequality we have used the inequality~\eqref{eq:geom_bounding1} the fact that $[1-\ee^{\epsilon-s}]_+$ is a non-decreasing function of $s$ for all $\epsilon \in \mathbb{R}$.
\end{proof}
\end{theorem}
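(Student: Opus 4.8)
The plan is to derive the bound for $K \sim \mathcal{D}_{\eta,\gamma}$ by specializing the general result of Thm.~\ref{thm:main_thm} to the hockey-stick divergence and then controlling the ratio $\varphi'(q_y)/\varphi'(q'_y)$ of PGF derivatives. The key observation is that the summand in Thm.~\ref{thm:main_thm} is a product of a $Q$-dependent factor and a $\varphi'$-dependent factor, and since the hockey-stick $f(z) = [z - \ee^\epsilon]_+$ is non-decreasing, inflating the ratio $\varphi'(q_y)/\varphi'(q'_y)$ by a uniform upper bound yields a valid upper bound on $H_f(A \| A')$.

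First I would plug the explicit PGF derivative~\eqref{eq:negbin_varphi} into the ratio $\varphi'(q)/\varphi'(q')$, which simplifies to $\big((\gamma-1)q' + 1\big)^{\eta+1} / \big((\gamma-1)q + 1\big)^{\eta+1}$ since the $\gamma^{\eta+1} \cdot \mathbb{E}K$ factors cancel. The next step is to rewrite this in terms of $1-q$ and $1-q'$ (via $(\gamma-1)q + 1 = (1-\gamma)(1-q) + \gamma$) so that I can apply the privacy profile of $Q$. Since $q,q'$ arise from applying the same post-processing function to $Q,Q'$, the privacy profile bound applies to the \emph{complementary} event, giving $1-q' \leq \ee^{\epsilon_1}(1-q) + \delta(\epsilon_1)$; this is exactly inequality~\eqref{eq:1_q}. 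I would then substitute this into the numerator, and use the minor inequality $\gamma \leq \ee^{\epsilon_1}\gamma$ to convert the leftover $+\gamma$ into $+\gamma\,\ee^{\epsilon_1}$, which lets me factor out $\ee^{\epsilon_1}$ cleanly and produces the chain~\eqref{eq:geom_bounding1}. The final step of that chain drops the $(1-\gamma)(1-q)$ term in the denominator (bounding the fraction by its value at $q=1$), yielding the uniform constant $\big(\ee^{\epsilon_1} + \tfrac{1-\gamma}{\gamma}\delta(\epsilon_1)\big)^{\eta+1}$ independent of $y$.

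With this uniform ratio bound in hand, I would return to the hockey-stick sum from Thm.~\ref{thm:main_thm} and rewrite the summand by pulling the logarithm of the ratio into the exponent, so $f$ becomes $[1 - \ee^{\epsilon - \log(Q(y)/Q'(y)) - \log(\varphi'(q_y)/\varphi'(q'_y))}]_+$ multiplied by $Q(y) \cdot \mathbb{E}[K] \cdot \gamma^{\eta+1}/(1-q_y(1-\gamma))^{\eta+1}$. Here I need the auxiliary bound $\gamma^{\eta+1}/(1-q_y(1-\gamma))^{\eta+1} \leq 1$ to replace the $\varphi'$ prefactor by simply $\mathbb{E}[K]$; this holds because $1 - q_y(1-\gamma) \geq \gamma$ for $q_y \in [0,1]$. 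Then, using that $s \mapsto [1 - \ee^{\epsilon - s}]_+$ is non-decreasing and plugging in the uniform bound~\eqref{eq:geom_bounding1} on $\log(\varphi'(q_y)/\varphi'(q'_y))$, the surviving sum collapses into $\mathbb{E}[K] \cdot H_{\ee^{\wt\epsilon}}(Q \| Q')$ with the stated shifted exponent $\wt\epsilon = \epsilon - (\eta+1)\log\big(\ee^{\epsilon_1} + \tfrac{1-\gamma}{\gamma}\delta(\epsilon_1)\big)$, and recognizing $\mathbb{E}[K] = m$ and $H_{\ee^{\wt\epsilon}}(Q\|Q') \leq \delta(\wt\epsilon)$ finishes the bound.

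I expect the main obstacle to be the bookkeeping around the post-processing direction: one must be careful that the privacy profile inequality is applied to $1-q'$ rather than $q'$ (since the relevant monotone transformation reverses the inequality direction), and that the same $\epsilon_1$ is used uniformly across all $y \in \mathcal{Y}$ rather than optimized per-coordinate. This uniform-$\epsilon_1$ constraint is the essential price of working with the hockey-stick divergence rather than Rényi divergence, as already flagged in the remark following Thm.~\ref{thm:main_thm}. The algebraic simplifications (cancellation of $\gamma^{\eta+1}\mathbb{E}K$, the $\gamma \leq \ee^{\epsilon_1}\gamma$ trick) are routine once the structure is set up correctly.
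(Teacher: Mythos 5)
Your proposal is correct and follows essentially the same route as the paper's proof: the same post-processing inequality $1-q' \leq \ee^{\epsilon_1}(1-q) + \delta(\epsilon_1)$, the same chain of bounds on $\varphi'(q)/\varphi'(q')$ (including the $\gamma \leq \ee^{\epsilon_1}\gamma$ step and dropping $(1-\gamma)(1-q)$ from the denominator), the same bound $\varphi'(q_y) \leq \mathbb{E}[K]$, and the same monotonicity argument collapsing the hockey-stick sum into $m \cdot H_{\ee^{\wt\epsilon}}(Q\|Q') \leq m\cdot\delta(\wt\epsilon)$. No gaps.
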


\subsection{Proof of Corollaries~\ref{cor:pointwise0} and ~\ref{cor:pointwise}} 

\begin{corollary}
Let $K \sim \mathcal{D}_{\eta,\gamma}$. If the base mechanism $Q$ is $\epsilon$-DP, then the selection algorithm $A$ is $ (\eta+2) \epsilon$-DP. For $\eta=1$ we get Theorem 1.3 of~\citep{liu2019private}.
 \begin{proof}
      Let $\epsilon_1$ be such that $\delta(\epsilon_1)=0$, where
    $\big(\epsilon_1,\delta(\epsilon_1)\big)$ gives a privacy profile for the base mechanism $Q$. Then,
    \begin{equation*}
        \begin{aligned}
                \widehat{\epsilon} &= \epsilon - (\eta+1)  \log \left( \ee^{\epsilon_1} + \frac{1-\gamma}{\gamma} \delta(\epsilon_1)\right) \\
                &= \epsilon -  (\eta+1) \epsilon_1,
        \end{aligned}
    \end{equation*}
    and by Theorem~\ref{thm:negbin}, $H_{\ee^\epsilon}(A || A')=0$ if
    $\epsilon =  (\eta+2)\epsilon_1$.
 \end{proof}
\end{corollary}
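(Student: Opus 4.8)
The plan is to specialize Theorem~\ref{thm:negbin} to the pure differential privacy case, where the privacy profile of the base mechanism degenerates. Recall that $Q$ being $\epsilon$-DP means exactly that its privacy profile $\delta(\cdot)$ satisfies $\delta(\epsilon_1)=0$ for every threshold $\epsilon_1 \geq \epsilon$, together with the fact that $\delta$ is nonincreasing in its argument. The key observation is that at such a threshold the logarithmic correction term appearing in Theorem~\ref{thm:negbin} collapses: since the summand $\frac{1-\gamma}{\gamma}\delta(\epsilon_1)$ vanishes, we get $\log\!\big(\ee^{\epsilon_1} + \frac{1-\gamma}{\gamma}\delta(\epsilon_1)\big) = \log \ee^{\epsilon_1} = \epsilon_1$.

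Concretely, I would invoke Theorem~\ref{thm:negbin} with the free parameter $\epsilon_1$ set equal to the pure-DP level $\epsilon$ of $Q$, so that $\delta(\epsilon_1)=0$. Writing $\epsilon_A$ for the hockey-stick order of the selection algorithm $A$, the shifted threshold becomes
\begin{equation*}
\wh\epsilon = \epsilon_A - (\eta+1)\log\!\Big(\ee^{\epsilon} + \tfrac{1-\gamma}{\gamma}\cdot 0\Big) = \epsilon_A - (\eta+1)\epsilon .
\end{equation*}
Theorem~\ref{thm:negbin} then gives $H_{\ee^{\epsilon_A}}(A\|A') \leq m\cdot\delta(\wh\epsilon)$, so $A$ is $\epsilon_A$-DP as soon as $\delta(\wh\epsilon)=0$. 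By pure DP of $Q$ this holds precisely when $\wh\epsilon \geq \epsilon$, i.e. when $\epsilon_A - (\eta+1)\epsilon \geq \epsilon$. The smallest admissible choice is $\epsilon_A = (\eta+2)\epsilon$, which makes $\wh\epsilon = \epsilon$ and hence $\delta(\wh\epsilon)=0$, proving that $A$ is $(\eta+2)\epsilon$-DP.

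Finally, substituting $\eta=1$ (the geometric distribution) yields $(\eta+2)\epsilon = 3\epsilon$, recovering Theorem~1.3 of~\citet{liu2019private}. There is essentially no hard step here: the argument is a clean specialization of Theorem~\ref{thm:negbin}. The only point that warrants care is the dictionary between pure $\epsilon$-DP and the privacy profile --- namely that $\epsilon$-DP is equivalent to $\delta(\epsilon_1)=0$ for all $\epsilon_1\geq\epsilon$ together with monotonicity of $\delta$ --- which is what simultaneously guarantees that the logarithmic term simplifies to $\epsilon_1$ and that $\delta(\wh\epsilon)$ remains zero at the chosen threshold.
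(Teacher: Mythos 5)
Your proposal is correct and follows essentially the same route as the paper's own proof: both specialize Theorem~\ref{thm:negbin} by choosing the free threshold $\epsilon_1$ at the pure-DP level of $Q$ so that $\delta(\epsilon_1)=0$, collapse the logarithmic term to $(\eta+1)\epsilon_1$, and conclude $H_{\ee^{\epsilon}}(A\|A')\leq m\cdot\delta(\wh\epsilon)=0$ at $\epsilon=(\eta+2)\epsilon_1$. Your version is marginally more explicit about the dictionary between pure $\epsilon$-DP and the privacy profile (monotonicity and vanishing of $\delta$ above the threshold), but the argument is the same.
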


\begin{corollary} 
 Let $K \sim \mathcal{D}_{\eta,\gamma}$. If the base mechanism $Q$ is $(\epsilon,\delta)$-DP, then then the selection algorithm $A$ is $ \big((\eta+2) \epsilon + \gamma^{-1}  \delta, m \delta \big)$-DP.
 \begin{proof}
 Let $\big(\epsilon_1,\delta(\epsilon_1)\big)$ be a privacy profile for the base mechanism $Q$
and $\epsilon_1 \geq 0$. Then,
    \begin{equation*}
        \begin{aligned}
                \widehat{\epsilon} &= \epsilon - (\eta+1)  \log \left( \ee^{\epsilon_1} + \frac{1-\gamma}{\gamma} \delta(\epsilon_1)\right) \\
                &\geq \epsilon - (\eta+1)  \log \left( \ee^{\epsilon_1}\left(1 + \gamma^{-1} \delta(\epsilon_1)\right) \right) \\
                &\geq \epsilon - (\eta+1) \left( \epsilon_1 + \gamma^{-1} \delta(\epsilon_1)\right) \\
        \end{aligned}
    \end{equation*}
    and by Thm.~\ref{thm:negbin}, $H_{\ee^\epsilon}(A || A')= m \delta$ if
    $\epsilon =  (\eta+2)\epsilon_1 +  \gamma^{-1}  \delta$.
 \end{proof}
\end{corollary}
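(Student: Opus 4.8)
The plan is to obtain this as an immediate corollary of Theorem~\ref{thm:negbin}, instantiated at the one point of the privacy profile of $Q$ that the hypothesis provides. Since $Q$ is $(\epsilon,\delta)$-DP, its privacy profile satisfies $\delta_Q(\epsilon)\le\delta$, so I would take the free parameter of Theorem~\ref{thm:negbin} to be the base level $\epsilon_1=\epsilon$ together with $\delta_Q(\epsilon_1)=\delta$. For any output level $\epsilon_{\mathrm{target}}$ of $A$, the theorem then yields
\begin{equation*}
H_{\ee^{\epsilon_{\mathrm{target}}}}(A\|A')\le m\cdot\delta_Q(\wh\epsilon),\qquad
\wh\epsilon=\epsilon_{\mathrm{target}}-(\eta+1)\log\Bigl(\ee^{\epsilon}+\tfrac{1-\gamma}{\gamma}\,\delta\Bigr).
\end{equation*}
The goal reduces to picking $\epsilon_{\mathrm{target}}$ so that $m\,\delta_Q(\wh\epsilon)\le m\,\delta$. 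Because the privacy profile is non-increasing in its argument (the hockey-stick function $[z-\ee^{\cdot}]_+$ is non-increasing), it is enough to force $\wh\epsilon\ge\epsilon$, since then $\delta_Q(\wh\epsilon)\le\delta_Q(\epsilon)\le\delta$. Note that this monotonicity step is the only reason the hypothesis needs to supply the specific point $(\epsilon,\delta)$ rather than the whole profile.

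The one genuine computation is to linearize the logarithmic penalty so that the $\wh\epsilon\ge\epsilon$ condition becomes an explicit budget. I would factor $\ee^{\epsilon}$ out of the argument, using $1-\gamma\le 1\le\ee^{\epsilon}$ (valid because $\epsilon\ge 0$) to replace $\tfrac{1-\gamma}{\gamma}$ by $\gamma^{-1}\ee^{\epsilon}$ inside, and then apply $\log(1+x)\le x$:
\begin{equation*}
\log\Bigl(\ee^{\epsilon}+\tfrac{1-\gamma}{\gamma}\,\delta\Bigr)
\le\log\bigl(\ee^{\epsilon}(1+\gamma^{-1}\delta)\bigr)
=\epsilon+\log(1+\gamma^{-1}\delta)
\le\epsilon+\gamma^{-1}\delta .
\end{equation*}
Substituting back gives the clean lower bound $\wh\epsilon\ge\epsilon_{\mathrm{target}}-(\eta+1)\bigl(\epsilon+\gamma^{-1}\delta\bigr)$, which now depends on $\epsilon_{\mathrm{target}}$ only through a transparent linear expression.

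Finally I would collect the $\epsilon$- and $\delta$-contributions of the penalty: taking the output level of $A$ to be $(\eta+2)\epsilon+\gamma^{-1}\delta$ makes $\wh\epsilon$ at least as large as the base level $\epsilon$, so monotonicity of $\delta_Q$ forces $\delta_Q(\wh\epsilon)\le\delta$ and hence $H_{\ee^{(\eta+2)\epsilon+\gamma^{-1}\delta}}(A\|A')\le m\delta$; by the hockey-stick characterization of $(\epsilon,\delta)$-DP this is exactly the claimed $\bigl((\eta+2)\epsilon+\gamma^{-1}\delta,\,m\delta\bigr)$-DP guarantee. As sanity checks, setting $\delta=0$ recovers the pure-DP statement of Corollary~\ref{cor:pointwise0}, and for $\eta=1$ (the geometric law, where $m=\gamma^{-1}$) the bound collapses to the $\bigl(3\epsilon+m\delta,\,m\delta\bigr)$ form recorded after the corollary. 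Since everything downstream of Theorem~\ref{thm:negbin} is bookkeeping, there is no real obstacle here: the only thing to get right is routing the $\tfrac{1-\gamma}{\gamma}$ coefficient into a factor of $\gamma^{-1}$ without discarding the $\ee^{\epsilon}$ that carries the $\epsilon$-part, and keeping the monotonicity of the profile pointed in the correct (decreasing) direction.
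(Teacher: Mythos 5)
Your proposal follows exactly the same route as the paper's own proof: instantiate Thm.~\ref{thm:negbin} at the one known profile point $(\epsilon,\delta)$, factor $\ee^{\epsilon}$ out of the logarithmic penalty using $1-\gamma\le 1\le\ee^{\epsilon}$, apply $\log(1+x)\le x$, and close via monotonicity of the privacy profile. Those steps are all valid. The genuine gap is in the final substitution — and it is a gap that the paper's own proof shares. From your lower bound $\wh\epsilon\ge\epsilon_{\mathrm{target}}-(\eta+1)\bigl(\epsilon+\gamma^{-1}\delta\bigr)$, plugging in $\epsilon_{\mathrm{target}}=(\eta+2)\epsilon+\gamma^{-1}\delta$ yields only $\wh\epsilon\ge\epsilon-\eta\gamma^{-1}\delta$, not $\wh\epsilon\ge\epsilon$: the single $\gamma^{-1}\delta$ in the target pays for just one of the $\eta+1$ copies of $\gamma^{-1}\delta$ in the penalty. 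For $\eta\le 0$ this is harmless since $-\eta\gamma^{-1}\delta\ge 0$, but for $\eta>0$ (including the geometric case $\eta=1$ emphasized right after the corollary) the argument does not close. Nor can it be repaired by working with the exact penalty instead of the linearized one: $\wh\epsilon\ge\epsilon$ with your target is equivalent to $\log\bigl(1+\ee^{-\epsilon}\tfrac{1-\gamma}{\gamma}\delta\bigr)\le\tfrac{\gamma^{-1}\delta}{\eta+1}$, which fails for $\eta=1$ with small $\epsilon$, small $\gamma$ and small $\gamma^{-1}\delta$, where the left side is $\approx\gamma^{-1}\delta$ and the right side is $\gamma^{-1}\delta/2$. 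Since $\delta_Q$ is non-increasing and the hypothesis only controls it at arguments $\ge\epsilon$, a value $\wh\epsilon<\epsilon$ gives no usable bound on $\delta_Q(\wh\epsilon)$, so the conclusion $H_{\ee^{\epsilon_{\mathrm{target}}}}(A\|A')\le m\delta$ does not follow.

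What your computation (and the paper's) actually establishes is the slightly weaker statement that $A$ is $\bigl((\eta+2)\epsilon+(\eta+1)\gamma^{-1}\delta,\;m\delta\bigr)$-DP, or in sharper form $\bigl(\epsilon+(\eta+1)\log\bigl(\ee^{\epsilon}+\tfrac{1-\gamma}{\gamma}\delta\bigr),\;m\delta\bigr)$-DP; for $\eta=1$ this reads $(3\epsilon+2m\delta,\,m\delta)$ rather than the claimed $(3\epsilon+m\delta,\,m\delta)$. So you have faithfully reproduced the paper's argument, including its final-step slip; to make the corollary airtight as stated for $\eta>0$, either the $\epsilon$-term must carry the factor $\eta+1$ on $\gamma^{-1}\delta$, or a further idea beyond what appears in either proof is required.
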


\subsection{Proof of Corollary~\ref{lem:GDP}} 

\begin{lemma}[$\epsilon$-values when $Q$ is GDP]

Let $K \sim \mathcal{D}_{\eta,\gamma}$. 
Suppose the base mechanism is dominated by the Gaussian mechanism with noise parameter $\sigma > 0$ and $L_2$-sensitivity 1. Then, for a fixed $\delta>0$, the private selection algorithm $A$ is $(\epsilon,\delta)$-DP for
$$
\epsilon = (\eta+2) \left( \frac{1}{2 \sigma^2} + \frac{1}{\sigma} \sqrt{ 2 \log \frac{1}{\gamma \cdot \delta}}\right) + \delta.
$$
\begin{proof}
By~\citep[Lemma 3,][]{balle2018improving}, we know that for the Gaussian mechanism with sensitivity 1 and noise scale $\sigma$ the privacy loss random variable $\omega$ is distributed as
$\omega \sim \mathcal{N}\left( \frac{1}{2 \sigma^2}, \frac{1}{\sigma^2} \right).$
Thus, using a simple Chernoff bound for the Gaussian, for its privacy profile $\delta(\epsilon_1)$ we have
$$
\delta(\epsilon_1) \leq \mathbb{P}(\omega \geq \epsilon_1) \leq \ee^{ - \frac{\wt\epsilon^2 \sigma^2}{2}},
$$
where $\wt \epsilon = \epsilon_1 -  \frac{1}{2 \sigma^2}$.
Choosing
$$
\epsilon_1 = \frac{1}{2 \sigma^2} + \frac{1}{\sigma} \sqrt{ 2 \log \frac{m}{\delta} },
$$
we see that 
$$
\delta(\epsilon_1) \leq \frac{\delta}{m}.
$$
Furthermore, since $\epsilon_1 \geq 0$, we have the following bound for the additional term in the bound of Thm.~\ref{thm:negbin}:
\begin{equation} \label{eq:GDP_proof}
    \begin{aligned}
    (\eta+1)  \log \left( \ee^{\epsilon_1} + \tfrac{1-\gamma}{\gamma} \delta(\epsilon_1)\right) & \leq 
      (\eta+1)  \log \left( \ee^{\epsilon_1} + \ee^{\epsilon_1} \tfrac{1-\gamma}{\gamma} \delta(\epsilon_1)\right) \\
     & =   (\eta+1) \epsilon_1 + \log\big(1 + \tfrac{1-\gamma}{\gamma} \delta(\epsilon_1) \big) \\
        & \leq 
        (\eta+1) \epsilon_1 + \log\big(1 + \gamma^{-1} \cdot \delta(\epsilon_1) \big) \\
        & \leq 
        (\eta+1) \epsilon_1 + \log\big(1 + \delta \big) \\
        & \leq (\eta+1) \epsilon_1 + \delta.
    \end{aligned}
\end{equation}
Setting $\epsilon = (\eta+2) \epsilon_1 + \delta$, the inequality \eqref{eq:GDP_proof} and  Thm.~\ref{thm:negbin} show that
\begin{equation*}
    \begin{aligned}
        H_{\ee^\epsilon}(A || A') 
    \leq & m \cdot \delta\big(\epsilon -(\eta+1) \epsilon_1 + \log\big(1 + \tfrac{1-\gamma}{\gamma} \delta(\epsilon_1) \big)\big) \\
    \leq & m \cdot \delta(\epsilon_1) \leq \delta.
    \end{aligned}
\end{equation*}
\end{proof}
\end{lemma}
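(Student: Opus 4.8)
The plan is to specialize the general truncated-negative-binomial bound of Thm.~\ref{thm:negbin} to the case where the base mechanism's privacy profile $\delta(\cdot)$ is dominated by that of a Gaussian mechanism, and then to choose the free parameter $\epsilon_1$ so that the correction term collapses into a clean closed form. Since domination means the true profile is pointwise bounded by the Gaussian one, it suffices to work with the Gaussian profile throughout. I would first invoke the Balle--Wang characterization \citep{balle2018improving}: for a Gaussian mechanism with $L_2$-sensitivity $1$ and noise scale $\sigma$, the privacy-loss random variable satisfies $\omega \sim \mathcal{N}(\tfrac{1}{2\sigma^2}, \tfrac{1}{\sigma^2})$, so a one-sided Gaussian tail (Chernoff) bound yields $\delta(\epsilon_1) \le \mathbb{P}(\omega \ge \epsilon_1) \le \exp(-\tfrac{\sigma^2}{2}(\epsilon_1 - \tfrac{1}{2\sigma^2})^2)$ for $\epsilon_1 \ge \tfrac{1}{2\sigma^2}$.

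Next I would pick $\epsilon_1$ so that this tail hits the target needed to push the right-hand side $m\cdot\delta(\widehat\epsilon)$ of Thm.~\ref{thm:negbin} below $\delta$. Setting the tail equal to $\gamma\delta$ (equivalently $\delta/m$ in the geometric case, where $m=\gamma^{-1}$) and solving for $\epsilon_1$ gives exactly $\epsilon_1 = \tfrac{1}{2\sigma^2} + \tfrac{1}{\sigma}\sqrt{2\log\tfrac{1}{\gamma\delta}}$, which is the leading expression appearing in the stated $\epsilon$. The $(\eta+2)$ multiplier then emerges from combining the leading $\epsilon_1$ in the threshold with the $(\eta+1)\epsilon_1$ contributed by the correction term.

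The main technical step is controlling the correction term $(\eta+1)\log(e^{\epsilon_1} + \tfrac{1-\gamma}{\gamma}\delta(\epsilon_1))$. Using $\epsilon_1 \ge 0$ so that $e^{\epsilon_1} \ge 1$, I would factor out $e^{\epsilon_1}$, apply $\log(1+x)\le x$ to the residual, and use $\tfrac{1-\gamma}{\gamma}\delta(\epsilon_1)\le \gamma^{-1}\delta(\epsilon_1)$ together with the chosen tail estimate $\delta(\epsilon_1)\le\gamma\delta$ to conclude that the correction term is at most $(\eta+1)\epsilon_1 + \delta$. Substituting the candidate $\epsilon = (\eta+2)\epsilon_1 + \delta$ into the definition $\widehat\epsilon = \epsilon - (\eta+1)\log(\cdots)$ then forces $\widehat\epsilon \ge \epsilon_1$, and since $\delta(\cdot)$ is nonincreasing, $m\cdot\delta(\widehat\epsilon)\le m\cdot\delta(\epsilon_1)\le\delta$, yielding $(\epsilon,\delta)$-DP via Thm.~\ref{thm:negbin}.

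I expect the bookkeeping of the correction term to be the only delicate part: making sure the logarithm of the sum really collapses to $(\eta+1)\epsilon_1$ plus a negligible $O(\delta)$ remainder, and that the chosen $\epsilon_1$ simultaneously satisfies $\epsilon_1\ge\tfrac{1}{2\sigma^2}$ (so the Chernoff bound applies) and drives $m\,\delta(\epsilon_1)$ below $\delta$. Everything else is a direct substitution of the Gaussian tail estimate into Thm.~\ref{thm:negbin}.
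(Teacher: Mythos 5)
Your proposal is correct and follows essentially the same route as the paper's proof: invoke the Balle--Wang characterization of the Gaussian privacy-loss random variable together with a Chernoff tail bound, calibrate $\epsilon_1$ so that the tail is small, bound the correction term $(\eta+1)\log\left(\ee^{\epsilon_1} + \tfrac{1-\gamma}{\gamma}\delta(\epsilon_1)\right)$ by $(\eta+1)\epsilon_1+\delta$, and conclude via Thm.~\ref{thm:negbin} and the monotonicity of the privacy profile. The only substantive difference is the calibration: you set $\delta(\epsilon_1)\le\gamma\delta$ (which produces the stated $\sqrt{2\log\tfrac{1}{\gamma\delta}}$ directly), whereas the paper sets $\delta(\epsilon_1)\le\delta/m$; the two coincide exactly in the geometric case $m=\gamma^{-1}$ (i.e., $\eta=1$), and for general $\eta$ each choice leaves a symmetric gap --- your final step $m\,\delta(\epsilon_1)\le m\gamma\delta\le\delta$ needs $m\gamma\le 1$ (true only for $\eta\le 1$), while the paper's intermediate step $\gamma^{-1}\delta(\epsilon_1)\le\delta$ needs $\gamma^{-1}\le m$ (true only for $\eta\ge 1$) --- so both arguments implicitly rely on the identification $m=\gamma^{-1}$ and are airtight precisely in the geometric case.
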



\subsection{Proof of Theorem~\ref{thm:hs_bin} (Private Selection, Binomial Distribution)} 

First, the following auxiliary lemma is needed.

\begin{lemma} \label{lem:aux0}
    Let $a,b,c,d > 0$. Then, for $x \geq 0$, the function
    $$
    f(x) = \frac{ax + b}{cx + d}
    $$
    is non-decreasing if and only if $\tfrac{a}{b} \geq \tfrac{c}{d}$.
    \begin{proof}
        The claim follows from the expression
        $$
        f'(x) = \frac{ad - cb}{(cx + d)^2}.
        $$
    \end{proof}
\end{lemma}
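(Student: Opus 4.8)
The plan is to reduce the monotonicity question to the sign of a single constant by differentiating $f$ once. First I would apply the quotient rule to $f(x) = \frac{ax+b}{cx+d}$, which gives
$$
f'(x) = \frac{a(cx+d) - c(ax+b)}{(cx+d)^2}.
$$
The key observation is that the $acx$ terms in the numerator cancel, leaving the $x$-independent quantity $ad - cb$, so that $f'(x) = \frac{ad-cb}{(cx+d)^2}$.

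Next I would note that since $c,d>0$ we have $cx + d > 0$ for every $x \geq 0$, so the denominator $(cx+d)^2$ is strictly positive on the whole ray. Consequently the sign of $f'(x)$ is constant and equals the sign of $ad - cb$. Therefore $f$ is non-decreasing on $[0,\infty)$ if and only if $f'(x) \geq 0$ for all $x \geq 0$, which holds if and only if $ad - cb \geq 0$.

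Finally I would translate the condition $ad - cb \geq 0$, i.e. $ad \geq cb$, into the claimed form. Dividing both sides by $bd$, which is legitimate and sign-preserving because $b,d>0$, yields $\frac{a}{b} \geq \frac{c}{d}$, establishing the equivalence in both directions.

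Since the numerator of $f'$ is a constant, there is essentially no obstacle here: the only point requiring a moment's care is that \emph{global} monotonicity on $[0,\infty)$ (rather than merely a statement at a single point) follows precisely because $ad - cb$ does not depend on $x$, so the derivative cannot change sign along the ray.
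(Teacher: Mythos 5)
Your proof is correct and follows exactly the paper's approach: the paper's one-line proof simply cites the same derivative formula $f'(x) = \frac{ad-cb}{(cx+d)^2}$, and you have filled in the quotient-rule computation, the positivity of the denominator, and the equivalence $ad \geq cb \iff \tfrac{a}{b} \geq \tfrac{c}{d}$ that the paper leaves implicit.
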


Recall that for $K \sim \mathrm{Bin}(n,p)$, the probability generating function is given by 
\begin{equation} \label{eq:PGF_binomial}
    \varphi(z) = (1 - p + pz)^n.
\end{equation}

\begin{theorem} 
Let $K \sim \mathrm{Bin}(n,p)$ for some $n \in \mathbb{N}$ and $0< p < 1$, and let $\delta({\epsilon_1})$, ${\epsilon_1} \in \mathbb{R}$, define the privacy profile of the base mechanism $Q$.
Suppose 
$$
{\epsilon_1} \geq \log \left( 1 + \tfrac{p}{1-p} \delta(\epsilon_1) \right).
$$
Then, for $A$ and $A'$, the output distributions of the selection algorithm evaluated on neighboring datasets $X$ and $X'$, respectively, for all $\epsilon > 0$ and
for all $\epsilon_1 \geq 0$,
\begin{equation} \label{eq:bin_bound_0A}
    H_{\ee^\epsilon}(A || A') \leq m \cdot \delta(\wh{\epsilon}), 
\end{equation}
where 
$$
\wh{\epsilon} = \epsilon - (n-1) \log \big(  1 +  p (\ee^{\epsilon_1} - 1) + p \delta(\epsilon_1)  \big).
$$

\begin{proof}

Using the PGF of the binomial distribution given in Eq.~\eqref{eq:PGF_binomial}
by the auxiliary Lemma~\ref{lem:aux0}, for each $y \in \mathcal{Y}$, 
\begin{equation} \label{eq:bin1}
\begin{aligned}
\frac{\varphi'(q_y)}{ \varphi'(q'_y) } &=  \left( \frac{ 1 - p + p q_y}{1 - p + p q_y'} \right)^{n-1} \\
& \leq \left( \frac{ 1 - p + p \ee^{\epsilon_1} q_y' + p \delta(\epsilon_1) }{1 - p + p q_y'} \right)^{n-1} \\
& = \left( 1 + \frac{  p (\ee^{\epsilon_1} - 1) q_y' + p \delta(\epsilon_1) }{1 - p + p q_y'} \right)^{n-1} \\
& \leq \big( 1 +  p (\ee^{\epsilon_1} - 1) + p \delta(\epsilon_1) \big)^{n-1}, \\
\end{aligned}
\end{equation}
in case
$$
\frac{p (\ee^{\epsilon_1} - 1)}{p \delta(\epsilon_1)} \geq \frac{p}{1-p},
$$
i.e., if 
$$
{\epsilon_1} \geq \log \left( 1 + \frac{p}{1-p} \delta(\epsilon_1) \big)  \right).
$$
Moreover, we see that
\begin{equation} \label{eq:bin2}
\varphi'(q) \leq n \cdot p = m
\end{equation}
for all $0 \leq q \leq 1$. The claim follows from the inequalities~\eqref{eq:bin1} and~\eqref{eq:bin2} and from Theorem~\ref{thm:main_thm}. 
\end{proof}
\end{theorem}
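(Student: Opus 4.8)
The plan is to follow the template of the proof of Theorem~\ref{thm:negbin}, specializing the general bound of Theorem~\ref{thm:main_thm} to the hockey-stick divergence $f(z) = [z - \ee^\epsilon]_+$ and plugging in the binomial PGF. Recall from~\eqref{eq:PGF_binomial} that $\varphi(z) = (1-p+pz)^n$, hence $\varphi'(z) = np\,(1-p+pz)^{n-1}$. The entire argument reduces to controlling two quantities that appear after each summand of Theorem~\ref{thm:main_thm} is rewritten: the ratio $\varphi'(q_y)/\varphi'(q'_y)$ and the factor $\varphi'(q_y)$ itself.

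First I would prove the ratio bound. Since $q_y$ and $q'_y$ are the same post-processing applied to $Q$ and $Q'$, the privacy profile gives $q_y \leq \ee^{\epsilon_1} q'_y + \delta(\epsilon_1)$. Substituting into $\varphi'(q_y)/\varphi'(q'_y) = \big((1-p+pq_y)/(1-p+pq'_y)\big)^{n-1}$ and writing the result as $\big(1 + \tfrac{p(\ee^{\epsilon_1}-1)q'_y + p\delta(\epsilon_1)}{1-p+pq'_y}\big)^{n-1}$, the inner fraction is of the form $(ax+b)/(cx+d)$ in the variable $x = q'_y$. By Lemma~\ref{lem:aux0} it is non-decreasing in $x$ precisely when $\tfrac{p(\ee^{\epsilon_1}-1)}{p\delta(\epsilon_1)} \geq \tfrac{p}{1-p}$, which rearranges exactly to the hypothesis $\epsilon_1 \geq \log(1 + \tfrac{p}{1-p}\delta(\epsilon_1))$. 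Under this condition the supremum over $q'_y \in [0,1]$ is attained at $q'_y = 1$, yielding $\varphi'(q_y)/\varphi'(q'_y) \leq \big(1 + p(\ee^{\epsilon_1}-1) + p\delta(\epsilon_1)\big)^{n-1}$. Separately, since $1-p+pq \leq 1$ for $q \in [0,1]$, we obtain the uniform bound $\varphi'(q) \leq np = m$.

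The final step assembles these estimates exactly as in~\eqref{eq:geom_bounding2}. Applying Theorem~\ref{thm:main_thm}, each summand can be written as $\big[1 - \ee^{\epsilon - \log\frac{Q(y)}{Q'(y)} - \log\frac{\varphi'(q_y)}{\varphi'(q'_y)}}\big]_+ \cdot Q(y)\,\varphi'(q_y)$. Using $\varphi'(q_y) \leq m$, the ratio bound, and the fact that $s \mapsto [1-\ee^{\epsilon-s}]_+$ is non-decreasing in $s$, I would replace $\log\frac{\varphi'(q_y)}{\varphi'(q'_y)}$ by $(n-1)\log(1+p(\ee^{\epsilon_1}-1)+p\delta(\epsilon_1))$ and recognize the resulting sum as $m \cdot H_{\ee^{\wh\epsilon}}(Q\|Q')$ with $\wh\epsilon = \epsilon - (n-1)\log(1+p(\ee^{\epsilon_1}-1)+p\delta(\epsilon_1))$, which is the claim.

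I expect the main obstacle to be the bookkeeping in the ratio bound: confirming that the monotonicity condition furnished by Lemma~\ref{lem:aux0} matches the stated hypothesis bit-for-bit, and that the supremum is correctly located at the endpoint $q'_y=1$ rather than $q'_y=0$. Everything downstream is a mechanical transcription of the negative-binomial argument, with $\mathbb{E}[K]$ replaced by $m$ and the truncated-negative-binomial ratio factor replaced by the binomial one.
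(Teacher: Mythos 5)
Your proposal matches the paper's proof essentially step for step: the same rewriting of $\varphi'(q_y)/\varphi'(q'_y)$ via the binomial PGF, the same application of Lemma~\ref{lem:aux0} to show the inner fraction is non-decreasing in $q'_y$ exactly under the stated hypothesis (so the supremum sits at $q'_y=1$), the same uniform bound $\varphi'(q) \leq np = m$, and the same final assembly through Theorem~\ref{thm:main_thm}. The only difference is that you write out the final assembly explicitly (mirroring~\eqref{eq:geom_bounding2}), whereas the paper compresses it into ``the claim follows''; your version is a correct and faithful reconstruction.
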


\subsection{Proof of Corollary~\ref{cor:binomial_to_poisson}}
\begin{corollary}
Let $K \sim \mathrm{Poisson}(m)$ for some $m \in \mathbb{N}$, and let $\delta({\epsilon_1})$, ${\epsilon_1} \in \mathbb{R}$, define the privacy profile of the base mechanism $Q$.
Then, for $A$ and $A'$, the output distributions of the selection algorithm evaluated on neighboring datasets $X$ and $X'$, respectively, and
for all $\epsilon > 0$, and for all $\epsilon_1 \geq 0$,
\begin{equation*} 
    H_{\ee^\epsilon}(A || A') \leq m \cdot \delta(\wh{\epsilon}), 
\end{equation*}
where 
$$
\wh{\epsilon} = \epsilon - m \cdot ( \ee^{\epsilon_1} - 1 ) - m \cdot \delta(\epsilon_1).
$$    
\begin{proof}
Let $K \sim \mathrm{Poisson}(m)$ and $K_n \sim \mathrm{Bin}(n,m/n)$ and let $A,A'$ denote the density functions of the private selection algorithm corresponding to $K$ and let $A_n,A_n'$ those corresponding to $K_n$, evaluated on neighboring datasets $X,X'$, respectively. Looking at the form of $A$ given in Eq.~\eqref{eq:RDP_A0}, we have that
\begin{equation*}
    \begin{aligned}
        H_{\ee^\epsilon}\big( A || A' \big) &= \sum_{y \in \mathcal{Y}} \max\{ A(y) - \ee^\epsilon A'(y),0 \} \\
        &=\sum_{k=1}^{\infty} \max\{ \mathbf{P}[K=k]  \cdot \big( Q( \leq y)^k - Q(<y)^k \big) - \ee^\epsilon \mathbf{P}[K=k] \cdot \big( Q'( \leq y)^k - Q'(<y)^k \big) , 0 \} \\
        &\leq \sum_{k=1}^{\infty} \max\{ \mathbf{P}[K_n=k] \cdot \big( Q( \leq y)^k - Q(<y)^k \big) - \ee^\epsilon \mathbf{P}[K_n=k] \cdot \big( Q'( \leq y)^k - Q'(<y)^k \big), 0 \}  \\
        & \quad + (1+\ee^\epsilon) \sum_{k=0}^\infty \abs{\mathbf{P}[K=k] - \mathbf{P}[K_n=k]} \\
        & = H_{\ee^\epsilon}\big( A_n || A_n' \big) + (1+\ee^\epsilon) \sum_{k=0}^\infty \abs{\mathbf{P}[K=k] - \mathbf{P}[K_n=k]},
    \end{aligned}
\end{equation*}
where the inequality follows from the fact that $\max\{ a + b,0 \} \leq \abs{a} + \max\{ b ,0 \} $ for all $a,b \in \mathbb{R}$.
Since by Le Cam's inequality,  $\mathrm{Bin}(n,m/n) \rightarrow \mathrm{Poisson}(m)$ in total variation distance as $n \rightarrow \infty$, we have that
$$
H_{\ee^\epsilon}\big( A || A' \big) = \lim_{n \rightarrow \infty} H_{\ee^\epsilon}\big( A_n || A_n' \big).
$$
Fixing $n \cdot p = m$ in the bound~\eqref{eq:bin_bound_0} of Thm.~\ref{thm:hs_bin} (bound for the case $K \sim \mathrm{Bin}(n,n/m)$), we see that the bound approaches the bound \eqref{eq:poisson_bound_0} of Cor.~\ref{cor:binomial_to_poisson} (bound for the case $K \sim \mathrm{Poisson}(m)$) as $p \rightarrow 0$, since then
$$
(n-1) \log \big(  1 +  p (\ee^{\epsilon_1} - 1) + p \delta(\epsilon_1)  \big)
\rightarrow m \cdot (\ee^{\epsilon_1} - 1) + m \cdot \delta(\epsilon_1).
$$
This follows from the fact that $\frac{\log (1+x)}{x} \rightarrow 1$ as $x \rightarrow 0$.
\end{proof}
\end{corollary}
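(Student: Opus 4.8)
The plan is to realize the Poisson bound as a limit of the binomial bound of Theorem~\ref{thm:hs_bin}, coupling the parameters through $p = m/n$ and letting $n \to \infty$. Let $K_n \sim \mathrm{Bin}(n, m/n)$, and write $A_n, A_n'$ for the selection-algorithm densities built from $K_n$ on the neighboring datasets $X, X'$, while $A, A'$ are those built from the target $K \sim \mathrm{Poisson}(m)$. Starting from the hockey-stick form $H_{\ee^\epsilon}(A || A') = \sum_{y} \max\{A(y) - \ee^\epsilon A'(y), 0\}$ and inserting the representation $A(y) = \sum_{k \geq 1} \mathbf{P}[K=k]\,\big( Q(\leq y)^k - Q(<y)^k \big)$ from Eq.~\eqref{eq:RDP_A0}, I would apply the elementary inequality $\max\{a+b,0\} \leq \abs{a} + \max\{b,0\}$ termwise, separating the genuine selection divergence from the discrepancy between the two distributions of $K$. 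This yields
\begin{equation*}
H_{\ee^\epsilon}(A || A') \leq H_{\ee^\epsilon}(A_n || A_n') + (1 + \ee^\epsilon)\sum_{k=0}^\infty \abs{\mathbf{P}[K=k] - \mathbf{P}[K_n=k]}.
\end{equation*}

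Next I would control the error term and take the limit. By Le Cam's inequality, $\mathrm{Bin}(n,m/n) \to \mathrm{Poisson}(m)$ in total variation as $n \to \infty$, so the $\ell_1$ gap between the two pmfs vanishes, giving $H_{\ee^\epsilon}(A || A') = \lim_{n\to\infty} H_{\ee^\epsilon}(A_n || A_n')$. Applying Theorem~\ref{thm:hs_bin} with $p = m/n$ bounds each $H_{\ee^\epsilon}(A_n || A_n') \leq m \cdot \delta(\wh\epsilon_n)$, where $\wh\epsilon_n = \epsilon - (n-1)\log\big(1 + \tfrac{m}{n}(\ee^{\epsilon_1}-1) + \tfrac{m}{n}\delta(\epsilon_1)\big)$. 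Setting $x_n = \tfrac{m}{n}\big(\ee^{\epsilon_1}-1+\delta(\epsilon_1)\big) \to 0$ and using $\log(1+x)/x \to 1$ as $x \to 0$, I obtain $(n-1)\log(1+x_n) \to m(\ee^{\epsilon_1}-1) + m\,\delta(\epsilon_1)$, hence $\wh\epsilon_n \to \wh\epsilon$; continuity of the (convex, nonincreasing) privacy profile $\delta(\cdot)$ then gives $\delta(\wh\epsilon_n) \to \delta(\wh\epsilon)$, and the claimed bound follows.

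The main obstacle is bookkeeping around the hypotheses and the interchange of limits rather than any deep difficulty. I would verify that the side condition of Theorem~\ref{thm:hs_bin}, namely $\epsilon_1 \geq \log\big(1 + \tfrac{p}{1-p}\delta(\epsilon_1)\big)$, is \emph{eventually} satisfied along the sequence: as $p = m/n \to 0$ its right-hand side tends to $0$ while $\epsilon_1 \geq 0$ is fixed, so the binomial bound is valid for all large $n$. I would also confirm that the termwise splitting is legitimate, which is guaranteed by absolute convergence of the series (itself implied by the total-variation bound), and that passing the limit inside $\delta$ is sound, which holds because privacy profiles are continuous wherever finite.
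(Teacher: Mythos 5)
Your proposal is correct and follows essentially the same route as the paper's own proof: the splitting $\max\{a+b,0\} \leq \abs{a} + \max\{b,0\}$ to compare the Poisson and $\mathrm{Bin}(n,m/n)$ selection mechanisms, Le Cam's inequality for total-variation convergence, and the limit $(n-1)\log(1+x_n) \to m(\ee^{\epsilon_1}-1) + m\,\delta(\epsilon_1)$ via $\log(1+x)/x \to 1$. Your added checks (that the side condition of Theorem~\ref{thm:hs_bin} holds for large $n$, and that the limit can be passed inside $\delta$) are points the paper glosses over, and they are sound for $\epsilon_1>0$; one could also avoid continuity of $\delta$ entirely by noting $\log(1+x)\leq x$ implies $\wh\epsilon_n \geq \wh\epsilon$, hence $\delta(\wh\epsilon_n)\leq\delta(\wh\epsilon)$ for every $n$.
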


\begin{remark}
    We can also get Cor.~\ref{cor:binomial_to_poisson} directly using the PGF of the Poisson distribution.
    For $K \sim \mathrm{Poisson}(m)$, the PGF is $\varphi(z) = \ee^{m (z-1)}$, i.e. $\varphi'(z) = m \cdot \ee^{m (z-1)}$.
Using Thm.~\ref{thm:main_thm} for the hockey-stick divergence $f(z) = [a - \ee^\epsilon]_+$, we get
\begin{equation} \label{eq:HS_bounding2}
\begin{aligned}
H_f (A || A') & \leq \sum_{y \in \mathcal{Y}} 
f \left( \frac{ Q(y)  \varphi'(q_y)   }{ Q'(y) \varphi'(q'_y) }\right) \cdot Q'(y) \cdot \varphi'(q'_y) \\
& = \sum_{y \in \mathcal{Y}} 
\left[\frac{ Q(y)  \varphi'(q_y)   }{ Q'(y) \varphi'(q'_y) } - \ee^\epsilon\right]_+ \cdot Q'(y) \cdot \varphi'(q'_y) \\
& = \sum_{y \in \mathcal{Y}} 
\left[1 - \ee^{\epsilon - \log \frac{ Q(y) }{ Q'(y) } - \log \frac{ \varphi'(q_y)   }{  \varphi'(q'_y) }}\right]_+ \cdot Q(y) \cdot \varphi'(q_y) \\
& = \sum_{y \in \mathcal{Y}} 
\left[1 - \ee^{\epsilon - \log \frac{ Q(y) }{ Q'(y) } - m \cdot ( q_y- q'_y) }\right]_+ \cdot Q(y) \cdot m \cdot  \ee^{m \cdot (q_y-1) }. \\
\end{aligned}
\end{equation}
As the probabilities $q_y$ and $q'_y$ are obtained by applying the same post-processing to $Q$ and $Q'$,
for all ${\epsilon_1} \geq 0$,
$q_y' \leq \ee^{\epsilon_1} q_y + \delta(\epsilon_1)$.
Using also the fact that $[1-\ee^{\epsilon-s}]_+$ is a non-decreasing function of $s$ for all $\epsilon \in \mathbb{R}$, we get from the inequality~\eqref{eq:HS_bounding2} that for all $\epsilon_1 \geq 0$,
\begin{equation} \label{eq:HS_bounding4}
\begin{aligned}
H_f(A || A') & \leq  \sum_{y \in \mathcal{Y}} 
\left[1 - \ee^{\epsilon - \log \frac{ Q(y) }{ Q'(y) } - m \cdot ( q_y- q'_y) }\right]_+ \cdot Q(y) \cdot m \cdot  \ee^{m \cdot (q_y-1) } \\
& \leq  \sum_{y \in \mathcal{Y}} 
\left[1 - \ee^{\epsilon - \log \frac{ Q(y) }{ Q'(y) } - m \cdot  q'_y (1 - \ee^{\epsilon_1}) - m \cdot \delta(\epsilon_1)}\right]_+ \cdot Q(y) \cdot m \cdot  \ee^{m \cdot (q_y-1) }. \\
& \leq  \sum_{y \in \mathcal{Y}} 
\left[1 - \ee^{\epsilon - \log \frac{ Q(y) }{ Q'(y) } - m \cdot  (1 - \ee^{\epsilon_1}) - m \cdot \delta(\epsilon_1)}\right]_+ \cdot Q(y) \cdot m
\end{aligned}
\end{equation}
\end{remark}
which gives the claim of Cor.~\ref{cor:binomial_to_poisson}.

\section{Converting RDP Bounds to $(\epsilon,\delta)$-Bounds}

To convert from R\'enyi DP to approximate DP we use following formula.

\begin{lemma}[\citealt{canonne2020discrete}]  \label{lem:rdp_to_dp}
Suppose the mechanism $\mathcal{M}$ is $\big(\alpha,\epsilon' \big)$-RDP.
Then  $\mathcal{M}$ is also $(\epsilon,\delta(\epsilon))$-DP for arbitrary $\epsilon\geq 0$ with
\begin{equation} \label{eq:conversion_canonne}
\delta(\epsilon) =    \frac{\exp\big( (\alpha-1)(\epsilon' - \epsilon) \big)}{\alpha} \left( 1 - \frac{1}{\alpha}  \right)^{\alpha-1}.
\end{equation}
\end{lemma}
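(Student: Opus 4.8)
The plan is to reduce the claim to a single pointwise inequality between the hockey-stick integrand and the $\alpha$-th power of the likelihood ratio, and then integrate. By the hockey-stick characterization of approximate DP recalled after Definition~\ref{def:indistinguishability} (following \citealt{balle2018subsampling}), it suffices to show that for every neighboring pair with output densities $P = \mathcal{M}(X)$ and $Q = \mathcal{M}(X')$,
$$H_{\ee^\epsilon}(P \| Q) = \int \left( \frac{P(t)}{Q(t)} - \ee^\epsilon \right)_+ Q(t) \, \dd t \leq \frac{\exp\big((\alpha-1)(\epsilon'-\epsilon)\big)}{\alpha}\left(1 - \frac{1}{\alpha}\right)^{\alpha-1}.$$
The RDP hypothesis supplies exactly one piece of information about the likelihood ratio $Z := P/Q$, namely the moment bound $\mathbb{E}_Q[Z^\alpha] = \ee^{(\alpha-1) D_\alpha(P\|Q)} \leq \ee^{(\alpha-1)\epsilon'}$ read off from the definition of $D_\alpha$ in~\eqref{eq:rdp}. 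The whole argument therefore hinges on converting the truncated-linear integrand $(Z - \ee^\epsilon)_+$ into a multiple of $Z^\alpha$.

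The key step is to establish the pointwise bound, valid for all $t \geq 0$ and all $\alpha > 1$,
$$\left( t - \ee^\epsilon \right)_+ \leq \frac{(\alpha-1)^{\alpha-1}}{\alpha^\alpha}\, \ee^{-(\alpha-1)\epsilon}\, t^\alpha.$$
I would prove this by finding the smallest constant $c$ for which $(t-\ee^\epsilon)_+ \leq c\, t^\alpha$ holds on $t \geq 0$. The inequality is trivial for $t \leq \ee^\epsilon$, so it reduces to maximizing $g(t) = (t - \ee^\epsilon)/t^\alpha$ over $t > \ee^\epsilon$; setting $g'(t)=0$ gives the interior maximizer $t_* = \frac{\alpha}{\alpha-1}\ee^\epsilon$ (which indeed exceeds $\ee^\epsilon$ since $\frac{\alpha}{\alpha-1} > 1$), and substituting $t_*$ yields $c = \frac{(\alpha-1)^{\alpha-1}}{\alpha^\alpha}\ee^{-(\alpha-1)\epsilon}$. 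This elementary calculus optimization is the only real content of the proof, and it is where the precise form of the constant $\tfrac{1}{\alpha}(1-\tfrac{1}{\alpha})^{\alpha-1}$ originates.

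Finally I would substitute $t = P(t)/Q(t)$, integrate against $Q$, and apply the moment bound:
$$H_{\ee^\epsilon}(P\|Q) \leq \frac{(\alpha-1)^{\alpha-1}}{\alpha^\alpha}\,\ee^{-(\alpha-1)\epsilon}\, \mathbb{E}_Q[Z^\alpha] \leq \frac{(\alpha-1)^{\alpha-1}}{\alpha^\alpha}\,\ee^{(\alpha-1)(\epsilon'-\epsilon)}.$$
Rewriting $\frac{(\alpha-1)^{\alpha-1}}{\alpha^\alpha} = \frac{1}{\alpha}\big(1 - \tfrac{1}{\alpha}\big)^{\alpha-1}$ recovers the stated expression for $\delta(\epsilon)$, and since the bound is uniform in the pair $(P,Q)$, taking the supremum over neighboring datasets yields $(\epsilon,\delta(\epsilon))$-DP. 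I expect no genuine obstacle beyond the optimization in the second step; the remaining subtleties are pure bookkeeping, namely verifying that $t_*$ lies in the relevant range and that the per-pair bound depends on $(P,Q)$ only through the single moment $\mathbb{E}_Q[Z^\alpha]$, so the supremum over neighbors is controlled entirely by the RDP hypothesis.
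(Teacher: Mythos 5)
The paper offers no proof of this lemma: it is quoted directly from \citet{canonne2020discrete}, so there is nothing internal to compare against. Your proof is correct, and it is essentially the standard argument behind the cited result. The three ingredients all check out: (i) the hockey-stick characterization reduces the claim to bounding $\E_Q\big[(Z-\ee^\epsilon)_+\big]$ for the likelihood ratio $Z = P/Q$; (ii) the pointwise inequality $(t-\ee^\epsilon)_+ \le \frac{(\alpha-1)^{\alpha-1}}{\alpha^\alpha}\,\ee^{-(\alpha-1)\epsilon}\, t^\alpha$ holds for all $t \ge 0$, with the optimal constant attained at $t_* = \frac{\alpha}{\alpha-1}\ee^\epsilon$ (your calculus is right: the inequality is trivial on $t \le \ee^\epsilon$, and $g(t)=(t-\ee^\epsilon)t^{-\alpha}$ has its unique maximum at $t_*$); and (iii) integrating against $Q$ and using $\E_Q[Z^\alpha] \le \ee^{(\alpha-1)\epsilon'}$ gives $\delta(\epsilon) = \frac{(\alpha-1)^{\alpha-1}}{\alpha^\alpha}\,\ee^{(\alpha-1)(\epsilon'-\epsilon)}$, which equals \eqref{eq:conversion_canonne} via the identity $\frac{(\alpha-1)^{\alpha-1}}{\alpha^\alpha} = \frac{1}{\alpha}\left(1-\frac{1}{\alpha}\right)^{\alpha-1}$. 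Two minor points worth recording: the argument needs no restriction $\epsilon > \epsilon'$ (the bound is simply vacuous when the right-hand side exceeds one), and the same computation shows the formula printed here coincides with the algebraically equivalent form $\frac{1}{\alpha-1}\left(1-\frac{1}{\alpha}\right)^{\alpha}\ee^{(\alpha-1)(\epsilon'-\epsilon)}$ in which the original reference states it.
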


\end{document}